\pgfplotsset{compat=1.5}
\newtheorem{theorem}{Theorem}[section]
\newtheorem{lemma}[theorem]{Lemma}
\newtheorem{corollary}[theorem]{Corollary}
\newtheorem{proposition}[theorem]{Proposition}
\newtheorem{remark}{Remark}[section]
\title{Generalised Entanglement Entropies from Unit-Invariant Singular Value Decomposition}
\author[a,b,c]{Pawe{\l} Caputa,}
\author[c]{Abhigyan Saha,}
\author[c]{Piotr Su{\l}kowski}
\newcommand{\be}{\begin{equation}}
\newcommand{\ee}{\end{equation}}
\newcommand{\Tr}{\text{Tr}}
\newcommand{\N}{\mathcal{N}}
\newcommand{\op}{\mathcal{O}}
\newcommand{\ket}[1]{\left|#1\right\rangle}
\newcommand{\bra}[1]{\left\langle#1\right|}
\newcommand{\ba}{\begin{eqnarray}}
\newcommand{\ea}{\end{eqnarray}}
\newcommand{\diag}{\mathrm{diag}}
\newcommand{\Prob}{\mathbb{P}}
\newcommand{\E}{\mathbb{E}}
\newcommand{\Var}{\mathrm{Var}}
\preprint{\makebox[\textwidth][s]{
\hfill YITP-25-193}}
\affiliation[a]{The Oskar Klein Centre and Department of Physics, Stockholm University, AlbaNova, 106 91 Stockholm, Sweden}
\affiliation[b]{Yukawa Institute for Theoretical Physics, Kyoto University, Kitashirakawa Oiwakecho, Sakyo-ku, Kyoto 606-8502, Japan}
\affiliation[c]{Faculty of Physics, University of Warsaw, Pasteura 5, 02-093 Warsaw, Poland}
\abstract{We introduce generalisations of von Neumann entanglement entropy that are invariant with respect to certain scale transformations. These constructions are based on the Unit-Invariant Singular Value Decomposition (UISVD) in its left-, right-, and bi-invariant incarnations, which are variations of the standard Singular Value Decomposition (SVD) that remain invariant under the corresponding class of diagonal transformations. These measures are naturally defined for non-Hermitian or rectangular operators and remain useful when the input and output spaces possess different dimensions or metric weights. We apply the UISVD entropy and discuss its advantages in the physically interesting framework of Biorthogonal Quantum Mechanics, whose important aspect is indeed the behaviour under scale transformations. Further, we illustrate features of UISVD-based entropies in other well-known setups, from simple quantum mechanical bipartite states to random matrices relevant to quantum chaos and holography, and in the context of Chern-Simons theory. In all cases, the UISVD yields stable, physically meaningful entropic spectra that are invariant under rescalings and normalisations.}
\begin{document}
\maketitle
\newpage
\section{Introduction}\label{sec_introduction}
Quantum entanglement has become one of the most powerful and unifying concepts in modern theoretical physics. It underlies our understanding of correlations and information in quantum many-body systems \cite{Amico:2007ag,PasqualeCalabrese_2009}, quantum field theories \cite{Srednicki:1993im,Holzhey:1994we,Calabrese:2004eu}, and even quantum gravity \cite{Ryu:2006bv,Takayanagi:2017knl,VanRaamsdonk:2010pw} through holography \cite{Maldacena:1997re}. Entanglement entropy and its generalisations have emerged as indispensable probes of quantum structure, diagnosing topological order, characterising quantum phases, and encoding the emergence of spacetime geometry \cite{Takayanagi:2025ula}. Still, despite its success, quantifying entanglement in general situations, particularly for mixed states, or for open as well as non-Hermitian systems, remains conceptually and technically challenging.

Traditional measures of entanglement rely on the Schmidt or Singular Value Decomposition (SVD) of a pure state or operator in a bipartite Hilbert space $\mathcal{H}=\mathcal{H}_A\otimes \mathcal{H}_B$ (see \cite{Hosur:2015ylk,Dubail:2016xht,Nie:2018dfe} for more on operator entanglement). For pure states, the Schmidt coefficients fully determine the entanglement spectrum and yield Rényi or von Neumann entropies with clear operational meaning. However, when the system evolves non-unitarily, interacts with an environment, or the setup is described by transition rather than density matrices, the conventional SVD-based framework becomes ambiguous. In such cases, even defining an ``entropy" that reflects physical, quantum correlations rather than normalisation or scaling artefacts can be subtle.

Related to these challenges, several generalisations of entanglement have recently been proposed, including pseudo-entropy \cite{Nakata:2020luh,Mollabashi:2021xsd}, SVD entropy \cite{Parzygnat:2023avh}, and entropies in time \cite{Doi:2022iyj,Doi:2023zaf,Milekhin:2025ycm,Das:2025fcd} among them. These approaches use the singular values of transition matrices or non-Hermitian operators to construct Rényi-type quantities that extend entanglement to post-selected, time-dependent, or general open quantum systems. Such ideas and tools have already found applications across diverse areas, from dS/CFT correspondence \cite{Doi:2022iyj,Hikida:2022ltr,Hikida:2021ese}, renormalisation \cite{Grieninger:2023knz}, topological theories \cite{Nishioka:2021cxe,Caputa:2024qkk} to spectral form factors used to diagnose quantum chaos \cite{Caputa:2024gve}.

However, a fundamental issue that persists across all these constructions is that the singular values themselves depend on the choice of units and scaling conventions in the subspaces involved. To see the problem concretely, consider the simple quantum mechanical example of a two-level system where one rescales the basis states in one subsystem by an arbitrary factor (a trivial change of normalisation or physical units). Although this transformation leaves the physical content unchanged, it modifies the singular values. As a result, the entropy constructed from those singular values varies under a mere redefinition of scale; i.e., conventional SVD, though invariant under unitaries, is not invariant under such diagonal rescalings. Consequently, SVD-based entropic measures may conflate genuine quantum correlations with artefacts of normalisation, a subtle but crucial distinction, especially in field-theoretic or non-Hermitian settings where operator norms (choices of inner-products) or couplings carry non-trivial scaling behaviour (see e.g., \cite{Edvardsson:2022day}).

A mathematically rigorous resolution to this issue for SVD was recently provided by Uhlmann \cite{uhlmann-main2}. Following earlier works on the construction of singular values invariant under more specific rescalings, he introduced the Unit-Invariant Singular Value Decomposition (UISVD) as a variation of the standard SVD that remains invariant under diagonal (unit) transformations acting on a given operator from both sides. The UISVD produces unit-invariant singular values that eliminate spurious dependencies on scaling conventions. In effect, UISVD isolates the physically meaningful, scale-independent structure of a linear map.

This property makes the UISVD particularly appealing for the above-mentioned physical setups, where operators often connect Hilbert spaces with different metric or normalisation structures present. Indeed, in these contexts, one may start from entanglement and information measures that are invariant under simple changes of physical units and hope that classifying them will be easier and more reliable, similarly to von Neumann entropies for pure states.

In this work, we introduce the UISVD framework into quantum information and field-theoretic contexts, proposing it as a new foundation for unit-invariant measures of entanglement. We construct UISVD-based analogues of von Neumann entropy and demonstrate the variation from familiar SVD-based quantities while removing their dependence on arbitrary scale choices. These measures are naturally defined for non-Hermitian or rectangular operators and remain useful when the input and output spaces possess different dimensions or metric weights. As an initial demonstration, we apply UISVD-based entropies to several well-known setups, from simple quantum mechanical bipartite mixed states to random matrices relevant to quantum chaos and holography. In all cases, the UISVD yields stable, physically meaningful entropic spectra that are invariant under rescalings and normalisations.

Beyond their practical advantages, our results suggest a deeper conceptual point: UISVD restores a natural physical covariance to operator-based entropies, ensuring that what we quantify as ``entanglement" or ``correlation" is independent of arbitrary normalisation or choice of units. In this sense, we hope that UISVD may contribute to providing a unifying, invariant language for quantifying quantum correlations in systems where conventional assumptions of unitarity or normalisation fail.

Let us briefly summarise how we construct scale-invariant entropy measures. First, following \cite{uhlmann-main2}, for a matrix $A$ we introduce its left unit-invariant, right unit-invariant, and (both left and right) bi-unit-invariant singular values, denoted respectively $\sigma_k^\mathrm{L}(A), \sigma_k^\mathrm{R}(A)$ and $\sigma_k^\mathrm{B}(A)$. Scale (or unit) invariance means that they satisfy the relations
\begin{equation}
\sigma_k^\mathrm{L}(A) = \sigma_k^\mathrm{L}(DAU), \qquad \sigma_k^\mathrm{R}(A) = \sigma_k^\mathrm{R}(UAD), \qquad \sigma_k^\mathrm{B}(A) = \sigma_k^\mathrm{B}(DAD'), \label{rescale}
\end{equation}
where $D$ and $D'$ are complex non-singular diagonal matrices that represent scaling transformations acting on $A$ from left or right, while $U$ is a unitary matrix (so that left- and right-singular values are additionaly invariant under unitary transformation on the opposite side). We then consider quantum systems with bipartite Hilbert spaces $\mathcal{H} = \mathcal{H}_\mathbb{A} \otimes \mathcal{H}_\mathbb{B}$ characterised by a reduced density matrix or a reduced transition matrix $\rho_\mathbb{A}=\textrm{Tr}_\mathbb{B}(\rho)$, obtained by taking the partial trace of a density or transition matrix $\rho$. The systems of our interest involve some natural scaling operations, which amount to the rescaling of $\rho_\mathbb{A}$ by diagonal matrices $D$ and $D'$ from the left or right, or both sides. We then identify the corresponding unit-invariant singular values, which are scale-invariant analogously to \cref{rescale}, and introduce associated unit-invariant entropy measures. Depending on the system under consideration, such entropies generalise other familiar versions of entanglement entropy, such as ordinary von Neumann entropy, or pseudo entropy and SVD entropy (when $\rho$ is a transition matrix with respect to some pre- or post-selected state). In the rest of the paper we provide details of such constructions, discuss properties of these entropy measures, and illustrate them in several examples.

This paper is organised as follows. In \cref{sec-scale}, we review the construction of left, right, and bi unit-invariant singular values, and provide their geometric interpretation. To build intuition for these invariants, we compute them explicitly for $2\times2$ matrices. We also analyse their statistical properties for ensembles of random matrices and show that they obey quarter-circle laws. In \cref{sec:entanglement-measures}, we use these unit-invariant singular values to define new unit-invariant entanglement entropies for reduced density matrices derived from pure states. \cref{UISVDprepost} extends this framework to reduced transition matrices, introducing a variation of the pseudo-entropy and SVD-based entropies introduced in the literature. In \cref{sec:Applications}, we present several applications of these measures, ranging from random pure states and link-complement states in Chern-Simons theory to Biorthogonal Quantum Mechanics (BQM), where we evaluate, test, and compare them for both reduced density matrices and transition matrices. Finally, we conclude in \cref{Conclusions} and defer more technical material to two appendices.
\section{Unit-Invariant Singular Values} \label{sec-scale}
In this section we review the construction of various scaling invariants of matrices discussed in \cite{uhlmann-main2}. We call them left unit-invariant, right unit-invariant, and bi-unit-invariant singular values, and denote respectively LUI-, RUI-, and BUI-singular values. These invariants will play the key role in the construction of entropy measures introduced in the next sections. In the following subsections we collect definitions of unit-invariant singular values, provide their geometric interpretation, present explicit results for $2\times2$ matrices, and derive their distributional behaviour for Gaussian ensembles.
\subsection{Definitions of singular values}
To start with, we quickly recall that the Singular Value Decomposition (SVD) of a matrix \(A \in \mathbb{C}^{m\times n}\) is
\begin{equation}
A = U\Sigma V^{\dagger},
\end{equation}
where \(U\in\mathbb{C}^{m\times m}\) and \(V\in\mathbb{C}^{n\times n}\) are unitary, and \(\Sigma\in\mathbb{R}^{m\times n}\) is (rectangular) diagonal with nonnegative entries \(\sigma_1\ge \sigma_2\ge \cdots \ge 0\). Here $\sigma_k\equiv \sigma_k(A)$ are singular values of $A$ and can be directly obtained using \(\sigma_k=\sqrt{\lambda_k}\), where \(\lambda_k\) are the eigenvalues of \(A^{\dagger}A\). They satisfy the property $\sigma_k(A)=\sigma_k(U^\prime AV^\prime)$, i.e., they are invariant under multiplication on the left or right by arbitrary unitaries $U'$ and $V'$.

We now introduce variations of this framework following \cite{uhlmann-main2}.
\paragraph{Left Unit-Invariant (LUI) singular values.}
Consider a matrix \( A \in \mathbb{C}^{m \times n} \). For each row $i$ of \( A \), its Euclidean norm is given by
\begin{equation}
\|A(i,:)\|_2 = \sqrt{\sum_{j=1}^{n} |A_{ij}|^2}.
\end{equation}
The left diagonal scale function matrix \( D^\mathrm{L}(A) \in \mathbb{R}^{m\times m}_+\) is defined as
\begin{equation}\label{eq:LDSF}
D^\mathrm{L}_{ii} =
\begin{cases}
\frac{1}{\|A(i, :)\|_2}, & \text{if} \ \|A(i, :)\|_2 > 0 \\
1, & \text{otherwise}\,.
\end{cases}
\end{equation}
First, we define the \textit{balanced matrix} \( A^\mathrm{L} \) as 
\begin{equation}
A^\mathrm{L} \coloneq D^\mathrm{L} A.\label{balancedA}
\end{equation}
Then, the singular values of \( A^\mathrm{L} \), determined by its singular value decomposition and denoted \( \sigma(A^\mathrm{L}) \), are called the \textit{left unit-invariant singular values} \( \sigma^\mathrm{L}(A) \) of the original matrix $A$,
\begin{equation}
\sigma^\mathrm{L}(A) \coloneq \sigma(A^\mathrm{L}).\label{LSVA}
\end{equation}
So we essentially decompose $A$ as
\begin{equation}
A = (D^\mathrm{L})^{-1} A^\mathrm{L} = (D^\mathrm{L})^{-1} U_0 \Sigma_0 V^\dagger_0,
\end{equation}
and call it the \textit{left unit-invariant singular value decomposition (LUISVD)}. The invariants $\sigma_k^\mathrm{L}(A)$ lie on the diagonal of the diagonal matrix $\Sigma_0$ and are non-negative real numbers. These values possess the invariance property
\begin{equation}
\sigma^\mathrm{L}(A) = \sigma^\mathrm{L}(DAU),
\end{equation}
for any \( D \in \mathcal{D} \), where $\mathcal{D}$ is the space of complex non-singular diagonal matrices, and any \( U \in \mathcal{U} \), where $\mathcal{U}$ is the space of unitary matrices. For an explicit proof of this claim, see Appendix~\ref{sec:2x2matrix-derivation}.
\paragraph{Right Unit-Invariant (RUI) singular values.} Their definition is analogous to the LUI singular values, and simply involves column norms instead of row norms in the right diagonal scale function matrix \( D^\mathrm{R}(A) \in \mathbb{R}^{n\times n}_+\) which is defined as
\begin{equation}\label{eq:RDSF}
D^\mathrm{R}_{jj} =
\begin{cases}
\frac{1}{\|A(:, j)\|_2}, & \text{if} \ \|A(:, j)\|_2 > 0 \\
1, & \text{otherwise}\,.
\end{cases}
\end{equation}
The \emph{balanced matrix} \( A^\mathrm{R} \) is then defined by
\begin{equation}
A^\mathrm{R} \coloneq AD^\mathrm{R}.\label{balancedAR}
\end{equation}
The singular values of \( A^\mathrm{R} \), denoted \( \sigma(A^\mathrm{R}) \), are called the \textit{right unit-invariant singular values} \( \sigma^\mathrm{R}(A) \) of the original matrix $A$,
\begin{equation}
\sigma^\mathrm{R}(A) \coloneq \sigma(A^\mathrm{R}).\label{RSVA}
\end{equation}
In this case the \textit{right unit-invariant singular value decomposition (RUISVD)} of $A$ reads
\begin{equation}
A = A^\mathrm{R}(D^\mathrm{R})^{-1} = U_1 \Sigma_1 V^\dagger_1 (D^\mathrm{R})^{-1}.
\end{equation} 
The RUI singular values are invariant under multiplication by \( D \in \mathcal{D} \) and \( U \in \mathcal{U} \) from the right and left, respectively
\begin{equation}
\sigma^\mathrm{R}(A) = \sigma^\mathrm{R}(UAD).
\end{equation}
\paragraph{Bi Unit-Invariant (BUI) singular values.} It is a bit harder to implement invariance under simultaneous multiplication by diagonal matrices on the left and right. Nevertheless, such a construction was provided in \cite{uhlmann-main2}. It amounts to determining a pair of \emph{positive diagonal} matrices from $A$, denoted ${D}^{\mathrm{B_L}}(A) \in \mathbb{R}^{m\times m}_+$ and ${D}^{\mathrm{B_R}}(A) \in \mathbb{R}^{n\times n}_+$, and then forming the \emph{balanced matrix}
\begin{equation}\label{eq:ui-bal-def}
A^\mathrm{B} 
\;=\; 
{D}^{\mathrm{B_L}}\,A\,{D}^{\mathrm{B_R}}.
\end{equation}
A simple example of this construction for matrices $A$ of size $2\times2$ with complex non-zero elements is given in Appendix~\ref{sec:2x2matrix-derivation}, along with a code to compute these values for any general matrix. In general, the diagonals ${D}^{\mathrm{B_L}}$ and ${D}^{\mathrm{B_R}}$ are constructed so that both the \emph{row} and \emph{column} \emph{geometric means} of $|A^\mathrm{B}|$ are equal to $1$, or equivalently the products are 1, with the geometric mean taken over the \emph{non-zero} entries. Concretely, this means that for every row $i$ and column $j$ which are not identically zero,
\begin{equation}\label{eq:UI-balance}
\prod_{j:\,A^\mathrm{B}_{ij}\neq 0} |A^\mathrm{B}_{ij}|=1,\qquad(1\le i\le m),
\qquad
\prod_{i:\,A^\mathrm{B}_{ij}\neq 0}\!\!\! |A^\mathrm{B}_{ij}|=1,\qquad(1\le j\le n).
\end{equation}
When zero entries are present, we adopt precisely the scaling in \cite[Program II]{ROTHBLUM1992159} as mentioned in \cite{uhlmann-main2} applied to $|A|$, which yields positive diagonal $D^{\mathrm{B_L}}, D^{\mathrm{B_R}}$ and a balanced matrix satisfying \cref{eq:UI-balance}. If $A$ contains an all-zero row or column, these are removed prior to scaling and the corresponding diagonal entries are set to $1$ upon re-embedding; the subsequent BUI results then hold verbatim.\footnote{See \cite[Thm.~4.2, Lem.~4.3, Thm.~4.4, and Appendix C]{uhlmann-main2}}

Now, the singular values of \( A^\mathrm{B} \), denoted \( \sigma(A^\mathrm{B}) \), are called the \textit{bi unit-invariant singular values} \( \sigma^\mathrm{B}(A) \) of the original matrix $A$,
\begin{equation}
\sigma^\mathrm{B}(A) \coloneq \sigma(A^\mathrm{B}). \label{sigmaU}
\end{equation}
Equivalently, these values arise on the diagonal of the matrix $\Sigma_2$ upon the \textit{bi unit-invariant singular value decomposition (BUISVD)}
\begin{equation}
A = (D^\mathrm{B_L})^{-1}A^\mathrm{B}(D^\mathrm{B_R})^{-1} = (D^\mathrm{B_L})^{-1} U_2 \Sigma_2 V^\dagger_2 (D^\mathrm{B_R})^{-1}.
\end{equation}
These BUI singular values possess the invariance property
\begin{equation}
\sigma^\mathrm{B}(A) = \sigma^\mathrm{B}(DAD'),
\end{equation}
for any \( D,D^\prime \in \mathcal{D} \), where again $\mathcal{D}$ is the space of complex non-singular diagonal matrices.

Although we do not use it in this paper, for the sake of completeness, we note that when the matrix $A\in \mathbb C^{n\times n}$ is square, there is one more invariant that can be extracted from this construction, namely the \textit{scale-invariant eigenvalues}\footnote{We reserve the notation $\lambda^\mathrm B$ for a different construction used later.} of a matrix $A$
\be
\lambda^\mathrm S (A) \coloneq \lambda(A^\mathrm B).
\ee
This then satisfies the more restricted invariance property 
\be\label{uievd}
\lambda^\mathrm S (A) = \lambda^\mathrm S (D^\mathrm S A {D^\mathrm S}^\prime) \qquad \forall D^\mathrm S,{D^\mathrm S}^\prime \in \mathcal{D} \quad \text{s.t. } (D^\mathrm S {D^\mathrm S}^\prime)\in \mathbb R^{n\times n}_{+}.
\ee
Several kinds of transformations fall under this category, for example invariance under $DA\bar D$ or $DAD^{-1}$ where $D \in \mathcal D$ is any complex non-singular diagonal matrix. Of course the case of \textit{positive scalings}, i.e.\ $D^\mathrm S,{D^\mathrm S}^\prime$ both being positive real diagonal, automatically satisfies \cref{uievd}.
\paragraph{Normalisation.}
Recall that the \emph{sum of squares of the singular values} of any matrix equals its \emph{Frobenius (Hilbert-Schmidt) norm} squared. Specifically, for $A^{\mathrm L}$,
\begin{equation}\label{eq:frob-norm-lui}
\|A^{\mathrm L}\|_\mathbb{F}^2
=
\sum_{i=1}^{m}\sum_{j=1}^{n} |A^{\mathrm L}_{ij}|^2
=
\sum_{k=1}^{\min(m,n)} \sigma_k(A^{\mathrm L})^2
=
\sum_{k=1}^{\min(m,n)} \bigl(\sigma_k^{\mathrm L}(A)\bigr)^2.
\end{equation}
Let $d_{\mathbb A}, d_{\mathbb B}$ be respectively the numbers of non-zero rows and columns of $A$ (hence also for $A^\mathrm L$ and $A^\mathrm R$). If $A$ has no identically-zero rows/columns, then $d_{\mathbb A}=m$ and $d_{\mathbb B}=n$. By construction, every non-zero row of $A^{\mathrm L}$ has Euclidean norm $1$, hence
\begin{equation}
\|A^{\mathrm L}\|_\mathbb{F}^2
=
\sum_{i=1}^{m} \|A^{\mathrm L}(i,:)\|_2^2
=
\sum_{i:\ \|A(i,:)\|_2>0} 1
=
d_{\mathbb A}.
\end{equation}
Analogously, $\|A^{\mathrm R}\|_\mathbb{F}^2=d_{\mathbb B}$. Therefore
\begin{equation}
\sum_{k=1}^{\min(m,n)} \bigl(\sigma_k^{\mathrm L}(A)\bigr)^2 = d_{\mathbb A},
\qquad
\sum_{k=1}^{\min(m,n)} \bigl(\sigma_k^{\mathrm R}(A)\bigr)^2 = d_{\mathbb B}.
\end{equation}
This allows us to introduce the normalised left- and right-singular values as
\be\label{eq-normalised-luirui-values}
\hat{\sigma}_k^\mathrm{L}\coloneq \frac{\sigma_k^\mathrm{L}}{\sqrt{d_{\mathbb{A}}}},\qquad \hat{\sigma}_k^\mathrm{R}\coloneq \frac{\sigma_k^\mathrm{R}}{\sqrt{d_{\mathbb{B}}}}.
\ee
Unfortunately, we were not able to derive a simple expression (in terms of $d_\mathbb{A}$ and $d_\mathbb{B}$) for the Frobenius norm of $A^\mathrm{B}$, so we just write formally the normalised singular values of $A^\mathrm{B}$ as
\be
\hat{\sigma}_k^\mathrm{B}\coloneq \frac{\sigma_k^\mathrm{B}}{||A^\mathrm{B}||_\mathbb{F}}=\frac{\sigma_k^\mathrm{B}}{\sqrt{\sum_l(\sigma_l^\mathrm{B})^2}}.\label{NormsingvalAU}
\ee
These normalised values will appear directly in our definitions of entropies based on the UISVD decompositions.
\subsection{Geometric interpretation}
\paragraph{Geometric interpretation of UISVD.} It is worth invoking a geometric interpretation of the quantities introduced above. SVD has a well-known representation in Euclidean space, where it takes the unit sphere in the input space and turns it into an ellipsoid in the output space. The right singular vectors are the input directions that land on the ellipsoid's principal axes, and the left singular vectors point along those axes in the output space, and the singular values are the axis lengths.

LUI-SVD and RUI-SVD keep this same sphere to ellipsoid story, but they first ``re-unit" one side by a diagonal stretch of the coordinate axes chosen from the Euclidean row or column norms. In LUI-SVD one stretches or shrinks the output coordinate axes so the non-zero rows have equal size under that norm, which puts the output channels on a common scale before looking for the ellipsoid's principal directions. In RUI-SVD one does the analogous axis rescaling on the input side so the non-zero columns have equal size, which makes the input coordinates comparable before looking for principal directions. After this one-sided axis standardisation, an ordinary SVD gives the ellipsoid geometry in the standardised coordinates, and the diagonal (inverse) factor just converts that geometric description back to the original units. 

BUI-SVD is the fully symmetric version. It diagonally stretches or shrinks both input and output axes until the matrix is balanced in a multiplicative sense, so the products of the non-zero entry magnitudes are equal across non-zero rows and columns, which one can think of as balancing typical entry magnitudes across rows and columns, or as centering log-magnitudes when no zeros are present. Once that balancing has fixed the arbitrary choice of units on both sides, SVD again reads off the same sphere to ellipsoid geometry, now in the balanced coordinate system, and the outer diagonal factors simply map the picture back to the original coordinates.
\paragraph{Geometry of the local group actions.} We use the term ``local'' since, when $A$ is the coefficient matrix of a bipartite state, left/right multiplication is interpreted as restricted local transformations on the two subsystems in \cref{sec:entanglement-measures,UISVDprepost} (used as an equivalence/rescaling). Besides the familiar compact Lie group of local unitary basis changes \(\mathcal U(d)\) (real dimension \(d^{2}\)), the new ingredient here is the (generally non-unitary) diagonal
rescaling group \(\mathcal D(d)=\{\mathrm{diag}(z_1,\dots,z_d):z_i\in\mathbb C^*\}\cong(\mathbb C^*)^{d}\) (real dimension \(2d\)). Writing \(z_i=r_i e^{i\theta_i}\) gives the product decomposition (as a real Lie group)
\be
\mathcal D(d)\cong(\mathbb R_{+})^{d}\times (U(1))^{d},\qquad
r_i=e^{x_i}\ \Rightarrow\ (\mathbb R_{+})^{d}\cong\mathbb R^{d},
\ee
so \(\mathcal D(d)\) is ``flat in logarithmic coordinates"; the non-compact directions are the log-magnitudes (units/normalisations), while the phases form a compact torus
(\((U(1))^{d}\subset\mathcal U(d)\)). So for any object matrix of interest $A$ with dimensions $m\times n$, the four local action groups relevant for the constructions above are
\be
\mathcal U(m)\times \mathcal U(n),\quad
\mathcal D(m)\times \mathcal U(n),\quad
\mathcal U(m)\times \mathcal D(n),\quad
\mathcal D(m)\times \mathcal D(n),
\ee
which are direct products of these building blocks and therefore mix compact and non-compact geometry. In particular, \(\mathcal D(m)\times \mathcal D(n)\) deformation retracts onto the $(m+n)$-torus \((U(1))^{m+n}\) but contains non-compact ``runaway" directions in the \(\mathbb R^{m+n}\) factors. This action always has at least the $\mathbb C^*$ stabiliser $(cI_m,c^{-1}I_n)$; additional stabilisers arise if $A$ has zero rows/columns or a decomposable support pattern. Modulo this trivial scaling, the remaining non-compact directions act generically freely for generic full-support matrices.

Correspondingly, the BUISVD scaling factors \(D^\mathrm{B_L}\) and \(D^\mathrm{B_R}\) are determined only up to the one-parameter rescaling \(D^\mathrm{B_L}\mapsto cD^\mathrm{B_L}\), \(D^\mathrm{B_R}\mapsto c^{-1}D^\mathrm{B_R}\), which leaves \(A^{\mathrm B}=D^\mathrm{B_L}AD^\mathrm{B_R}\) unchanged; one can fix a canonical representative by a symmetric scale choice (for square matrices with non-zero entries, see Proposition~\ref{prop:gauge-fully}). Once diagonal rescalings are admitted, the orbit geometry acquires non-compact ``units" directions, and a normalisation/balancing choice is precisely a way to pick a canonical slice through those directions before extracting singular value data.
\subsection{Unit-Invariants for two-dimensional spaces} 
As the simplest illustration, in \cref{tab:invariants} we provide the explicit form of the above invariants (for the sake of completeness, also including eigenvalues and singular values) for an arbitrary complex matrix of size 2 with non-zero entries
\begin{equation}\label{eq:2x2matrix}
A = \begin{pmatrix}
a_1 & a_2 \\
a_3 & a_4
\end{pmatrix}\,.
\end{equation}
These invariants and their properties are derived in Appendix~\ref{sec:2x2matrix-derivation}.

\begin{table}[h!]
\centering
\setlength{\tabcolsep}{10pt}
\renewcommand{\arraystretch}{1.45}
\setlength{\aboverulesep}{0.6ex}
\setlength{\belowrulesep}{0.6ex}
\newcolumntype{L}{>{\raggedright\arraybackslash}p{0.30\textwidth}}
\newcolumntype{R}{>{\raggedright\arraybackslash}X}
\begin{tabularx}{\textwidth}{@{} L R @{}}
\toprule
\textbf{Invariant} & \textbf{Formula} \\
\midrule

Eigenvalues \(\lambda_{\pm}\) &
$\displaystyle
 {\frac12\Big(\,a_1 + a_4 \,\pm\, \bigl((a_1 - a_4)^2 + 4\,a_2 a_3\bigr)^{\frac12}\Big)}
$ \\[2pt]

Singular values \(\sigma_{\pm}
\) &
$\displaystyle
\biggl(
\frac{1}{2}\Bigl(
\sum_{i=1}^{4}\!|a_i|^2
\;\pm\;
\Bigl(\Bigl(\sum_{i=1}^{4}\!|a_i|^2\Bigr)^{\!2} - 4\,\tilde{a}\Bigr)^{\frac12}
\Bigr)
\biggr)^{\!\frac12}
$ \\[2pt]

BUI-singular values \(\sigma^\mathrm{B}_{\pm}\) &
$\displaystyle
 \left(
\hat{a} + \hat{a}^{-1}
\ \pm\
\left[
\left(\hat{a} + \hat{a}^{-1}\right)^{\!2}
-
\frac{\tilde{a}}{\left\lvert a_1 a_2 a_3 a_4 \right\rvert}
\right]^{\frac12}
\right)^{\frac12}
$ \\[2pt]

LUI-singular values \(\sigma^\mathrm{L}_{\pm}\) &
$\displaystyle
 \biggl(
\,1 \pm \Bigl(\,1
- \frac{\tilde{a}}
{(|a_1|^{2}+|a_2|^{2})\, (|a_3|^{2}+|a_4|^{2})}\,\Bigr)^{\frac12}
\biggr)^{\!\frac12}
$ \\[2pt]

RUI-singular values \(\sigma^\mathrm{R}_{\pm}
\) &
$\displaystyle
 \biggl(
\,1 \pm \Bigl(\,1
- \frac{\tilde{a}}
{(|a_1|^{2}+|a_3|^{2})\, (|a_2|^{2}+|a_4|^{2})}\,\Bigr)^{\frac12}
\biggr)^{\!\frac12}
$ \\
\bottomrule
\end{tabularx}
\caption{Closed form expressions for eigenvalues, singular values, and unit-invariant singular values of matrices of size 2 as given in \cref{eq:2x2matrix}. Here \(\hat{a} \coloneq \big|({a_1 a_4})/({a_2 a_3})\big|^{\frac12}\) and \(\tilde{a} \coloneq |a_1 a_4 - a_2 a_3|^{2}\).}
\label{tab:invariants}
\end{table}

\subsection{Unit-Invariant singular values for random matrices}\label{sec:RMT}
To gain more insight into scale invariance, in this section we investigate distributions of invariants for random matrices drawn from several standard ensembles (Ginibre and Wigner). Our main result is the statement that distributions of these invariants take the form of a quarter-circle law. To start with, for a fixed dimension \(n\), and \(m\) independent random matrices, we compute eigenvalues, singular values, as well as left-, right-, and bi-unit-invariant singular values. We denote them by $\sigma^{(j)}_i$, where $i=1,\ldots,n$ labels invariants for a given matrix and $j=1,\ldots,m$ labels matrices in a given random set. The empirical ``spectral" measure is
\begin{equation}
\rho(x)=\frac{1}{mn}\sum_{j=1}^{m}\sum_{i=1}^{n}\delta\!\bigl(x-\sigma^{(j)}_{i}\bigr),
\end{equation}
estimated by pooling all \(m\times n\) samples into a normalised histogram.

\begin{table}[h!]
  \centering
  \footnotesize
  \begin{tabularx}{\linewidth}{@{} l c @{\hspace{15em}} X @{}}
    \toprule
    Ensemble & $\beta$ & Construction \\
    \midrule
    Real Ginibre & 1 &
    $A=\dfrac{1}{\sqrt n}G,\ \ G_{ij}\overset{\text{i.i.d.}}{\sim}\mathcal N(0,1)$ \\
    Complex Ginibre & 2 &
    $A=\dfrac{1}{\sqrt n}G,\ \ G_{ij}\overset{\text{i.i.d.}}{\sim}\mathcal{CN}(0,1)$ \\
    \bottomrule
  \end{tabularx}
  \caption{Ginibre $\beta$-ensembles.}
  \label{tab:ginibre}
\end{table}

In \cref{tab:ginibre} we present the Ginibre \(\beta\)-ensembles which are constructed using Gaussian entries. In \cref{thm:lui-rui-support,thm:uisvd-support} we prove the distribution limits more generally for independent and identically distributed (i.i.d.)\footnote{Random variables that are independent of one another and all have the same probability distribution.} sub-Gaussian entries, hence the limiting behaviours for the real and complex Ginibre cases (\(\beta=1,2\)) are direct corollaries. 

Let $X_n=(X_{ij})_{1\le i,j\le n}$ have i.i.d. entries with mean $0$, variance $1$, sub-Gaussian tails, $\Prob(X_{ij}=0)=0$, and assume a small-ball bound near $0$.\footnote{There exist $\alpha>0$ and $C<\infty$ such that \( \Prob(|X_{ij}|\le t)\le C\,t^{\alpha} \ \text{for all }t\in(0,1). \)} Consider 
\begin{equation}\label{eq:def-An}
A_n\coloneq \frac{1}{\sqrt n}\,X_n.
\end{equation}
We compute the empirical singular-value measures of $A^\mathrm{L}_n,A^\mathrm{R}_n$, and $A^\mathrm{B}_n$ which by definition correspond to the empirical measures associated to $\sigma^{\mathrm L}(A_n), \sigma^{\mathrm R}(A_n)$, and $\sigma^{\mathrm B}(A_n)$ respectively.\footnote{To construct $A_n^\mathrm B$ we choose the symmetric scale as per Proposition~\ref{prop:gauge-fully} to fix a unique choice.} 
\begin{theorem}[Quarter-circle law for LUI- and RUI-SVD]\label{thm:lui-rui-support}
The empirical singular-value measures of $A^\mathrm{L}_n$ and $A^\mathrm{R}_n$ converge almost surely to the quarter-circle law on $[0,2]$ with density $f(s)=\frac1\pi\sqrt{4-s^2}\,\mathbf{1}_{[0,2]}(s)$.
\end{theorem}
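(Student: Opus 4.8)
The plan is to reduce the LUI/RUI case to the classical Marchenko--Pastur / quarter-circle law by a perturbation argument: $A_n^{\mathrm L}=D^{\mathrm L}(A_n)\,A_n$ where $D^{\mathrm L}$ is the diagonal matrix of inverse row norms, and the point is that for i.i.d.\ sub-Gaussian entries these row norms concentrate sharply around a common value, so $D^{\mathrm L}$ is, up to a vanishing perturbation, a fixed scalar multiple of the identity. First I would compute that scalar. The $i$-th squared row norm of $A_n$ is $\frac1n\sum_{j=1}^n|X_{ij}|^2$, which is an average of $n$ i.i.d.\ mean-one random variables (recall $\E|X_{ij}|^2=1$), hence by the law of large numbers converges a.s.\ to $1$; with sub-Gaussian (hence sub-exponential squared) tails one gets the quantitative statement $\max_i\big|\,\|A_n(i,:)\|_2^2-1\,\big|\to 0$ a.s., e.g.\ via Bernstein's inequality together with a union bound over the $n$ rows and Borel--Cantelli. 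Consequently $D^{\mathrm L}(A_n)=I_n+E_n$ with $\|E_n\|_{\mathrm{op}}\to 0$ almost surely.

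Next I would transfer this to the level of empirical spectral distributions. Write $A_n^{\mathrm L}=A_n+E_nA_n$. Since $\|E_nA_n\|_{\mathrm{op}}\le\|E_n\|_{\mathrm{op}}\|A_n\|_{\mathrm{op}}$ and $\|A_n\|_{\mathrm{op}}\to 2$ a.s.\ (the classical Bai--Yin theorem for i.i.d.\ entries with finite fourth moment), the perturbation $E_nA_n$ has operator norm tending to $0$. Now invoke a stability result for singular-value distributions: if $\|B_n-C_n\|_{\mathrm{op}}\to 0$ then the empirical singular-value measures of $B_n$ and $C_n$ have the same weak limit. This follows from Weyl-type interlacing, $|\sigma_k(B_n)-\sigma_k(C_n)|\le\|B_n-C_n\|_{\mathrm{op}}$ for all $k$, which forces the Lévy (or Kolmogorov) distance between the two empirical measures to go to zero. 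Since the empirical singular-value measure of $A_n$ converges a.s.\ to the quarter-circle law on $[0,2]$ (equivalently, the empirical measure of $A_nA_n^\dagger$ is Marchenko--Pastur with ratio $1$, supported on $[0,4]$), the same holds for $A_n^{\mathrm L}$. The RUI case is identical with columns in place of rows.

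The main obstacle I anticipate is the interplay between the small-ball hypothesis $\Prob(|X_{ij}|\le t)\le C_0 t^\alpha$ and the inversion of row norms: a priori a single row could have an anomalously small norm, making $D^{\mathrm L}_{ii}$ large and ruining the operator-norm bound on $E_n$. This is precisely why the uniform (in $i$) concentration of row norms is needed rather than mere pointwise convergence, and why the sub-Gaussian tail is used for the upper deviation while the lower deviation $\|A_n(i,:)\|_2^2\ge 1-\varepsilon$ must be controlled uniformly as well; the small-ball assumption guarantees $\Prob(X_{ij}=0)=0$ so that $D^{\mathrm L}$ is genuinely invertible, but the quantitative lower bound on each row norm is really coming from the second-moment concentration, not from the small-ball bound itself. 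One should state the uniform two-sided bound $1-\varepsilon_n\le\min_i\|A_n(i,:)\|_2^2\le\max_i\|A_n(i,:)\|_2^2\le 1+\varepsilon_n$ with $\varepsilon_n\to 0$ a.s.\ as the key lemma, prove it by Bernstein plus union bound plus Borel--Cantelli, and then the rest is soft. A secondary, purely bookkeeping point is that $D^{\mathrm L}$ as defined sets the entry to $1$ when a row vanishes, but under $\Prob(X_{ij}=0)=0$ no row vanishes almost surely, so this edge case never arises in the limit.
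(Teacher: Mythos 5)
Your proposal is correct, and it follows the same overall strategy as the paper's proof: the key lemma in both is the uniform concentration of row (resp.\ column) norms, $\max_i\big|\,\|A_n(i,:)\|_2-1\big|=O(\sqrt{(\log n)/n})$ via Bernstein for sub-exponential variables, a union bound and Borel--Cantelli, which gives $D^{\mathrm L}_n=I+\Delta_n$ with $\|\Delta_n\|_{\mathrm{op}}\to 0$ almost surely (and you are right that the small-ball hypothesis plays no quantitative role here beyond ensuring no row vanishes; it is needed for the BUI theorem). Where you diverge is the transfer step. The paper works at the Gram-matrix level: it compares $A^{\mathrm L}_n(A^{\mathrm L}_n)^\dagger=D^{\mathrm L}_nA_nA_n^\dagger D^{\mathrm L}_n$ with $A_nA_n^\dagger$, bounds the difference in operator norm (using $\|A_n\|_{\mathrm{op}}\to 2$), invokes a resolvent/Stieltjes-transform continuity lemma to conclude the two ESDs share the Marchenko--Pastur limit, and then pushes forward by $x\mapsto\sqrt{x}$. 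You instead compare $A^{\mathrm L}_n=A_n+\Delta_nA_n$ directly at the matrix level and use Weyl's inequality $|\sigma_k(B)-\sigma_k(C)|\le\|B-C\|_{\mathrm{op}}$ to force the L\'evy distance between the two empirical singular-value measures to vanish, then quote the quarter-circle law for $A_n$. Your route is somewhat more elementary — it bypasses the covariance-perturbation lemma and the Stieltjes-transform argument entirely — while the paper's resolvent machinery is reused verbatim in the BUI theorems, where the perturbation is two-sided and non-Hermitian and the Gram-matrix formulation is more convenient; so the paper's choice buys uniformity of method across Theorems~\ref{thm:lui-rui-support}--\ref{thm:ui-wigner}, whereas yours buys brevity for this particular statement. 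Both are sound.
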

\begin{proof}
See Appendix~\ref{proof-of-thm1}.
\end{proof}
\begin{theorem}[Stretched quarter-circle law for BUI-SVD]\label{thm:uisvd-support}
Set $c_\star\coloneq \mathbb{E}\log|X_{11}|\in\mathbb{R}$. Then, on the unnormalised scale, $\|A^{\mathrm{B}}_n\|_{\op}=2e^{-c_\star}\sqrt n\,\big(1+o(1)\big)$ almost surely. The empirical singular-value measures of $T_n\coloneq n^{-1/2}A^{\mathrm{B}}_n$
converge to the quarter-circle law on $[0,\,2e^{-c_\star}]$ with density \(f_\star(s) = \frac{1}{\pi\,e^{-2c_\star}} \sqrt{(2e^{-c_\star})^2-s^2}\,\mathbf{1}_{[0,\,2e^{-c_\star}]}(s)\).
\end{theorem}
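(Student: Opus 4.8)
The plan is to solve the balancing equations \cref{eq:UI-balance} in closed form — in logarithmic coordinates they become \emph{linear}, since fixing a geometric mean to $1$ is an affine constraint on log-magnitudes — and then to show that the resulting positive diagonal scale factors all concentrate around the single constant $e^{-c_\star}$, so that $A^{\mathrm B}_n$ differs from $e^{-c_\star}X_n$ only by a perturbation of vanishing operator norm. Concretely, since $\Prob(X_{ij}=0)=0$ the matrix $X_n$ almost surely has no zero entries, so the Rothblum scaling is the unique solution of an elementary linear system: writing $D^{\mathrm{B_L}}_{ii}=e^{r_i}$, $D^{\mathrm{B_R}}_{jj}=e^{c_j}$, $Y_{ij}\coloneq\log|X_{n,ij}|$ and $L_{ij}\coloneq\log|A_{n,ij}|=Y_{ij}-\tfrac12\log n$, condition \cref{eq:UI-balance} reads $n r_i+\sum_k L_{ik}+\sum_k c_k=0$ and $n c_j+\sum_k L_{kj}+\sum_k r_k=0$. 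Solving this (unique up to the one-parameter gauge $r\mapsto r+t$, $c\mapsto c-t$, which does not affect $A^{\mathrm B}_n$, the symmetric choice of \cref{prop:gauge-fully} merely pinning down $D^{\mathrm{B_L}},D^{\mathrm{B_R}}$ individually) gives $r_i+c_j=-\bar L_{i\cdot}-\bar L_{\cdot j}+\bar L$ with $\bar L_{i\cdot},\bar L_{\cdot j},\bar L$ the row-, column- and grand-averages of $L$; substituting $L=Y-\tfrac12\log n$ yields
\begin{equation}
A^{\mathrm B}_{n,ij}=e^{-\bar Y_{i\cdot}-\bar Y_{\cdot j}+\bar Y}\,X_{n,ij},
\qquad\text{equivalently}\qquad
T_n=n^{-1/2}A^{\mathrm B}_n=e^{\bar Y}\,\Lambda_n\,B_n\,M_n,
\end{equation}
where $B_n\coloneq n^{-1/2}X_n$, $\Lambda_n\coloneq\diag\!\big(e^{-\bar Y_{i\cdot}}\big)_i$, $M_n\coloneq\diag\!\big(e^{-\bar Y_{\cdot j}}\big)_j$, and $\bar Y_{i\cdot},\bar Y_{\cdot j},\bar Y$ are the corresponding averages of the $Y_{ij}$.

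The main step is uniform concentration of these log-averages, and this is where both tail hypotheses enter. Sub-Gaussianity of $X_{11}$ forces a super-exponentially light right tail for $\log|X_{11}|$, while the small-ball bound $\Prob(|X_{11}|\le t)\le C_0 t^{\alpha}$ gives $\Prob(\log|X_{11}|\le -u)\le C_0 e^{-\alpha u}$; together these show $\log|X_{11}|$ is sub-exponential, in particular integrable, so $c_\star=\E\log|X_{11}|$ is finite. A Bernstein inequality for a sum of $n$ i.i.d.\ sub-exponential variables then gives $\Prob\big(|\bar Y_{i\cdot}-c_\star|>\varepsilon\big)\le 2e^{-cn}$ for each fixed small $\varepsilon>0$ (and likewise for the $\bar Y_{\cdot j}$ and for $\bar Y$); a union bound over the $n$ rows and $n$ columns plus Borel--Cantelli yields $\max_i|\bar Y_{i\cdot}-c_\star|\to0$, $\max_j|\bar Y_{\cdot j}-c_\star|\to0$, $\bar Y\to c_\star$ almost surely. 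Hence $e^{\bar Y}=e^{c_\star}(1+o(1))$ and $\Lambda_n=e^{-c_\star}(I+E^{\mathrm L}_n)$, $M_n=e^{-c_\star}(I+E^{\mathrm R}_n)$ with $\|E^{\mathrm L}_n\|_{\mathrm{op}},\|E^{\mathrm R}_n\|_{\mathrm{op}}\to0$ a.s.; combined with $\|B_n\|_{\mathrm{op}}\to 2$ (Bai--Yin, applicable since sub-Gaussian entries have all moments finite) this gives $T_n=e^{-c_\star}B_n+R_n$ with $\|R_n\|_{\mathrm{op}}\to0$ almost surely.

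The two assertions then follow. For the operator norm, write $A^{\mathrm B}_n=\sqrt n\,e^{\bar Y}\,\Lambda_n B_n M_n$ and sandwich $\|\Lambda_n B_n M_n\|_{\mathrm{op}}$ between $\|\Lambda_n^{-1}\|_{\mathrm{op}}^{-1}\|M_n^{-1}\|_{\mathrm{op}}^{-1}\|B_n\|_{\mathrm{op}}$ and $\|\Lambda_n\|_{\mathrm{op}}\|M_n\|_{\mathrm{op}}\|B_n\|_{\mathrm{op}}$; since $\|\Lambda_n^{\pm1}\|_{\mathrm{op}}$ and $\|M_n^{\pm1}\|_{\mathrm{op}}$ all equal $e^{\mp c_\star}(1+o(1))$ and $\|B_n\|_{\mathrm{op}}\to2$, this gives $\|A^{\mathrm B}_n\|_{\mathrm{op}}=2e^{-c_\star}\sqrt n(1+o(1))$ a.s. For the empirical singular-value measure, Weyl's singular-value perturbation inequality gives $\max_k\big|\sigma_k(T_n)-e^{-c_\star}\sigma_k(B_n)\big|\le\|R_n\|_{\mathrm{op}}\to0$, so $\mu_{T_n}$ and $\mu_{e^{-c_\star}B_n}$ have the same almost-sure weak limit. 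Since $B_nB_n^{\dagger}$ is a Marchenko--Pastur matrix with aspect ratio $1$, the singular values of $B_n$ obey the quarter-circle law on $[0,2]$ (precisely as used for \cref{thm:lui-rui-support}); pushing this forward under $s\mapsto e^{-c_\star}s$ and changing variables produces exactly the density $f_\star$ on $[0,2e^{-c_\star}]$.

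I expect the only genuine difficulty to be the uniform control in the second step: $\log|X_{11}|$ is \emph{not} sub-Gaussian, so one must invoke the small-ball assumption to secure sub-exponential tails, and then verify that the Bernstein estimate decays quickly enough both to survive a union bound over all $2n$ rows and columns and to support the Borel--Cantelli argument needed for almost-sure (rather than merely in-probability) convergence. Everything downstream — the operator-norm sandwich and the rescaling of the already-established $\beta=1,2$ quarter-circle law — is soft.
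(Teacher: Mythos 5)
Your proposal is correct and follows essentially the same route as the paper's proof: you solve the balancing conditions in logarithmic coordinates (this is exactly the paper's Proposition~\ref{prop:gauge-fully}, gauge freedom included), establish uniform sub-exponential concentration of the row/column/grand log-averages around $c_\star-\tfrac12\log n$ using the sub-Gaussian right tail and the small-ball left tail (the paper's Lemma~\ref{lem:log-mean-conc}), and conclude that $T_n$ equals $e^{-c_\star}\,n^{-1/2}X_n$ up to a perturbation of vanishing operator norm, after which the Bai--Yin/MP-edge norm limit and the MP(1,1) law finish both claims. The only substantive deviation is the last step: you compare empirical singular-value measures directly via Weyl's singular-value perturbation inequality and Lipschitz test functions, whereas the paper works at the level of $H_n=T_nT_n^\dagger$ using the covariance perturbation bound (Lemma~\ref{lem:cov-op}), the resolvent/Stieltjes continuity estimate (Lemma~\ref{lem:esd-continuity}), and the square-root pushforward (Lemma~\ref{thm:pushforward}); your finish is a slightly more direct but equivalent argument, and your fixed-$\varepsilon$ Bernstein bound with exponential tails is enough for the union bound and Borel--Cantelli, even though the paper tracks the sharper $\sqrt{(\log n)/n}$ rate.
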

\begin{proof}
See Appendix~\ref{proof-of-thm2}.
\end{proof}
For the special case of Ginibre \(\beta\)-ensembles as in \cref{tab:ginibre} we take $X_{ij}=G_{ij}$ with the choice of Gaussian entries with $c_\star = c_\beta$ for the appropriate entry law.\footnote{The numerical values of \(c_\beta\) are given in Lemma~\ref{lem:log-constants}.} 

\begin{table}[h!]
  \centering
  \footnotesize
  \begin{tabularx}{\linewidth}{@{} l c l @{\hspace{8.7em}} X @{}}
    \toprule
    Ensemble & $\beta$ & Symmetry & Construction \\
    \midrule
    GOE & 1 & real symmetric &
    $G_{ij}\overset{iid}{\sim} \mathcal N(0,1), \qquad A=\dfrac{G+G^{\mathsf T}}{\sqrt{2n}}$ \\
    GUE & 2 & complex Hermitian &
    $G_{ij}\overset{iid}{\sim} \mathcal{CN}(0,1), \qquad A=\dfrac{G+G^{\dagger}}{\sqrt{2n}}$ \\
    \bottomrule
  \end{tabularx}
  \caption{Wigner $\beta$-ensembles.}
  \label{tab:wigner-ensembles}
\end{table}

We also look at Wigner-$\beta$ ensembles as given in \cref{tab:wigner-ensembles}, and the densities are plotted in \cref{fig:gue-gse} using a large number of matrices.
\begin{theorem}\label{thm:lui-rui-wigner}
Let $A_n$ be GOE/GUE with the $1/\sqrt n$ scaling as declared in \cref{tab:wigner-ensembles}.
Then the empirical singular-value measures of $A_n^{\mathrm L}$ and $A_n^{\mathrm R}$ converge almost surely to the quarter-circle law on $[0,2]$.
\end{theorem}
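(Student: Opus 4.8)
The plan is to deduce the statement from the classical Wigner semicircle law by treating the balanced matrix $A_n^{\mathrm L}$ (and $A_n^{\mathrm R}$) as a vanishing operator-norm perturbation of $A_n$, in the same spirit as the proof of \cref{thm:lui-rui-support} but accounting for the symmetry constraint that distinguishes GOE/GUE from the i.i.d.\ case. The starting observation is that for a Hermitian matrix the singular values are the moduli of the eigenvalues, so the empirical singular-value measure of $A_n$ itself is the pushforward of its empirical eigenvalue measure under $x\mapsto|x|$. With the $1/\sqrt n$ normalisation declared in \cref{tab:wigner-ensembles}, GOE/GUE obeys the semicircle law on $[-2,2]$ with density $\tfrac{1}{2\pi}\sqrt{4-x^2}$, whose folded version has density $\tfrac1\pi\sqrt{4-s^2}\,\mathbf 1_{[0,2]}(s)$ — exactly the quarter-circle law. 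Hence it is enough to show that replacing $A_n$ by its LUI-/RUI-balanced matrix does not move the limiting spectrum.

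Next I would prove $\opnorm{D_n^{\mathrm L}-I_n}\to 0$ almost surely, i.e.\ uniform concentration of the $n$ row norms around $1$. The crucial point — and the only place the Wigner case genuinely departs from the i.i.d.\ one — is that, although the entries of $A_n$ are globally dependent, those within a fixed row are still mutually independent: for GOE/GUE the off-diagonal entry $A_{ij}$ ($j\neq i$) is built from the pair $(G_{ij},G_{ji})$ and $A_{ii}$ from $G_{ii}$, and these are disjoint as $j$ ranges over $1,\dots,n$. Therefore $\|A_n(i,:)\|_2^2=\sum_{j\neq i}|A_{ij}|^2+|A_{ii}|^2$ is, up to the single diagonal term of size $O(1/n)$, a normalised sum of $n-1$ i.i.d.\ sub-exponential summands with common mean $1/n$. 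A Bernstein-type estimate gives $\Prob\!\big(\,|\,\|A_n(i,:)\|_2^2-1|>\varepsilon\,\big)\le 2e^{-cn\varepsilon^2}$ for small $\varepsilon$; a union bound over the $n$ rows is summable in $n$, so Borel--Cantelli yields $\max_i|\,\|A_n(i,:)\|_2^2-1|\to 0$ almost surely (in particular every row norm is eventually positive, so $D_n^{\mathrm L}$ is well defined), and hence $\opnorm{D_n^{\mathrm L}-I_n}\to 0$ almost surely.

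Combining this with the boundedness of $\opnorm{A_n}$ (indeed $\opnorm{A_n}\to 2$ almost surely, the Wigner spectral edge) gives
\begin{equation}
\opnorm{A_n^{\mathrm L}-A_n}=\opnorm{(D_n^{\mathrm L}-I_n)A_n}\le\opnorm{D_n^{\mathrm L}-I_n}\,\opnorm{A_n}\longrightarrow 0\quad\text{a.s.}
\end{equation}
By Weyl's perturbation inequality for singular values, $\sup_k|\sigma_k(A_n^{\mathrm L})-\sigma_k(A_n)|\le\opnorm{A_n^{\mathrm L}-A_n}\to 0$, so against any bounded Lipschitz test function the empirical singular-value measures of $A_n^{\mathrm L}$ and $A_n$ agree up to $o(1)$; hence the former converges almost surely to the quarter-circle law. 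For the RUI statement, Hermiticity of $A_n$ makes column norms equal to row norms, so $D_n^{\mathrm R}=D_n^{\mathrm L}$ and $A_n^{\mathrm R}=A_nD_n^{\mathrm L}=(D_n^{\mathrm L}A_n)^{\dagger}=(A_n^{\mathrm L})^{\dagger}$; since a matrix and its conjugate transpose share the same singular values, the RUI case is immediate.

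The main obstacle is the uniform row-norm concentration of the second paragraph: one must check that the between-row correlations imposed by the symmetry constraint, and the slightly anomalous diagonal entry, do not spoil the within-row independence that the Bernstein/union-bound argument rests on. Once that independence is isolated, the estimate is routine and the remainder of the argument is essentially identical to the i.i.d.\ setting.
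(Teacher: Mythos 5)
Your proposal is correct, and the first half (uniform row-norm concentration via within-row independence of the Wigner entries, Bernstein plus union bound plus Borel--Cantelli, with the $O(1/n)$ diagonal term absorbed) is essentially the paper's Lemma~\ref{lem:wigner-rowcol-norms}. Where you genuinely diverge is in how the vanishing perturbation is propagated to the spectrum. The paper forms the Gram matrices $M_n=A_n^{\mathrm L}(A_n^{\mathrm L})^\dagger=D_n^{\mathrm L}A_nA_n^\dagger D_n^{\mathrm L}$ and $N_n=A_nA_n^\dagger$, bounds $\|M_n-N_n\|_{\op}$ via Lemma~\ref{lem:cov-op}, invokes the Stieltjes-transform continuity estimate of Lemma~\ref{lem:esd-continuity} to conclude the two ESDs share a limit, and then pushes forward by $x\mapsto\sqrt x$; you instead bound $\opnorm{A_n^{\mathrm L}-A_n}\le\opnorm{D_n^{\mathrm L}-I}\,\opnorm{A_n}\to0$ and apply Weyl's perturbation inequality for singular values, so that uniform closeness of the two singular-value sequences immediately transfers weak convergence against bounded Lipschitz test functions. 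You also dispatch the RUI case by the identity $A_n^{\mathrm R}=(A_n^{\mathrm L})^\dagger$ (valid since Hermiticity makes column norms equal row norms and $D_n^{\mathrm R}=D_n^{\mathrm L}$), whereas the paper simply repeats the argument with columns. Your route is more elementary and slightly sharper at this step, since Weyl gives uniform control of individual singular values rather than only ESD closeness; the paper's resolvent route buys uniformity of method, as the same Lemmas~\ref{lem:cov-op}, \ref{lem:esd-continuity} and \ref{thm:pushforward} are reused verbatim for the i.i.d.\ and BUI theorems (where a direct Weyl comparison to a fixed reference matrix is less natural), and its quantitative $O(\sqrt{(\log n)/n})$ bounds feed into those proofs as well. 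Both arguments rest on the same external inputs ($\opnorm{A_n}\to2$ and the semicircle law folded under $x\mapsto|x|$), so your proof is a valid, self-contained alternative.
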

\begin{proof}
See Appendix~\ref{proof-of-lem1}.
\end{proof}
\begin{theorem}\label{thm:ui-wigner}
Let $A_n$ be GOE/GUE with the $1/\sqrt n$ scaling. Then \(\|A_n^{\mathrm B}\|_{\op}=2e^{-c_\beta}\sqrt n\,\big(1+o(1)\big)\ \text{almost surely}.\) Equivalently, the empirical singular-value measures of $T_n\coloneq n^{-1/2}A_n^{\mathrm B}$ converge almost surely to the quarter-circle law on $[0,\,2e^{-c_\beta}]$.
\end{theorem}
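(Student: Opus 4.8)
The plan is to reduce this to the i.i.d.\ result of \cref{thm:uisvd-support} together with the classical spectral asymptotics for Wigner matrices. The key observation is that, although the entries of $A_n$ (as in \cref{tab:wigner-ensembles}) are not jointly independent, the $n-1$ off-diagonal entries in any fixed row (or column) are i.i.d., each distributed as a fixed Gaussian random variable divided by $\sqrt n$, while the $n$ diagonal entries form a vanishing fraction and affect no geometric mean in the limit. Consequently, for each row $i$ the log-geometric-mean $\frac1n\sum_j\log|A_{n,ij}|$ obeys a law of large numbers and concentrates around $c_\beta-\frac12\log n$, where $c_\beta=\E\log|X_{11}|$ is the constant of \cref{lem:log-constants} for the relevant entry law. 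Since $\log|G|$ has a sub-exponential left tail near the origin (this is the small-ball hypothesis of \cref{thm:uisvd-support}, automatic here with exponent $\alpha=1$), a Bernstein bound together with a union bound over the $2n$ rows and columns upgrades this to an almost-sure estimate that is uniform in $i$.

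With this in hand I would run the same matrix-scaling analysis as in the proof of \cref{thm:uisvd-support}. Fixing the symmetric (Hermitian) gauge of \cref{prop:gauge-fully}, the entrywise absolute-value matrix $|A_n|$ is a real symmetric matrix (with no zero entries, almost surely), so the BUI balancing has $D^{\mathrm{B_L}}=D^{\mathrm{B_R}}=:D_n$ and hence $A_n^{\mathrm B}=D_nA_nD_n$ is again symmetric (Hermitian). The Sinkhorn-type stability of the balancing factors under perturbations of the row/column log-means then forces $D_n=e^{-c_\beta/2}\,n^{1/4}(I+\Delta_n)$ with $\Delta_n$ diagonal and $\|\Delta_n\|_{\op}\to0$ almost surely. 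Writing $A_n^{\mathrm B}=e^{-c_\beta}\sqrt n\,(I+\Delta_n)A_n(I+\Delta_n)$ and expanding the diagonal conjugation as ordinary matrix multiplication, one obtains
\begin{equation}
\bigl\|A_n^{\mathrm B}-e^{-c_\beta}\sqrt n\,A_n\bigr\|_{\op}\;\le\; e^{-c_\beta}\sqrt n\,\bigl(2\|\Delta_n\|_{\op}+\|\Delta_n\|_{\op}^{2}\bigr)\,\|A_n\|_{\op}\;=\;o(\sqrt n),
\end{equation}
almost surely, because $\|A_n\|_{\op}\to2$ almost surely.

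From here both assertions are immediate. The operator-norm statement follows from $\|A_n^{\mathrm B}\|_{\op}=e^{-c_\beta}\sqrt n\,\|A_n\|_{\op}+o(\sqrt n)=2e^{-c_\beta}\sqrt n\,(1+o(1))$ almost surely. For the limiting law, $A_n^{\mathrm B}$ is symmetric (Hermitian), so its singular values are the moduli of its eigenvalues; by Weyl's inequality the eigenvalues of $T_n=n^{-1/2}A_n^{\mathrm B}$ lie within $n^{-1/2}\|A_n^{\mathrm B}-e^{-c_\beta}\sqrt n\,A_n\|_{\op}=o(1)$ of those of $e^{-c_\beta}A_n$. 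Since the empirical eigenvalue distribution of $A_n$ converges almost surely to the semicircle law on $[-2,2]$, that of $T_n$ converges to the semicircle law on $[-2e^{-c_\beta},2e^{-c_\beta}]$, and pushing forward under $x\mapsto|x|$ yields precisely the dilated quarter-circle density $f_\star$ on $[0,2e^{-c_\beta}]$ from \cref{thm:uisvd-support}.

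I expect the main obstacle to be the quantitative control of the balancing, i.e.\ showing that the diagonal factors $D_n$ are Lipschitz-stable functions of the row/column log-means so that $\|\Delta_n\|_{\op}\to0$, uniformly over the $2n$ indices, in the presence of the heavy left tail of $\log|G|$ at the origin. This is exactly the heart of the proof of \cref{thm:uisvd-support}, however, and the only genuinely new point to verify is that the mild dependence induced by symmetry (each off-diagonal entry is shared between one row and the corresponding column) does not damage the per-row law of large numbers — which it does not, since the entries within any single row are mutually independent and there are $n$ of them. Hence the argument of \cref{thm:uisvd-support} transfers with only these cosmetic changes, the role of $c_\star$ being played by $c_\beta$; the supporting inputs $\|A_n\|_{\op}\to2$ and the semicircle law for $A_n$ are classical for the GOE/GUE, consistent with the $\mathrm{LUI}$/$\mathrm{RUI}$ treatment in \cref{thm:lui-rui-wigner}.
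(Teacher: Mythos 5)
Your proposal is correct, and its skeleton matches the paper's proof: you use the closed-form symmetric gauge of \cref{prop:gauge-fully}, the per-row/column concentration of the log-geometric means around $c_\beta-\tfrac12\log n$ (the paper isolates this as Lemma~\ref{lem:log-mean-wigner}, handling the diagonal entry exactly as you sketch), the resulting factorisation $D=e^{-c_\beta/2}n^{1/4}(I+\Delta)$ with $\|\Delta\|_{\op}=o(1)$ a.s., and the classical inputs $\|A_n\|_{\op}\to2$ and the semicircle law. Where you genuinely diverge is the transfer of the limiting law: the paper compares $H_n=T_nT_n^\dagger$ with $\widehat H_n=e^{-2c_\beta}A_n^2$ via the operator-norm perturbation bounds of Lemma~\ref{lem:cov-op} and the Stieltjes-transform continuity Lemma~\ref{lem:esd-continuity}, then pushes MP$(1,1)$ forward by $\sqrt{x}$; you instead observe that in the symmetric gauge $D^{\mathrm{B_L}}=D^{\mathrm{B_R}}$ for a Hermitian input (since $|a_{ij}|=|a_{ji}|$ forces $m_i=n_i$), so $A_n^{\mathrm B}=D_nA_nD_n$ is itself Hermitian, and you compare $T_n$ directly with $e^{-c_\beta}A_n$ via Weyl's inequality and push the semicircle forward under $x\mapsto|x|$. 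This is a valid and somewhat more elementary route, and it exploits a structural fact (Hermiticity of the balanced matrix) that the paper does not use; its price is that it is special to Hermitian input, whereas the paper's resolvent argument is the same machinery used for the non-Hermitian i.i.d.\ case in \cref{thm:uisvd-support}. One small remark: the ``Sinkhorn-type stability'' you worry about at the end is not actually needed — because all entries are nonzero almost surely, \cref{prop:gauge-fully} gives the balancing factors in exact closed form as functions of the row/column log-means, so the control of $\Delta_n$ is just exponentiation of the concentration estimates, exactly as in the paper; and the union-bound tail probabilities must be summable so Borel--Cantelli yields the almost-sure statements, which the $O(n^{-10})$ bounds provide.
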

\begin{proof}
See Appendix~\ref{proof-of-lem2}.
\end{proof}
\begin{figure}[h!]
  \centering
  \includegraphics[width=\textwidth,height=5cm]{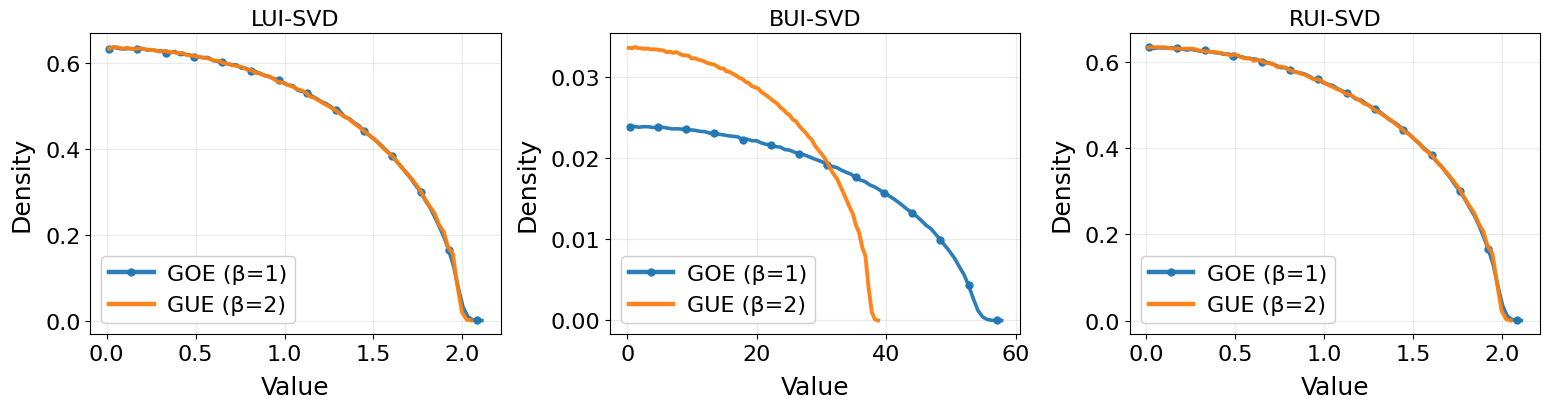}
  \caption{Empirical densities for the invariants $\{\sigma_k^\mathrm{L}\}$, $\{\sigma_k^\mathrm{R}\}$, and $\{\sigma_k^{\mathrm{B}}\}$ with $n=200$. LUI and RUI follow the quarter-circle law of singular value distribution. BUI has dimension-dependent stretched supports $\approx$[$0,2.67\sqrt{n}$] for GUE, and $\approx$[$0,3.77\sqrt{n}$] for GOE.}
  \label{fig:gue-gse}
\end{figure}

\section{Unit-Invariant entanglement entropies}\label{sec:entanglement-measures}
In this section we introduce various entanglement entropies based on unit-invariant singular values defined above. They extend von Neumann, pseudo- or SVD entropies considered earlier in the literature, whose definitions and properties we briefly summarise first.
\subsection{Review of entanglement entropy}
We start with the usual setup of a bipartite Hilbert space
\begin{equation}
\mathcal{H} = \mathcal{H}_\mathbb{A} \otimes \mathcal{H}_\mathbb{B},
\end{equation}
with subsystems of arbitrary finite dimensions $d_{\mathbb{A}}$ and $d_{\mathbb{B}}$, respectively. We consider a bipartite pure state $\ket{\psi}$ and the corresponding density matrix, expressed in some fixed orthonormal basis, as 
\begin{equation}
 \ket{\psi} \;=\; \sum_{i,k} A_{ik}\, \ket{i}_\mathbb{A} \ket{k}_\mathbb{B},\qquad \rho=\ket{\psi}\bra{\psi}=\sum_{i,j,k,l}A_{ik}A^*_{jl}\ket{i}_\mathbb{A}\otimes \ket{k}_\mathbb{B}\bra{j}_\mathbb{A}\otimes \bra{l}_\mathbb{B},
 \label{eq:psi-original}
\end{equation}
with coefficient matrix \(A\) that does not have any all-zero rows or all-zero columns. The reduced density matrix of $\mathbb{A}$ becomes
\begin{equation}
\rho_\mathbb{A}=\Tr_\mathbb{B}(\rho)=\sum_{i,j}(AA^\dagger)_{ij}\ket{i}_\mathbb{A}\bra{j}_\mathbb{A}.
\end{equation}
To compute entropies we will consider a normalised density matrix with $\Tr(\rho_\mathbb{A})=1$, which we can write as the operator\footnote{It is usually more natural to start this construction with normalising $\ket{\psi}$ so that $\Tr(A^\dagger A)=1$, but we keep this discussion more general for later purposes.}
\be
\rho_\mathbb{A}=\frac{AA^\dagger}{\Tr(AA^\dagger)}.
\ee
Then, in the standard computation of von Neumann entropy, we perform the singular value decomposition of $A$
\be
A=U\Sigma V^\dagger,\label{SVDdecA}
\ee
with unitaries $U$ and $V^\dagger$ and diagonal matrix $\Sigma$ with singular values $\sigma_k$
\be
\Sigma=\text{diag}(\sigma_1,\sigma_2,\cdots),
\ee
that leads to
\be
\rho_\mathbb{A}=\frac{U(\Sigma\Sigma^\dagger)U^\dagger}{\Tr(U(\Sigma\Sigma^\dagger)U^\dagger)}=U\frac{(\Sigma\Sigma^\dagger)}{\Tr(\Sigma\Sigma^\dagger)}U^{-1}\,.
\ee
This way, the normalised eigenvalues $\lambda_k$ of $\rho_\mathbb{A}$ are related to the squares of the singular values $\sigma_k$ of $A$ as $\lambda_k=\sigma^2_k/(\sum_l\sigma^2_l)$. Recall that this procedure is equivalent to introducing a {\it particular basis} in the Hilbert space $\mathcal{H}_\mathbb{A} \otimes \mathcal{H}_\mathbb{B}$ through the \textit{Schmidt decomposition} of $\ket{\psi}$ (often understood as a purification of $\rho_\mathbb{A}$). Indeed, we can write the state $\ket{\psi}$ using the SVD decomposition given in \cref{SVDdecA}, as
\be
\ket{\psi}=\sum_{i,j}A_{ij}\ket{i}_\mathbb{A} \ket{j}_\mathbb{B}=\sum_{i,k,j}U_{ik}\Sigma_{kk}V^\dagger_{kj}\ket{i}_\mathbb{A} \ket{j}_\mathbb{B}\equiv \sum_k\sigma_k\ket{k}_\mathbb{A} \ket{k}_\mathbb{B},
\ee
where we defined the basis vectors 
\be
\ket{k}_\mathbb{A}=\sum_{i}U_{ik}\ket{i}_\mathbb{A},\qquad \ket{k}_\mathbb{B}=\sum_jV^\dagger_{kj}\ket{j}_\mathbb{B},
\ee
that span the basis in $\mathbb{A}$ (and its complement $\mathbb{B}$), and the singular values are also called the \textit{Schmidt coefficients}.

Finally, from the singular values (Schmidt coefficients), we can evaluate a family of R\'enyi entropies indexed by a positive integer $n$ as
\begin{equation}
S^{(n)}_\mathbb{A}=\frac{1}{1-n}\log(\Tr(\rho^n_\mathbb{A}))=\frac{1}{1-n}\log\left(\sum_k\lambda^n_k\right)=\frac{1}{1-n}\log\left(\sum_k\frac{(\sigma^{2}_k)^n}{(\sum_l\sigma^2_l)^n}\right).
\end{equation}
The entanglement (von Neumann) entropy, which can also be obtained as the limit of $n\to 1$ of the R\'enyi family, is defined as
\be
S^{\mathrm E}_\mathbb{A}\coloneq -\Tr(\rho_\mathbb{A}\log(\rho_\mathbb{A}))=-\sum_k\lambda_k\log(\lambda_k)=-\sum_k\frac{\sigma^{2}_k}{(\sum_l\sigma^2_l)}\log\left(\frac{\sigma^{2}_k}{(\sum_l\sigma^2_l)}\right).
\ee
In what follows we generalise this construction to other decompositions of matrix $A$ discussed in the previous sections.
\subsection{Unit-Invariant entanglement entropies} \label{ssec-entropies}
To start with, we consider scale-invariant generalisations of von Neumann entropy. We describe three distinct prescriptions by which a \emph{balanced state} associated to \cref{eq:psi-original} is defined. Next, we evaluate reduced density matrices $\rho_\mathbb{A}$ and entanglement entropies expressed by appropriate normalised singular values. 
\paragraph{Left-UI entanglement entropy.} We first define the balanced state which is associated to \cref{eq:psi-original} by acting with some diagonal, local\footnote{Note that we admit different rescalings for each basis vector.} rescaling operator $L^\psi$ on basis vectors in $\mathcal{H}_\mathbb{A}$
\begin{equation}
 \ket{\psi^\mathrm{L}} \;\coloneq \; \Bigl[L^{\psi}_\mathbb{A}\otimes \mathbb I_\mathbb{B}\Bigr]\,\ket{\psi}, \quad L^\psi_\mathbb{A} = \frac{D^\mathrm{L}}{\sqrt{d_\mathbb{A}}}.
 \label{eq:balanced-left}
\end{equation}
More explicitly, we have
\be
\ket{\psi^{\mathrm L}}=\frac{1}{\sqrt{d_\mathbb{A}}}\sum_{ij}(A)_{ij}D^{\mathrm L}_{ii}\ket{i}_\mathbb{A}\otimes \ket{j}_\mathbb{B}= \frac{1}{\sqrt{d_\mathbb{A}}}\sum_{ij}(D^{\mathrm L}A)_{ij}\ket{i}_\mathbb{A}\otimes \ket{j}_\mathbb{B}.
\ee
Using the definition of the balanced matrix in \cref{balancedA}, we can write this as
\be
\ket{\psi^{\mathrm L}}=\frac{1}{\sqrt{d_\mathbb{A}}}\sum_{ij}(A^{\mathrm L})_{ij}\ket{i}_\mathbb{A}\otimes \ket{j}_\mathbb{B}= \frac{1}{\sqrt{d_\mathbb{A}}}\sum_{k}\sigma^{\mathrm L}_k\ket{k}_\mathbb{A}\otimes \ket{k}_\mathbb{B},
\ee
where, in the second step, we defined the Schmidt vectors (suppressing superscripts of $\mathrm L$ to ease the notation)
\be
\ket{k}_\mathbb{A}=\sum_i (U_0)_{ik}\ket{i}_\mathbb{A},\qquad \ket{k}_\mathbb{B}=\sum_j(V^\dagger_0)_{kj}\ket{j}_\mathbb{B}\,.
\ee
Since the normalised left singular values are simply $\hat{\sigma}^{\mathrm L}_k=\sigma^{\mathrm L}_k/\sqrt{d_\mathbb{A}}$ from \cref{eq-normalised-luirui-values}, we have the Schmidt decomposition 
\be
\ket{\psi^{\mathrm L}}=\sum_{k}\hat{\sigma}^{\mathrm L}_k\ket{k}_\mathbb{A}\otimes \ket{k}_\mathbb{B}, \label{SchDLeft}
\ee
that allows us to define the reduced density matrix of $\mathbb{A}$ evaluated from $\ket{\psi^{\mathrm L}}$ as
\begin{equation}
 \rho^{\mathrm L}_\mathbb{A} \;=\; \mathrm{Tr}_\mathbb{B}\Bigl(\ket{\psi^\mathrm{L}}\bra{\psi^\mathrm{L}}\Bigr)
 \;=\; \sum_{k} \Bigl(\hat{\sigma}_k^\mathrm{L}\Bigr)^2\, \ket{k}_\mathbb{A}\bra{k}_\mathbb{A}.
 \label{eq:rhoA-left}
\end{equation}
Finally, we define \textit{Left-Unit-Invariant entanglement entropy} (LUI entanglement entropy) \(S^{\mathrm{L}}_\mathbb{A}\) of $\rho_\mathbb{A}$ as the \textit{von Neumann entropy} of $\rho^{\mathrm L}_\mathbb{A}$ 
\begin{equation}
 S^{\mathrm{L}}_\mathbb{A}(\rho_\mathbb{A})\coloneq -\Tr(\rho^{\mathrm L}_\mathbb{A}\log(\rho^{\mathrm L}_\mathbb{A}))= -\sum_{k} (\hat{\sigma}_k^\mathrm{L})^2 \log\Bigl[(\hat{\sigma}_k^\mathrm{L})^2\Bigr].
 \label{eq:entropy-left}
\end{equation}
From the Schmidt decomposition in \cref{SchDLeft}, we can easily see that $S^{\mathrm{L}}_\mathbb{A}=S^{\mathrm{L}}_\mathbb{B}$ where $\rho^\mathrm{L}_\mathbb{B}=\Tr_{\mathbb{A}}(\ket{\psi^\mathrm{L}}\bra{\psi^\mathrm{L}})$. Moreover, by construction, LUI entanglement entropy is invariant under the action on the state $\ket{\psi}$ with any \textit{local, diagonal scaling operator} \(D_\mathbb{A}\) on $\mathcal{H}_\mathbb{A}$, and any \textit{local unitary operator} \(U_\mathbb{B} \) on $\mathcal{H}_\mathbb{B}$, i.e., $S^\mathrm{L}$ is the same for $\ket{\psi}$ and $\tilde{\ket{\psi}}=D_\mathbb{A} \otimes U_\mathbb{B} \ket{\psi}$.
\paragraph{Right-UI entanglement entropy.} In complete analogy to LUI above, we define 
\begin{equation}
 \ket{\psi^\mathrm{R}} \;\coloneq \; \Bigl[\mathbb I_\mathbb{A}\otimes R^\psi_\mathbb{B}\Bigr]\,\ket{\psi}, \quad R^\psi_\mathbb{B} 
 = \frac{{{D^\mathrm{R}}}}{\sqrt{d_\mathbb{B}}},
 \label{eq:balanced-right}
\end{equation}
or, explicitly,
\be
\ket{\psi^\mathrm{R}}=\frac{1}{\sqrt{d_\mathbb{B}}}\sum_{ij}(A^{\mathrm R})_{ij}\ket{i}_\mathbb{A}\otimes \ket{j}_\mathbb{B}= \frac{1}{\sqrt{d_\mathbb{B}}}\sum_{k}\sigma^{\mathrm R}_k\ket{k}_\mathbb{A}\otimes \ket{k}_\mathbb{B},\label{SchDRight}
\ee
where we again defined the Schmidt vectors $\ket{k}$ from the SVD decomposition of $A^{\mathrm R}$ in \cref{balancedAR}. The reduced density matrix of $\mathbb{A}$ computed from this state becomes
\begin{equation}
 \rho^\mathrm{R}_\mathbb{A} \;=\; \mathrm{Tr}_\mathbb{B}\Bigl(\ket{\psi^\mathrm{R}}\bra{\psi^\mathrm{R}}\Bigr)
 \;=\; \sum_{k} \Bigl(\hat{\sigma}_k^\mathrm{R}\Bigr)^2\, \ket{k}_\mathbb{A}\bra{k}_\mathbb{A},
 \label{eq:rhoA-right}
\end{equation}
and we define the \textit{Right-Unit-Invariant entanglement entropy} (RUI entanglement entropy) $S^\mathrm{R}_\mathbb{A}$ of $\rho_\mathbb{A}$ as the von Neumann entropy of $\rho^\mathrm{R}_\mathbb{A}$
\begin{equation}
 S^\mathrm{R}_\mathbb{A}(\rho_\mathbb{A})\coloneq -\Tr(\rho^\mathrm{R}_\mathbb{A}\log(\rho^\mathrm{R}_\mathbb{A}))= -\sum_{k} (\hat{\sigma}_k^\mathrm{R})^2 \log\Bigl[(\hat{\sigma}_k^\mathrm{R})^2\Bigr].
 \label{eq:entropy-right}
\end{equation}
Again, the Schmidt decomposition of $\ket{\psi^\mathrm{R}}$ in \cref{SchDRight} implies $S^\mathrm{R}_\mathbb{A}=S^\mathrm{R}_\mathbb{B}$ for the entropy of $\mathbb{B}$ computed for $\rho^\mathrm{R}_\mathbb{B}=\Tr_\mathbb{A}(\ket{\psi^\mathrm{R}}\bra{\psi^\mathrm{R}})$. Also, by construction, RUI entanglement entropy is invariant under the action of $U_\mathbb{A} \otimes D_\mathbb{B}$ on state $\ket{\psi}$, i.e., $S^\mathrm{R}$ is the same for a family of states $\tilde{\ket{\psi}}=U_\mathbb{A} \otimes D_\mathbb{B}\ket{\psi}$ with an arbitrary unitary on $\mathbb{A}$ and a diagonal on $\mathbb{B}$.
\paragraph{Bi-UI entanglement entropy.} Ultimately, we define Bi-Unit-Invariant (BUI) entanglement entropy, taking advantage of the simultaneous left- and right-scaling invariants introduced in \cite{uhlmann-main2}. To this end we define a balanced state 
\begin{equation}
 \ket{\psi^\mathrm{B}} \;=\; \frac{M^\psi_\mathbb{A}\otimes N^\psi_\mathbb{B}}{||A^\mathrm{B}||_\mathbb{F}}\ket{\psi}, \quad M^\psi_\mathbb{A} = D^{\mathrm{B_L}}, \quad N^\psi_\mathbb{B} 
 = D^{\mathrm{B_R}},
 \label{eq:balanced-full}
\end{equation}
that can be written explicitly as
\begin{equation}
\ket{\psi^\mathrm{B}}=\frac{1}{||A^\mathrm{B}||_\mathbb{F}}\sum_{ij}({D}^{\mathrm{B_L}}\,A\,{D}^{\mathrm{B_R}})_{ij}\ket{i}_\mathbb{A}\otimes \ket{j}_\mathbb{B}=\sum_{k}\hat{\sigma}^\mathrm{B}_k\ket{k}_\mathbb{A}\otimes \ket{k}_\mathbb{B},
\end{equation}
where we used the definition in \cref{eq:ui-bal-def} and the SVD decomposition of $A^\mathrm{B}$ to construct the Schmidt decomposition with its normalised singular values given by \cref{NormsingvalAU}.

As before, we can evaluate the reduced density matrices $\rho^\mathrm{B}_{\mathbb{A}}$ and $\rho^\mathrm{B}_{\mathbb{B}}$ by performing partial trace over $\rho^\mathrm{B}=\ket{\psi^\mathrm{B}}\bra{\psi^\mathrm{B}}$. Finally, we define the \textit{Bi-Unit-Invariant entanglement entropy} (BUI entanglement entropy) $S^{\mathrm{B}}_\mathbb{A}$ of $\rho_\mathbb{A}$ as the von Neumann entropy of $\rho^{\mathrm B}_{\mathbb{A}}$
\begin{equation}
 S^{\mathrm{B}}_\mathbb{A}(\rho_\mathbb{A})\coloneq -\Tr(\rho^\mathrm{B}_{\mathbb{A}}\log(\rho^{\mathrm{B}}_{\mathbb{A}})) = -\sum_{k} (\hat{\sigma}_k^\mathrm{B})^2 \log\Bigl[(\hat{\sigma}_k^\mathrm{B})^2\Bigr]. 
 \label{eq:entropy-full}
\end{equation}
Of course we again have the property $S^{\mathrm{B}}_\mathbb{A}=S^{\mathrm{B}}_\mathbb{B}$, and this entropy remains invariant under that action on $\ket{\psi}$ with arbitrary diagonal operators of the form \(D_\mathbb{A}\otimes D'_\mathbb{B}\).
\section{Unit-Invariant SVD entropies for transition matrices} \label{UISVDprepost}
In turn, motivated by \cite{Nakata:2020luh,Parzygnat:2023avh}, we present a construction involving pre- or post-selected states and scaling invariance of a reduced transition matrix. 
\subsection{Review of pseudo and SVD entropies}
In this case, we generally consider two independent, normalised pure states of the form \cref{eq:psi-original} in \(\mathcal{H}=\mathcal{H}_\mathbb{A}\otimes \mathcal{H}_\mathbb{B}\): \(\vert \psi_1 \rangle\) and \(\vert \psi_2 \rangle\), with non-zero overlap $\langle \psi_2 \vert \psi_1 \rangle$. We define the \emph{transition matrix} by
\begin{equation}\label{eq:transition-matrix}
\tau^{1|2} = \frac{\vert \psi_1 \rangle \langle \psi_2 \vert}{\langle \psi_2 \vert \psi_1 \rangle},
\end{equation}
which, in general, is a non-Hermitian operator on \(\mathcal{H}\). From this matrix, we compute a partial trace over subsystem \(\mathbb{B}\) to obtain the reduced transition matrix operator on \(\mathbb{A}\)
\begin{equation}\label{eq:reducedtransmatrix}
\tau_\mathbb{A}^{1|2} = \operatorname{Tr}_\mathbb{B}\bigl(\tau^{1|2}\bigr).
\end{equation}
Pseudo-entropy \cite{Nakata:2020luh} is then defined as a generalisation of the von Neumann entropy for this transition matrix 
\be
S^\mathrm P_\mathbb{A}(\tau_\mathbb{A}^{1|2})=-\Tr(\tau_\mathbb{A}^{1|2}\log(\tau_\mathbb{A}^{1|2})).\label{eq:PseudoEntropy}
\ee
Since the eigenvalues of $\tau_\mathbb{A}^{1|2}$ are in general complex, pseudo-entropy will also have real and imaginary parts. These features have already found interesting applications in dS/CFT \cite{Doi:2022iyj}, and in holographic CFTs, its gravity dual was found to be the minimal area in Euclidean time-dependent AdS spacetimes \cite{Nakata:2020luh}. For some specific choices of the pre- and post-selected states, real and imaginary parts of pseudo-entropy were found to satisfy the Kramers-Kronig relations \cite{Caputa:2024gve}. The imaginary part was also shown to be related to chirality of links in Chern-Simons gauge theory \cite{Caputa:2024qkk}. Nevertheless, the imaginary part of pseudo-entropy makes it hard to interpret it from the quantum-information-theoretic point of view and its operational meaning remains mysterious. For this reason its ``improvement" was proposed in \cite{Parzygnat:2023avh} that defined \textit{SVD entropy} as von Neumann entropy computed from the singular values of $\tau_\mathbb{A}^{1|2}$, which are real. More precisely, we first construct a density matrix
\be
\rho_\mathbb{A}^{1|2}=\frac{\sqrt{(\tau_\mathbb{A}^{1|2})^\dagger \tau_\mathbb{A}^{1|2}}}{\Tr\left[\sqrt{(\tau_\mathbb{A}^{1|2})^\dagger \tau_\mathbb{A}^{1|2}}\right]},
\ee
and define the SVD entropy of $\tau_\mathbb{A}^{1|2}$ as the von Neumann entropy of $\rho_\mathbb{A}^{1|2}$
\be\label{eq:svd-eentropy}
S^{\text{SVD}}_\mathbb{A}(\tau_\mathbb{A}^{1|2})=-\Tr(\rho_\mathbb{A}^{1|2}\log(\rho_\mathbb{A}^{1|2})).
\ee
By construction, the singular values $\sigma_k$ of $\tau_\mathbb{A}^{1|2}=U\Sigma V^\dagger$, in $\Sigma=(\sigma_1,\sigma_2,\ldots,\sigma_{d_\mathbb{A}})$, are related to the normalised eigenvalues $\hat{\lambda}_k$ of $\rho_\mathbb{A}^{1|2}$ by
\be
\hat{\lambda}_k=\frac{\sigma_k}{\sum_l\sigma_l},
\ee
so that
\be
S^{\text{SVD}}_\mathbb{A}(\tau_\mathbb{A}^{1|2})=-\sum_k \hat{\lambda}_k\log(\hat{\lambda}_k)\,.\label{SVDEntropy}
\ee
By definition, SVD entropy is always non-negative and bounded, $0\le S^{SVD}_\mathbb{A}\le \log(d_\mathbb{A})$. Moreover, SVD entropy was given an operational meaning as the average number of Bell pairs distillable from intermediate states that appear between $\ket{\psi_1}$ and $\ket{\psi_2}$ \cite{Parzygnat:2023avh}. In the following, we will employ the UISVD values to provide one more variation of the pseudo- and SVD entropies.
\subsection{Unit-Invariant SVD entropies}
Given the above and the UI singular values discussed in \cref{sec-scale}, it is natural to generalise the SVD entropy from \cref{SVDEntropy} by replacing the SVD eigenvalues of $\tau^{1|2}_\mathbb{A}$ by its left-, right-, or bi-unit-invariant singular values in \cref{LSVA,RSVA,sigmaU}. To proceed systematically, let us first rewrite the transition matrix
\be
\tau^{1|2}_\mathbb{A}=\sum_{ij}(\tau^{1|2}_\mathbb{A})_{ij}\ket{i}_\mathbb{A}\bra{j}_\mathbb{A},
\ee
as a vector using the Choi-Jamiołkowski isomorphism
\be
\ket{\tau^{1|2}_\mathbb{A}}=\sum_{ij}(\tau^{1|2}_\mathbb{A})_{ij}\ket{i}_\mathbb{A}\ket{j}^{\star}_\mathbb{A},\label{ChoiState}
\ee
where $\ket{j}^{\star}_\mathbb{A}$ is a CPT conjugate of $\ket{j}_\mathbb{A}$. 
Then we define generalised left-, right-, and bi- UISVD entropies of transition matrix $\tau^{1|2}_\mathbb{A}$ as von Neumann entropies computed from singular values of 
\be
(\tau^{1|2}_\mathbb{A})^{\mathrm L}=D^{\mathrm L}\tau^{1|2}_\mathbb{A},\qquad (\tau^{1|2}_\mathbb{A})^{\mathrm R}=\tau^{1|2}_\mathbb{A}D^{\mathrm R},\qquad (\tau^{1|2}_\mathbb{A})^\mathrm{B}=D^\mathrm{B_L}\tau^{1|2}_\mathbb{A}D^\mathrm{B_R},
\ee
where the balanced matrices $D^{\mathrm L}, D^{\mathrm R}$, $D^{\mathrm B_{\mathrm L}}$, and $D^{\mathrm B_{\mathrm R}}$ are constructed out of the transition matrix $\tau^{1|2}_\mathbb{A}$ in the same way as in \cref{sec-scale}. Note also that, analogously to the UI entropies for states, we can now think about the diagonal scaling operators above as applied to the Choi state in \cref{ChoiState}.

The SVD values are again computed from
\be
(\rho_\mathbb{A}^{1|2})^\mathrm{I}=\frac{\sqrt{((\tau_\mathbb{A}^{1|2})^\mathrm{I})^\dagger (\tau_\mathbb{A}^{1|2})^\mathrm{I}}}{\Tr\left[\sqrt{((\tau_\mathbb{A}^{1|2})^\mathrm{I})^\dagger (\tau_\mathbb{A}^{1|2})^\mathrm{I}}\right]},\qquad \mathrm{I}=\mathrm{L},\mathrm{R},\mathrm{B}\,.
\ee
We denote the UISVD singular values of $(\tau^{1|2}_\mathbb{A})^\mathrm{I}$ by $\sigma^\mathrm{I}_k$ and the eigenvalues of $(\rho_\mathbb{A}^{1|2})^\mathrm{I}$ by $\hat{\lambda}^\mathrm{I}_k$. They are related by 
\be
\hat{\lambda}^\mathrm{I}_k=\frac{\sigma^\mathrm{I}_k}{\sum_i\sigma^\mathrm{I}_i},\qquad \mathrm{I}=\mathrm{L},\mathrm{R},\mathrm{B}\,.
\ee
Finally, we define the left-, right- or bi-Unit-Invariant SVD entropies of $\tau_\mathbb{A}^{1|2}$ as
\be
S^{\mathrm{I}}_\mathbb{A}(\tau_\mathbb{A}^{1|2})=-\sum_k \hat{\lambda}^\mathrm{I}_k\log(\hat{\lambda}^\mathrm{I}_k)\,.\label{UISVDEntropy}
\ee
An analogous construction should be performed for $\tau_\mathbb{B}^{1|2}$ and, similarly to SVD entropy, in general
\be
S^{\mathrm{I}}_\mathbb{A}(\tau_\mathbb{A}^{1|2})\neq S^{\mathrm{I}}_\mathbb{B}(\tau_\mathbb{B}^{1|2})\,.
\ee
We will study some explicit examples of these quantities in the following sections.

Before we proceed, a few comments are in order. First, let us pick one of the states, say the pre-selected state $\ket{\psi_1}$, and act upon its left or right component with an arbitrary diagonal operator, denoted respectively by $D_\mathbb{A}$ and $D_\mathbb{B}$ 
\begin{equation}\label{eq:local-D-red-trans}
 \ket{\psi_{1_\mathbb{A}}} = D_\mathbb{A}\otimes \mathbb I\ket{\psi_1} \quad \text{and} \quad  \ket{\psi_{1_\mathbb{B}}} = \mathbb I\otimes D_\mathbb{B}\ket{\psi_1}.
\end{equation}
The corresponding transition matrices for such rescaled states are then also rescaled by diagonal factors, and we denote them by
\begin{equation}
\tau^{1_\mathbb{A}|2}_\mathbb{A}= \frac{D_\mathbb{A}}{\Tr{(D_\mathbb{A}\tau_\mathbb{A}^{1|2})}} \tau^{1|2}_\mathbb{A} \quad \text{and} \quad \tau^{1_\mathbb{B}|2}_\mathbb{B}= \frac{D_\mathbb{B}}{\Tr{(D_\mathbb{B}\tau_\mathbb{B}^{1|2})}} \tau^{1|2}_\mathbb{B}. 
\end{equation}
Thus, for pre-selected states, the action of local scaling operators on one of the subsystems and tracing out the complement is equivalent to a scaling of the original transition matrix from the left. Consequently, left- and bi- UISVD entropies will be invariant under such transformations (but R-UISVD will detect them).

Secondly, for the case of post-selected states, any local scaling action results in a scaling on the right 
\begin{equation}
\tau^{1|2_\mathbb{A}}_\mathbb{A}= \tau^{1|2}_\mathbb{A}{{\frac{D_\mathbb{A}^\dagger}{\Tr{(D_\mathbb{A}^\dagger\tau_\mathbb{A}^{1|2})}}}} \quad \text{and} \quad \tau^{1|2_\mathbb{B}}_\mathbb{B}= \tau^{1|2}_\mathbb{B} {{\frac{D^\dagger_\mathbb{B}}{\Tr{(D_\mathbb{B}^\dagger\tau_\mathbb{B}^{1|2})}}}}.
\end{equation}
Therefore RUISVD and BUISVD entanglement measures would remain invariant, while LUISVD entropy would detect it. Further, acting on the same subsystem of both pre- and post-selected states leads to the two-sided scaling of the original reduced transition matrix, upon which only BUISVD entanglement entropy remains invariant. 

In the next section, as a proof of concept, we apply these definitions of unit-invariant entropies to various physical setups.
\section{Applications}\label{sec:Applications}
In this final section we present several natural applications of our UISVD entropies, ranging from random states to Chern-Simons theory and Biorthogonal Quantum Mechanics (BQM).
\subsection{Unit-Invariant entropies for the Haar ensemble} 
As the first illustration of the scale-invariant entanglement entropies introduced above, we consider their behaviour for the \emph{Haar ensemble} of bipartite pure states. Let $\mathcal{H}\cong\mathbb{C}^d$ be a finite-dimensional complex Hilbert space with the standard inner product. A random unit vector $|\psi\rangle\in\mathcal{H}$ is called \emph{Haar random} \cite{Page:1993AverageEntropy,Vershynin2018HDP,Mezzadri:2007Haar,Zyczkowski:2001Induced} if its law is the unique $U(d)$-invariant probability measure $\mu_{\mathrm{Haar}}$ on the unit sphere
\begin{equation}
\mathbb{S}(\mathcal{H})=\{\psi\in\mathcal{H}:\langle\psi|\psi\rangle=1\}.
\end{equation}
Now specialise to a bipartite Hilbert space
\be
\mathcal{H}=\mathcal{H}_{\mathbb{A}}\otimes \mathcal{H}_{\mathbb{B}}
\cong \mathbb{C}^{d_{\mathbb{A}}}\otimes \mathbb{C}^{d_{\mathbb{B}}}
\cong \mathbb{C}^{d_{\mathbb{A}}d_{\mathbb{B}}}.
\ee
Fix orthonormal product bases $\{|i\rangle\}_{i=1}^{d_{\mathbb{A}}}$ for $\mathcal{H}_{\mathbb{A}}$ and $\{|j\rangle\}_{j=1}^{d_{\mathbb{B}}}$ for $\mathcal{H}_{\mathbb{B}}$. Any pure state $|\psi\rangle\in\mathcal{H}$ can be written as
\begin{equation}
|\psi\rangle=\sum_{i=1}^{d_{\mathbb{A}}}\sum_{j=1}^{d_{\mathbb{B}}} A_{ij}\,|i\rangle\otimes|j\rangle,
\end{equation}
where $A\in\mathbb{C}^{d_{\mathbb{A}}\times d_{\mathbb{B}}}$ is the \emph{coefficient matrix} of $|\psi\rangle$ in the chosen product basis. In these coordinates, sampling a Haar-random bipartite pure state is equivalent to sampling a random matrix
\begin{equation}
G\in\mathbb{C}^{d_{\mathbb{A}}\times d_{\mathbb{B}}},\qquad
G_{ij}\stackrel{\mathrm{i.i.d.}}{\sim}\mathcal{N}(0,1)+ i\,\mathcal{N}(0,1),
\end{equation}
and normalising in Hilbert-Schmidt (Frobenius) norm,
\begin{equation}
A=\frac{G}{\sqrt{\mathrm{Tr}(G G^\dagger)}},
\end{equation}
so that $|\psi\rangle=\mathrm{vec}(A)$ is Haar distributed on $\mathbb{S}(\mathcal{H})$. 
\begin{figure}[h!]
    \centering
    \begin{minipage}{0.48\textwidth}
        \centering
       \includegraphics[width=\textwidth]{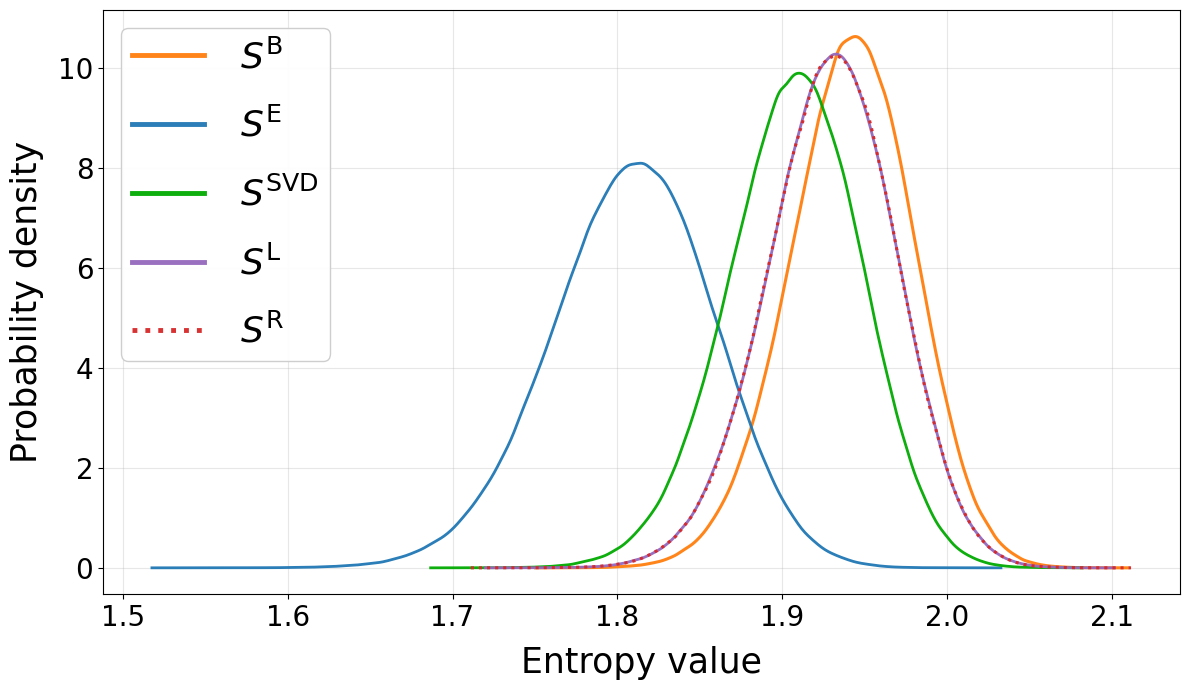}
        \subcaption{Subsystem dimensions $d_\mathbb{A},d_\mathbb{B}=10$}
    \end{minipage}
    \hfill
    \begin{minipage}{0.48\textwidth}
        \centering
        \includegraphics[width=\textwidth]{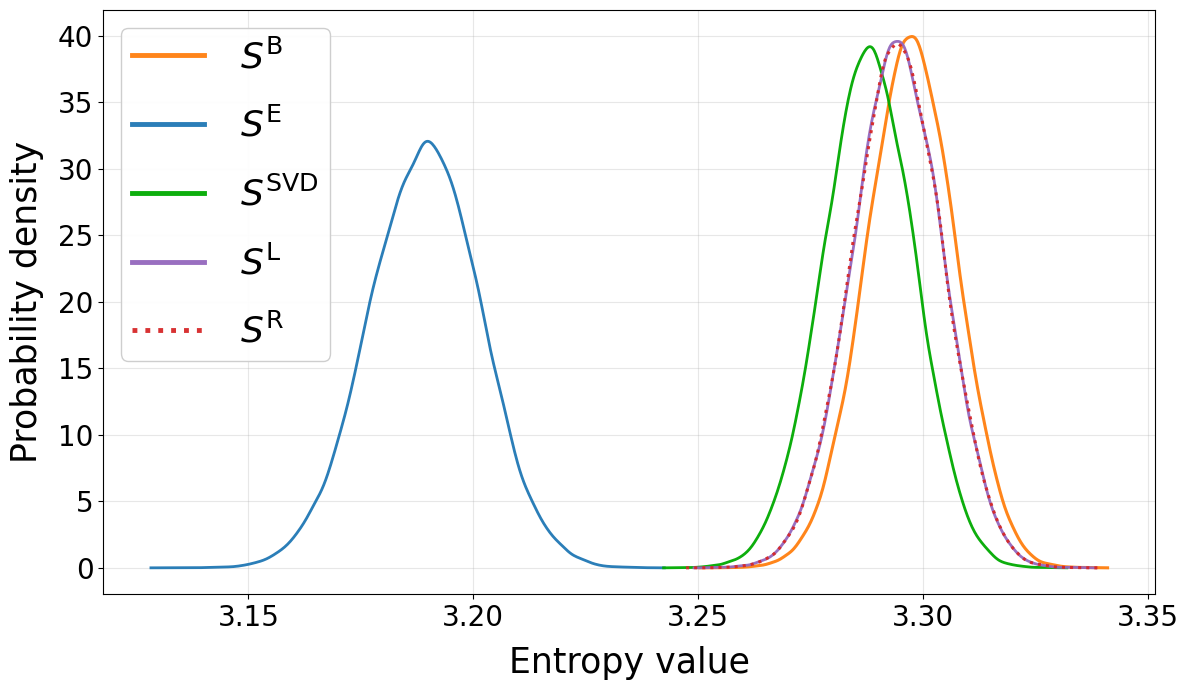}
        \subcaption{Subsystem dimensions $d_\mathbb{A},d_\mathbb{B}=40$}
    \end{minipage}
    \caption{Probability distributions for left-, right-, and bi-unit-invariant entanglement entropies $S^\mathrm{L}, S^\mathrm{R}$ and $S^\mathrm{B}$ for the Haar ensemble. For the sake of completeness we also reproduce the plots of densities of $S^\mathrm{E}$ and ${S}^\mathrm{SVD}$ from \cite{Parzygnat:2023avh}. The mean values of all four SVD-based entropies are systematically shifted to larger values than $S^\mathrm E$ and lie closer to the maximal value $\log d$ as the dimension increases.}
    \label{figure:haar}
\end{figure}

We now generate a large ensemble of Haar-random pure states, construct reduced transition matrices \(\tau_\mathbb{A}^{1|2}\) as in \cref{UISVDprepost} using random pairs and, for each, compute left-, right-, and bi-Unit-Invariant SVD entropies $S^\mathrm{L}, S^\mathrm{R}$, and $S^\mathrm{B}$. For comparison, we also compute the usual bipartite entanglement entropy $S^\mathrm{E}$ from each single Haar-random pure state \(\vert \psi \rangle \in \mathcal{H}\) in the ensemble, as well as SVD entropy $S^\mathrm{SVD}$, following similar analysis in \cite{Parzygnat:2023avh}. Probability distributions for all these measures are shown in \cref{figure:haar}. We conclude that the scale-invariant distributions have a qualitatively similar character to the standard ones; i.e., they have Gaussian-like shapes, while their \emph{averages} are shifted towards higher values with
\begin{equation}
S^\mathrm E < {S}^\mathrm{SVD} < {S}^\mathrm{L} \approx {S}^\mathrm{R} < {S}^\mathrm{B}.
\end{equation}
\subsection{Unit-Invariant entropies in Chern-Simons theory} \label{sec:Chern-simons}
As another interesting and illustrative setup, we apply the entanglement measures introduced in \cref{sec:entanglement-measures} to Chern-Simons theory, following the line of research that applies and tests ideas from quantum information theory in the realm of Chern-Simons and related systems \cite{Balasubramanian_2017,Balasubramanian:2018por,SO3,Caputa:2024qkk}. In particular, in \cite{Caputa:2024qkk}, we observed that several entanglement entropy measures defined from transition matrices exhibit intricate connections with link complement states in Chern-Simons gauge theory. Specifically, the imaginary part of pseudo entanglement entropy $S^\mathrm{P}$, computed from the complex eigenvalues of the reduced transition matrices in \cref{eq:reducedtransmatrix}, provides a new tool to detect the chirality of the underlying links. This holds for any compact gauge group. For the special case of gauge group $U(1)$, it was further shown that the entanglement measure based on the singular values of the reduced transition matrix, called the \textit{excess SVD entanglement entropy} $\Delta S^\mathrm{SVD}$ (constructed using \cref{eq:svd-eentropy}), acts as a \textit{pseudo-metric} on the space of two-component links. These observations naturally lead us to ask: what is the physical interpretation (or a useful application) of the entanglement entropies derived from the remaining three invariants discussed in this work when applied to link complement states? 

To this end, consider a two-component link $\mathscr{L} = \mathcal{K}_1 \bigcup \mathcal{K}_2$ whose link complement state is given by \cite{Balasubramanian_2017,Balasubramanian:2018por}
\begin{equation}
|\mathscr{L}\rangle = \sum_{i,j} V^{\mathscr{L}}_{ij}\, |i\rangle \otimes |j\rangle,
\end{equation}
where the coefficients \( V^{\mathscr{L}}_{ij} \) are \textit{coloured polynomial invariants} (for any gauge group) associated with the link in representations \( i \) and \( j \). In the matrix language, we represent these coefficients as the entries of the matrix
\begin{equation}
A = (V^{\mathscr{L}}_{ij}).
\end{equation}
In this setup, we show that the operation of taking the connected sum and a framing change are respectively examples of diagonal and unitary operations, upon which appropriate measures of our interest are invariant (analogously as in \cref{rescale}).
\begin{figure}[h!]
    \centering
    \includegraphics[width=\linewidth]{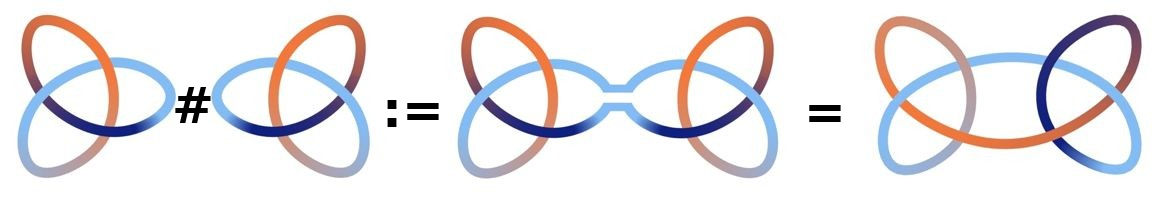}
    \caption{Connected sum of left and right trefoil knots gives a ``granny knot''.}
    \label{fig:granny}
\end{figure}

We begin with the operation of connected sum. Recall that the \textit{connected sum} $\mathcal K \# \mathcal K^\prime$ of two knots $\mathcal K, \mathcal K^\prime$ is defined by first cutting a tiny piece out of each knot so that each one becomes an open string with two loose ends. Then we connect the two knots by joining an end from the first to an end from the second, and joining the remaining two ends, using short connections that do not tangle with anything else. This results in a single closed loop denoted by $\mathcal K \# \mathcal K^\prime$. An example is shown in \cref{fig:granny} using left and right trefoil knots. Recently, this operation has also found a direct, experimentally realised counterpart in real materials \cite{Hall:2025vortexknots}.

Suppose we are given a link $\mathscr{L} = \mathcal{K}_1 \bigcup \mathcal{K}_2$, we perform a connected sum on the first component with a knot \(\mathcal{K}\) and on the second component with a knot \(\mathcal{K}'\), so that we obtain $\mathscr{L}'\coloneq (\mathcal{K}\#\mathcal{K}_1) \bigcup (\mathcal{K}_2\#\mathcal{K}')$. The coloured polynomials (normalised by the unknot) of these knots in representation \( i \) and \( j \) are given by \( V^{\mathcal{K}}_i \) and \( V^{\mathcal{K}'}_j \) respectively. We define the corresponding diagonal operators
\begin{equation}\label{eq:diag-op-CS}
D = \operatorname{diag}\bigl(V^{\mathcal{K}}_0,\, V^{\mathcal{K}}_1,\, V^{\mathcal{K}}_2,\,\dots\bigr)
\quad\text{and}\quad
{D^\prime}^\dagger = \operatorname{diag}\bigl(V^{\mathcal{K}'}_0,\, V^{\mathcal{K}'}_1,\, V^{\mathcal{K}'}_2,\,\dots\bigr).
\end{equation}
Equivalently, in \cref{eq:diag-op-CS} the matrix ${D^\prime}$ is constructed out of the coloured polynomial invariants of the \textit{mirror image} of $\mathcal{K}^\prime$ \cite{Caputa:2024qkk}. The action of taking a connected sum on each component, from the perspective of the coloured polynomial invariants of the link, is equivalent to multiplying the coloured polynomial invariants (normalised by the unknot) of the respective knots on the left and right, while matching the choice of representation. Thus, the new quantum state is in the matrix formulation $A'=DAD'^\dagger$
\begin{equation}
|\mathscr{L}'\rangle = \sum_{i,j} V^{\mathscr{L}'}_{ij} \, |i\rangle \otimes |j\rangle  = \sum_{i,j} V^{\mathcal{K}}_i\, V^{\mathscr{L}}_{ij}\, V^{\mathcal{K}'}_j \, |i\rangle \otimes |j\rangle = \sum_{i,j} A'_{ij}\, |i\rangle \otimes |j\rangle =(D \otimes D')\, |\mathscr{L}\rangle.
\end{equation}
This shows that connected sum with the knots \(\mathcal{K}\) and \(\mathcal{K}'\) on the respective components is equivalent to acting on the initial state with the local operator \( D \otimes D' \). It follows that unit-invariant measures of our interest, which are invariant under diagonal rescalings of this form, are invariant under taking the connected sum.

In turn, we discuss local unitary operations. In the current setup, an interesting class of such operations are changes of \textit{framing} of the component knots. A framing of a knot can be interpreted as the linking number of two boundaries of a ribbon created by thickening that knot trajectory. Formally, a framing is a continuous, nowhere-vanishing normal vector field along the knot, effectively turning it into such a twisted ribbon. The \textit{framing number} is the integer that records how many times this field rotates as one traverses the knot once (with the sign determined by the ambient orientation). In our link-complement states, increasing the framing by $+1$ is implemented by a positive \textit{Dehn twist} on the corresponding boundary torus. If we begin with a two-component link with state $\ket{\mathscr{L}}$, and then vary the framing of the first component knot by $p$ and of the second knot by $q$, then the new link ${\mathscr{L}^\prime}$ encodes the state \cite{Balasubramanian_2017,Leigh:2021trp}
\begin{equation}
 \ket{\mathscr{L}^\prime} = \mathcal{T}^p \otimes \mathcal{T}^q \ket{\mathscr{L}} ,
\end{equation}
where the Dehn twist is effected by the matrices $\mathcal{T}$ (e.g., of the form \cref{eq:modular_ST_matrices} in the case $SU(2)$ Chern-Simons theory that we discuss below). These are exactly the relevant local unitary operators that we have been after. In fact, when we are dealing with compact gauge groups, these $\mathcal T$ matrices are not just unitary but also diagonal, and a change of framing of each component knot leaves all entropies $S^{\mathrm B},S^{\mathrm L},S^{\mathrm R},S^{\mathrm E}$ invariant. 
\begin{figure}[h!]
    \centering
    \includegraphics[width=\linewidth]{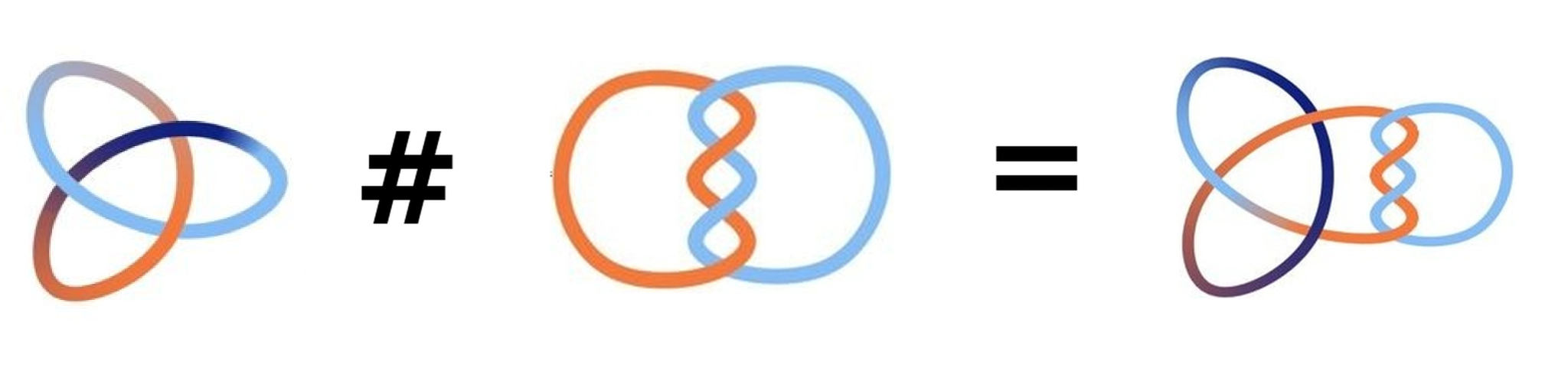}
    \caption{Connected sum of the trefoil knot $3_1$ with $2\N^2_1$} link with $\N=2$, on left side/component resulting in $3_1\#4^2_1$.
    \label{fig:con31-421}
\end{figure}
\paragraph{Example.} As a concrete example, consider $SU(2)$ Chern-Simons theory at \textit{level} $k$, in which case the coloured polynomial invariants are the \textit{coloured Jones polynomials}. We consider $\mathscr{L} = 4^2_1$ link, see \cref{fig:con31-421}, which is the simplest illustrative example in a family of links denoted $2\N^2_1$, whose coloured Jones polynomials and the associated link complement states take the form \cite{Balasubramanian_2017}
\begin{equation}
 |2\N_1^2\rangle = \sum_{i,j}\sum_{l}\left(\mathcal{S}\mathcal{T}^{\N}\mathcal{S}\right)_{0l}\frac{\mathcal{S}_{il}\mathcal{S}_{jl}}{\mathcal{S}_{0l}}|i\rangle \otimes |j\rangle, \quad i,j\in {0,1,\dots,k},
\end{equation}
with \textit{modular matrices}
\begin{equation}\label{eq:modular_ST_matrices}
{\;\mathcal{T}_{i,j}
= e^{\,\pi \sqrt{-1}\,\dfrac{2\,i(i+2)-k}{4\,(k+2)}}\,\delta_{i,j}\;}
, \quad \mathcal{S}_{i,j} = \sqrt{\frac{2}{k+2}}\sin\left(\frac{\pi(i+1)(j+1)}{k+2}\right), 
\end{equation}
which satisfy $\mathcal{S}^2=(\mathcal{ST})^3=\mathbb{I}$. For detailed analysis of this family and large $k$ asymptotics of its entanglement entropies see \cite{Caputa:2024qkk}. Now, to either component of this link we connect the trefoil knot $3_1$ (again see \cref{fig:con31-421}), whose coloured Jones polynomials are given by \cite{Habiro}
\begin{equation}
 V^{3_1}_i=\sum_{l=0}^{i}(-1)^l q^{\frac{l(l+3)}{2}}(q^{\frac{1}{2}} - q^{-\frac{1}{2}})^{2l}\frac{[i+l+1]!}{[i - l]!},\quad [x]=\frac{q^{x/2}-q^{-x/2}}{q^{1/2}-q^{-1/2}},\quad q=e^{\frac{2\pi \sqrt{-1}}{k+2}}. 
\end{equation}
\begin{figure}[h!]
    \centering
    \begin{minipage}{0.49\textwidth}
        \centering
       \includegraphics[width=\textwidth]{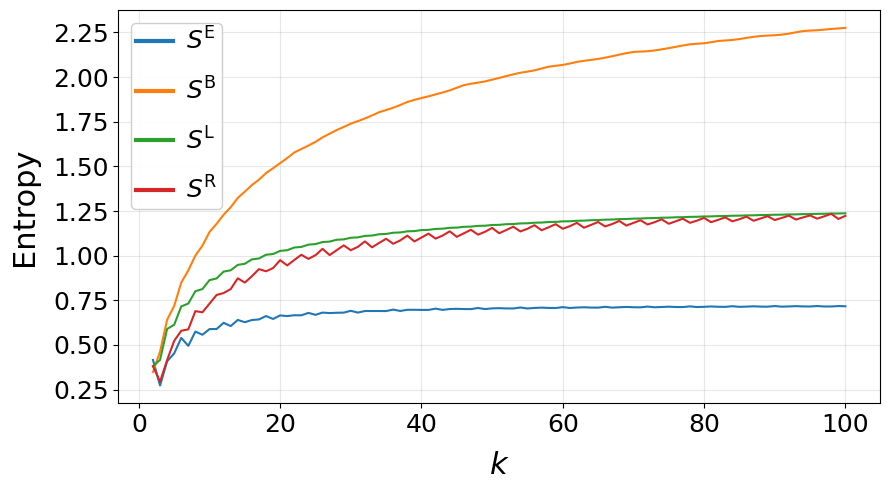}
        \subcaption{$3_1\#4^2_1$}
        \label{31421}
    \end{minipage}
    \hfill
    \begin{minipage}{0.49\textwidth}
        \centering
        \includegraphics[width=\textwidth]{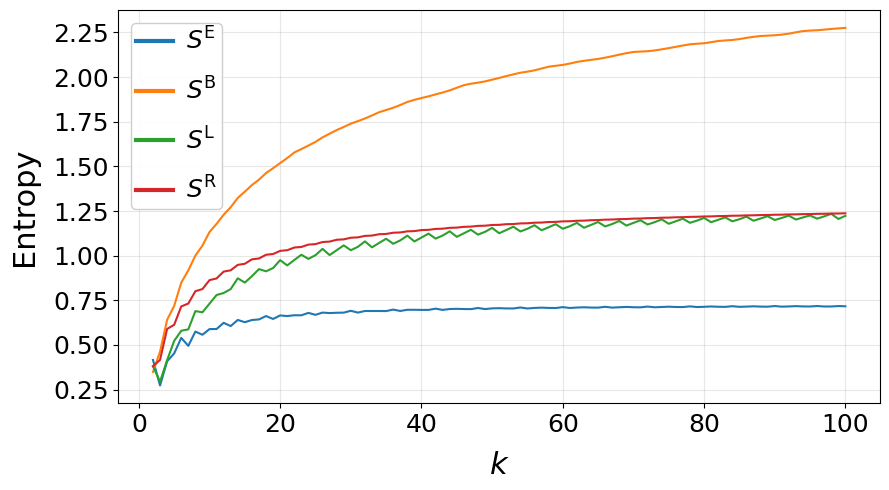}
        \subcaption{$4^2_1\#3_1$}
        \label{42131}
    \end{minipage}
    \caption{Entanglement entropies $S^\mathrm E,S^\mathrm{B},S^\mathrm{L},S^\mathrm{R}$ versus level $k$ of Chern-Simons.}
    \label{fig:5}
\end{figure}

For this setup, in \cref{fig:5} we plot various entropy measures of our interest as defined in \cref{sec:entanglement-measures}. We observe that the $S^\mathrm{B}$ curves in both pictures are invariant; they are equal to the $S^\mathrm{B}$ of the $4^2_1$ link complement state, showing invariance under an arbitrary connected sum on either component of the link. Similarly, the $S^\mathrm{L}$ curve of \cref{31421} and the $S^\mathrm{R}$ curve of \cref{42131} are unaffected by the connected sums on their ``own" sides, and is in fact equal to the $S^\mathrm{L}$/$S^\mathrm{R}$ of the original $4^2_1$ (which are equal to each other in that case). However, we see that the $S^\mathrm{R}$ of \cref{31421} and the $S^\mathrm{L}$ of \cref{42131} are sensitive to, and do indeed detect, the connected sum on the ``other" side. It was shown in \cite{Balasubramanian_2017} for example that usual entanglement entropy $S^\mathrm{E}$ of $\mathcal{K}\#2^2_1$ will not distinguish to which component the knot is connected, but here we see that $S^\mathrm{L}$ and $S^\mathrm{R}$ do in fact detect the ``side" of connected sum in general.

The results in this section hold for any choice of compact gauge group with appropriate choice of coloured polynomials and modular matrices. For example for $SU(N)$ we would use the \textit{coloured HOMFLY-PT polynomials} to construct the link complement states. 
\subsection{UISVD entanglement entropy for Biorthogonal Quantum Mechanics} \label{sec-BQM}
Finally, we study the bi-unit-invariant entanglement entropies in another natural setting: Biorthogonal Quantum Mechanics (BQM). BQM is a generalisation of standard quantum mechanics in which scale transformations act as physical symmetries \cite{Brody:BQM2014}. It is therefore desirable to employ entanglement measures and other observables that remain invariant under such transformations. The unit-invariant measures developed above are ideally suited for this purpose.

We begin by briefly reviewing the essentials of BQM and, in particular, previously studied entanglement measures for biorthogonal states \cite{Brody:BQM2014}, including the standard right-left (RL) construction \cite{Chang:2019nhn,Yang:2024hzt,Herviou:2019yfb}. Here we note that the notation R,L found in the literature has already been used in previous sections to denote the RUI/LUI cases, so in order to avoid a clash, we instead use $\mathcal{R},\mathcal{L}$ to denote the right-left notation of BQM. We then adapt the formalism introduced in the earlier sections to define the BUISVD entanglement entropy in the biorthogonal setting. The key message of this section is that, while the conventional $\mathcal{RL}$ entanglement entropy is generally complex-valued, may take negative values, and can be unbounded, the BUISVD-based alternative is always real, positive, bounded, and fully compatible with the scaling symmetry of BQM. 
\paragraph{Biorthogonal bipartite states.} As the name indicates, the defining feature of BQM is a biorthogonal character of its states \cite{Brody:BQM2014}. We thus consider $\mathcal H = \mathcal H_\mathbb{A} \otimes \mathcal H_\mathbb{B}$ and fix biorthogonal bases on each factor
\begin{equation}
\{|\mathcal{R}^\mathbb{A}_i\rangle\},\ \{\langle \mathcal{L}^\mathbb{A}_i|\} \subset \mathcal H_\mathbb{A},
\qquad
\{|\mathcal{R}^\mathbb{B}_j\rangle\},\ \{\langle \mathcal{L}^\mathbb{B}_j|\} \subset \mathcal H_\mathbb{B},
\end{equation}
satisfying
\begin{equation}
\langle \mathcal{L}^\mathbb{A}_i | \mathcal{R}^\mathbb{A}_{i'}\rangle = \delta_{ii'},
\qquad
\langle \mathcal{L}^\mathbb{B}_j | \mathcal{R}^\mathbb{B}_{j'}\rangle = \delta_{jj'}.
\end{equation}
The corresponding product biorthogonal bases on $\mathcal H$ are
\begin{equation}
|\mathcal{R}^\mathbb{A}_i \mathcal{R}^\mathbb{B}_j\rangle \coloneq |\mathcal{R}^\mathbb{A}_i\rangle \otimes |\mathcal{R}^\mathbb{B}_j\rangle,
\qquad
\langle \mathcal{L}^\mathbb{A}_i \mathcal{L}^\mathbb{B}_j| \coloneq \langle \mathcal{L}^\mathbb{A}_i| \otimes \langle \mathcal{L}^\mathbb{B}_j|.
\end{equation}
A biorthogonal state is a pair $(|\Psi_{\mathcal R}\rangle,\langle\Psi_{\mathcal L}|)$ with expansions
\begin{equation}
|\Psi_{\mathcal R}\rangle
= \sum_{i,j} \Psi_{ij}\,|\mathcal{R}^\mathbb{A}_i \mathcal{R}^\mathbb{B}_j\rangle,
\qquad
\langle\Psi_{\mathcal L}|
= \sum_{k,\ell} \Lambda_{k\ell}\,\langle \mathcal{L}^\mathbb{A}_k \mathcal{L}^\mathbb{B}_\ell|,
\end{equation}
subject to the normalisation condition
\begin{equation}
\langle\Psi_{\mathcal L}|\Psi_{\mathcal R}\rangle
= \sum_{i,j} \Lambda_{ij}\,\Psi_{ij} = 1.
\end{equation}
We collect the right and left coefficients into matrices
$\Psi = (\Psi_{ij})$ and $\Lambda = (\Lambda_{ij})$. The admissible BQM scale transformations are rescalings of the biorthogonal bases \cite{Edvardsson:2022day}
\begin{equation}
|\mathcal{R}^\mathbb{A}_i\rangle \mapsto a_i\,|\mathcal{R}^\mathbb{A}_i\rangle,
\quad
\langle \mathcal{L}^\mathbb{A}_i| \mapsto a_i^{-1}\,\langle \mathcal{L}^\mathbb{A}_i|,
\qquad
|\mathcal{R}^\mathbb{B}_j\rangle \mapsto b_j\,|\mathcal{R}^\mathbb{B}_j\rangle,
\quad
\langle \mathcal{L}^\mathbb{B}_j| \mapsto b_j^{-1}\,\langle \mathcal{L}^\mathbb{B}_j|,
\end{equation}
with all $a_i,b_j \neq 0$. Under this scaling symmetry the coefficient matrices transform as
\begin{equation}
\Psi_{ij} \mapsto \Psi'_{ij} = a_i^{-1} b_j^{-1}\,\Psi_{ij},
\qquad
\Lambda_{ij} \mapsto \Lambda'_{ij} = a_i b_j\,\Lambda_{ij},
\end{equation}
i.e.
\begin{equation}
\Psi \mapsto \Psi' = D_\mathbb{A}^{-1}\Psi D_\mathbb{B}^{-1},
\qquad
\Lambda \mapsto \Lambda' = D_\mathbb{A} \Lambda D_\mathbb{B},
\end{equation}
with $D_\mathbb{A} = \mathrm{diag}(a_i)$ and $D_\mathbb{B} = \mathrm{diag}(b_j)$.
\subsubsection{Biorthogonal entanglement entropies}
For a biorthogonal pure state $(|\Psi_{\mathcal R}\rangle,\langle\Psi_{\mathcal L}|)$ the analogue of a pure-state density operator is \cite{Chang:2019nhn,Yang:2024hzt,Herviou:2019yfb}
\begin{equation}
\rho^{\mathcal{R|L}} \coloneq |\Psi_{\mathcal R}\rangle\langle\Psi_{\mathcal L}|,
\end{equation}
which reproduces BQM expectation values via $\langle O\rangle = \mathrm{Tr}(\rho^{\mathcal{R|L}} O)/\mathrm{Tr}\,\rho^{\mathcal{R|L}}$ for operator $O$. From the perspective of our work, we will simply think about it as a transition matrix (i.e., weak measurement) and extract its unit-invariant singular values the same way as we defined for transition matrices in \cref{UISVDprepost}. The biorthogonal reduced operator on $\mathbb A$ is defined by the partial trace over $\mathbb B$. In the product biorthogonal basis 
\be
\operatorname{Tr}_\mathbb B(O)
=\sum_{j}\Bigl(I_\mathbb A\otimes\langle \mathcal L_\mathbb B^{\,j}|\Bigr)\,O\,\Bigl(I_\mathbb A\otimes|\mathcal R_\mathbb B^{\,j}\rangle\Bigr).
\ee 
Equivalently, for matrix elements \(\bigl(\operatorname{Tr}_\mathbb B O\bigr)_{ik}
=\sum_{j}\,\langle \mathcal L_\mathbb A^{\,i} \mathcal L_\mathbb B^{\,j}\,|\,O\,|\,\mathcal R_\mathbb A^{\,k} \mathcal R_\mathbb B^{\,j}\rangle\). Thus
\begin{equation}
\rho_\mathbb{A}^{\mathcal{R|L}} \coloneq \mathrm{Tr}_\mathbb{B}\,\rho^{\mathcal{R|L}}=\sum_{i,j,k}\psi_{ij}\Lambda_{kj}|\mathcal{R}^\mathbb{A}_{i}\rangle \langle \mathcal{L}^\mathbb{A}_{k}|, \label{rhoRLA}
\end{equation}
or in the matrix form $\rho_\mathbb{A}^{\mathcal{R|L}} = \Psi\,\Lambda^{\mathsf T}$. Under a BQM scale transformation the reduced operator transforms by diagonal similarity
\begin{equation}\label{eq:diag-similarity}
\rho_\mathbb{A}^{\mathcal{R|L}} \mapsto
\rho_\mathbb{A}^{\mathcal{R|L}\,'}
= D_\mathbb{A}^{-1}\,\rho_\mathbb{A}^{\mathcal{R|L}}\,D_\mathbb{A},
\end{equation}
so its eigenvalues are invariant under the admissible rescalings. 

Let $\lambda_k$ denote the eigenvalues of $\rho_\mathbb{A}^{\mathcal{R|L}}$. With the normalisation $\langle\Psi_{\mathcal L}|\Psi_{\mathcal R}\rangle=1$ one has
$\mathrm{Tr}\,\rho_\mathbb{A}^{\mathcal{R|L}} = \sum_k \lambda_k = 1$ but, the same as for transition matrices $\tau^{1|2}_\mathbb{A}$, the $\lambda_k$ can be complex because $\rho_\mathbb{A}^{\mathrm{\mathcal{R|L}}}$ is non-Hermitian. The standard ``biorthogonal entanglement entropy'' used in the BQM literature is the \textit{RL entropy} \cite{Chang:2019nhn,Yang:2024hzt,Herviou:2019yfb} 
\begin{equation}
S^\mathrm{E}_{\mathcal{RL}}(\Psi)
\coloneq - \sum_k \lambda_k \log\lambda_k=S^\mathrm{P}_\mathbb{A}(\rho_\mathbb{A}^{\mathcal{R|L}}),
\end{equation}
which is simply the pseudo entropy \cref{eq:PseudoEntropy} of $\rho_\mathbb{A}^{\mathcal{R|L}}$. This quantity is scale-invariant (since it depends only on the spectrum $\{\lambda_k\}$) but generically complex and not bounded by $\log\dim\mathcal H_\mathbb{A}$. A variant which we denote by $S^\mathrm{E}_{\mathrm{TTC}}(\Psi)$ \cite{Tu:2021tf,Yang:2024ebm}, uses the same spectrum $\{\lambda_k\}$ but replaces $\log\lambda_k$ by $\log|\lambda_k|$ in order to remove branch-cut ambiguities while retaining sensitivity to non-Hermiticity. This TTC entropy is again scale-invariant but, like $S^\mathrm{E}_{\mathcal{RL}}$, is not a non-negative, bounded entropy. In view of these deficiencies, we now see that the UISVD constructions introduced in this paper may prove to be useful.
\paragraph{UISVD entropy in BQM.}\,
We now evaluate the BUISVD entanglement entropy from \cref{UISVDEntropy} directly for the reduced operator $\rho_\mathbb{A}^{\mathcal{R|L}}$ as
\begin{equation}\label{eq:bqm-uisvd-entropy}
S^{\mathrm B}_\mathbb{A}(\rho_\mathbb{A}^{\mathcal{R|L}})
= -\sum_k \hat{\sigma}^\mathrm{B}_k \log \hat{\sigma}^\mathrm{B}_k, \qquad \hat{\sigma}^\mathrm{B}_k = \frac{\sigma^\mathrm{B}_k(\rho_\mathbb{A}^{\mathcal{R|L}})}{\sum_i \sigma^\mathrm{B}_i(\rho_\mathbb{A}^{\mathcal{R|L}})}, 
\end{equation}
where $\hat{\sigma}^\mathrm{B}_k$ are bi-unit-invariant singular values, as in \cref{sigmaU}, of $\rho_\mathbb{A}^{\mathcal{R|L}}$. Since $\sigma_k$ are non-negative and not all zero, the $\hat{\sigma}_k$ form a bona fide probability vector, so
\begin{equation}
0 \le S^{\mathrm B}_\mathbb{A}(\rho_\mathbb{A}^{\mathcal{R|L}})
\le \log r,
\end{equation}
where $r=\mathrm{rank}\,\rho_\mathbb{A}^{\mathcal{R|L}}
\le \text{min}(\dim\mathcal H_\mathbb{A},\dim\mathcal H_\mathbb{B})$. 
Moreover, $S^{\mathrm B}_\mathbb{A}(\rho_\mathbb{A}^{\mathcal{R|L}})$ is invariant under all admissible BQM rescalings of the biorthogonal bases, in consequence of the invariance of the BUI singular values of $\rho_\mathbb{A}^{\mathcal{R|L}}$, as seen in \cref{eq:diag-similarity}. This quantity, however, is asymmetric in the indices and depends on the choice of subsystem to be traced out. 
\paragraph{Example.}
To illustrate the difference between $S^\mathrm{E}_{\mathcal{RL}}$, $S^\mathrm{E}_{\mathrm{TTC}}$, and $S^{\mathrm B}$, consider a two-qubit BQM pure state with a simple analytic dependence on a real parameter $t$. Take $\dim\mathcal H_\mathbb{A} = \dim\mathcal H_\mathbb{B} = 2$ with indices $i,j\in\{0,1\}$ and product biorthogonal basis
$\{|\mathcal{R}^\mathbb{A}_i \mathcal{R}^\mathbb{B}_j\rangle\}$ and dual basis $\{\langle \mathcal{L}^\mathbb{A}_i \mathcal{L}^\mathbb{B}_j|\}$. For each real $t$, define the right and left states by
\begin{equation}
\begin{aligned}
|\Psi_{\mathcal R}(t)\rangle
&= |\mathcal{R}^\mathbb{A}_0 \mathcal{R}^\mathbb{B}_0\rangle
 + |\mathcal{R}^\mathbb{A}_1 \mathcal{R}^\mathbb{B}_1\rangle, \\
\langle\Psi_{\mathcal L}(t)|
&= \tfrac{1}{2}\,\langle \mathcal{L}^\mathbb{A}_0 \mathcal{L}^\mathbb{B}_0|
 - \tfrac{t}{2}\,\langle \mathcal{L}^\mathbb{A}_0 \mathcal{L}^\mathbb{B}_1|
 + \bigl(-\tfrac{t}{2} + i t\bigr)\,\langle \mathcal{L}^\mathbb{A}_1 \mathcal{L}^\mathbb{B}_0|
 + \tfrac{1}{2}\,\langle \mathcal{L}^\mathbb{A}_1 \mathcal{L}^\mathbb{B}_1|,
\end{aligned}
\end{equation}
so that $\langle\Psi_{\mathcal L}(t)|\Psi_{\mathcal R}(t)\rangle=1$ for all $t$.
In the product basis the corresponding coefficient matrices are
\begin{equation}
\Psi =
\begin{pmatrix}
1 & 0\\
0 & 1
\end{pmatrix},
\qquad
\Lambda = \frac{1}{2}
\begin{pmatrix}
1 & -t\\[2pt]
-t + 2i t & 1
\end{pmatrix}.
\end{equation}
In fact, any pair of $\Psi,\Lambda$ which satisfies $\Tr(\Lambda^\mathrm T\Psi)=1$ is a valid BQM state. The biorthogonal reduced operator on $\mathbb A$ is
\begin{equation}
\rho_\mathbb{A}^{\mathcal{R|L}}(t)
= \Psi\,\Lambda^{\mathsf T}
=\frac{1}{2}
\begin{pmatrix}
1 & -t + 2i t\\[2pt]
-t & 1
\end{pmatrix},
\end{equation}
whose eigenvalues and BUI singular values can be explicitly and directly computed using \cref{tab:invariants} and are used to generate the plots in \cref{BQM state plot}. These plots explicitly illustrate that BUISVD entropy $S^{\mathrm B}$ is real and bounded, contrary to $S^\mathrm{E}_{\mathcal{RL}}$ and $S^\mathrm{E}_{\mathrm{TTC}}$.
\begin{figure}[h!]
    \centering
    \includegraphics[width=0.7\linewidth]{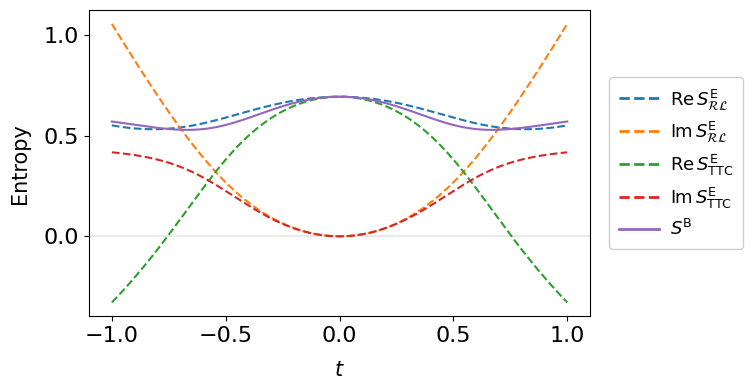}
    \caption{Real and imaginary parts of $S^\mathrm{E}_{\mathcal{RL}}$ and $S^\mathrm{E}_{\mathrm{TTC}}$ together with the BUISVD entropy $S^{\mathrm B}$ as functions of the parameter $t$.}
    \label{BQM state plot}
\end{figure}
\subsubsection{UISVD entropies for transition matrices}
We now turn to BUISVD entanglement measures for pre/post-selected pairs of biorthogonal states and the actual transition matrices in BQM, following recent work \cite{Lu:2025myv,Chen:2025ibe} on entropies for transition matrices in non-Hermitian systems \cite{Tu:2021tf}, and also extending the approach of \cref{UISVDprepost}.
\paragraph{Biorthogonal transition operators.}
On the same bipartite Hilbert space $\mathcal H=\mathcal H_\mathbb{A}\otimes\mathcal H_\mathbb{B}$ with product biorthogonal bases $\{|\mathcal{R}^\mathbb{A}_i \mathcal{R}^\mathbb{B}_j\rangle\}$ and $\{\langle \mathcal{L}^\mathbb{A}_i \mathcal{L}^\mathbb{B}_j|\}$ as before, consider two biorthogonal pure states
\begin{equation}
(|\Psi^{(1)}_{\mathcal{R}}\rangle,\langle\Psi^{(1)}_{\mathcal{L}}|),
\qquad
(|\Psi^{(2)}_{\mathcal{R}}\rangle,\langle\Psi^{(2)}_{\mathcal{L}}|),
\end{equation}
each normalised so that $\langle\Psi^{(k)}_{\mathcal{L}}|\Psi^{(k)}_{\mathcal{R}}\rangle=1$ for $k=1,2$. In the product basis we write
\begin{equation}
|\Psi^{(1)}_{\mathcal{R}}\rangle
= \sum_{i,j}\Psi^{(1)}_{ij}\,|\mathcal{R}^\mathbb{A}_i \mathcal{R}^\mathbb{B}_j\rangle,
\qquad
\langle\Psi^{(2)}_{\mathcal{L}}|
= \sum_{p,q}\Lambda^{(2)}_{pq}\,\langle \mathcal{L}^\mathbb{A}_p \mathcal{L}^\mathbb{B}_q|,
\end{equation}
with coefficient matrices $\Psi^{(1)}=(\Psi^{(1)}_{ij})$ and $\Lambda^{(2)}=(\Lambda^{(2)}_{pq})$. The corresponding right-left transition matrix is
\begin{equation}
\tau^{1|2}
=|\Psi^{(1)}_{\mathcal{R}}\rangle\langle\Psi^{(2)}_{\mathcal{L}}|.
\end{equation}
The reduced transition matrix on $\mathbb A$ is defined by the biorthogonal partial trace over $\mathbb B$,
\begin{equation}\label{BQM-trans-matrix}
\tau^{1|2}_\mathbb{A}
=\mathrm{Tr}_\mathbb{B}\, (\tau^{1|2}),\qquad
(\tau^{1|2}_\mathbb{A})_{ip}
=\langle \mathcal{L}^\mathbb{A}_i|\tau^{1|2}_\mathbb{A}|\mathcal{R}^\mathbb{A}_p\rangle
=\sum_j \Psi^{(1)}_{ij}\,\Lambda^{(2)}_{pj},
\end{equation}
and so as a matrix $\tau^{1|2}_\mathbb{A}=\Psi^{(1)}(\Lambda^{(2)})^{\mathsf T}, $ obtained from the same biorthogonal partial trace rule as in the pure-state case. Under the admissible BQM rescalings the coefficients transform exactly as for a single biorthogonal state. As in the pure-state case, the $\mathbb B$-scale cancels and the surviving $\mathbb A$-scale acts on $\tau_\mathbb{A}$ by diagonal similarity, i.e., $\tau_\mathbb{A} \mapsto \tau^\prime_\mathbb{A}=D_\mathbb{A}^{-1}\tau_\mathbb{A}D_\mathbb{A}$, just like in \cref{eq:diag-similarity}. In particular, the eigenvalues of $\tau_\mathbb{A}$ are invariant under all admissible biorthogonal rescalings, and so are BUI-singular values of $\tau_\mathbb{A}$.
\paragraph{Pseudo entropy.}
To compute the pseudo-entropy \cite{Nakata:2020luh} for the pre/post pair of states $(|\Psi^{(1)}\rangle,|\Psi^{(2)}\rangle)$ along the bipartition $\mathbb A|\mathbb B$ in BQM setting \cite{Chen:2025ibe}, we construct the transition-matrix as in \cref{BQM-trans-matrix}, and denote the set of its normalised eigenvalues by $\{\hat{\lambda}_k\}$. Then the pseudo entropy is simply given by \cref{eq:PseudoEntropy}. Since $\tau^{1|2}_\mathbb{A}$ transforms by diagonal similarity, the spectrum $\{\lambda_k\}$ is invariant under all admissible BQM scales and $S^{\mathrm P}_\mathbb{A}$ is diagonally scale-free. However, the eigenvalues $\lambda_k$ are typically complex and not constrained to lie in $[0,1]$, so $S^{\mathrm P}_\mathbb{A}$ is generically complex and its real part is not bounded by $\log\dim\mathcal H_\mathbb{A}$. 
\paragraph{BUISVD entropy.} To obtain a real, non-negative, and bounded transition entropy that respects the BQM rescaling symmetry, we replace the eigenvalues by the BUI singular values of $\tau_\mathbb{A}$. Let $\sigma^{\mathrm B}_k(\tau^{1|2}_\mathbb{A})$ denote the BUI singular values of $\tau^{1|2}_\mathbb{A}$. These are invariant under arbitrary diagonal left and right scalings. Then we compute the BUISVD entropy \cref{UISVDEntropy} in BQM simply using the BUI singular values of $\tau^{1|2}_\mathbb{A}$ \cref{BQM-trans-matrix}. By construction $S^{\mathrm B}_\mathbb A(\tau^{1|2}_\mathbb{A})$ is invariant under all admissible BQM scales on $\mathbb A$ and $\mathbb B$, and satisfies the bounds
\begin{equation}
0\le S^{\mathrm B}
\le\log r,
\end{equation}
where $r=\mathrm{rank}\,\tau_\mathbb{A} \le\min(\dim\mathcal H_\mathbb{A},\dim\mathcal H_\mathbb{B})$. Moreover $ S^{\mathrm B}=0$ if and only if $\tau^{1|2}_\mathbb{A}$ has rank one. In particular, if the transition operator factorises across the $\mathbb A|\mathbb B$ with a rank-one factor on $\mathbb A$, then $\tau^{1|2}_\mathbb{A}$ has rank one and $ S^{\mathrm B}=0$. In this sense $S^{\mathrm B}$ is a diagonally scale-free analogue of the SVD-based entropies.
\paragraph{Example.}
Consider two qubits with Hilbert space \(\mathcal H = \mathbb C^2\otimes \mathbb C^2\), along with Pauli matrices $\sigma^\alpha$ and define their actions on the two sites by
\begin{equation}
\sigma^\alpha_1 = \sigma^\alpha\otimes \mathbb I_2,\qquad
\sigma^\alpha_2 = \mathbb I_2\otimes\sigma^\alpha.
\end{equation}
An example of a PT-symmetric non-Hermitian XX Hamiltonian with a staggered imaginary field \cite{Korff_2008,Suzuki:2016xx} is
\begin{equation}
H(g,h)
= \sigma^x_1\sigma^x_2
+ i g \bigl(-\sigma^z_1+\sigma^z_2\bigr)
+ h\bigl(\sigma^x_1+\sigma^x_2\bigr),
\qquad g,h\in\mathbb R,\quad |h|\ll1 .
\end{equation}
For $h=0$ this reduces to the unperturbed model with spectrum $E_n(g)\in\{\pm1,\pm\sqrt{1-4g^2}\}$, while for small $h\neq0$ the Hamiltonian remains PT-symmetric but the eigenvalues deform smoothly. For each such $(g,h)$ we biorthogonally diagonalise $H(g,h)$
\begin{equation}
H(g,h)\,|\Psi^{(n)}_{\mathcal{R}}(g,h)\rangle
= E_n(g,h)\,|\Psi^{(n)}_{\mathcal{R}}(g,h)\rangle,
\qquad
\langle\Psi^{(n)}_{\mathcal{L}}(g,h)|\,H(g,h)
= \langle\Psi^{(n)}_{\mathcal{L}}(g,h)|\,E_n(g,h),
\end{equation}
and normalise each eigenpair so that $\langle\Psi^{(n)}_{\mathcal{L}}(g,h)|\Psi^{(n)}_{\mathcal{R}}(g,h)\rangle=1$. Fix an eigenbranch $n$ at a reference point and follow it continuously in $g$ (for fixed $h$); then from any two values $g_1,g_2$ we obtain two eigenpairs $(|\Psi^{(n)}_{\mathcal{R}}(g_1,h)\rangle,\langle\Psi^{(n)}_{\mathcal{L}}(g_1,h)|)$ and $(|\Psi^{(n)}_{\mathcal{R}}(g_2,h)\rangle,\langle\Psi^{(n)}_{\mathcal{L}}(g_2,h)|)$. We regard these as the correlated pre- and post-selected BQM states for each pair of $g$ values. With the bipartition $\mathbb A|\mathbb B$, where $\mathbb A$ is the first qubit and $\mathbb B$ the second, we form the transition operator
\begin{equation}
\tau^{1|2}=|\Psi^{(n)}_{\mathcal{R}}(g_1,h)\rangle\langle\Psi^{(n)}_{\mathcal{L}}(g_2,h)|,
\end{equation}
and its reduced transition matrix $\tau^{1|2}_\mathbb A=\mathrm{Tr}_\mathbb B\, (\tau^{1|2})$. The corresponding pseudo entropy $S^{\mathrm P}$ and BUISVD transition entropy $ S^{\mathrm B}$ are obtained from the eigenvalues and BUI-singular values of $\tau^{1|2}_\mathbb A$ respectively. Since this $\tau^{1|2}_\mathbb A$ is a $2\times2$ matrix, the entropies can be calculated directly using \cref{tab:invariants}, and they are plotted in \cref{fig:BQM-transition} with $h=0.05$. When $h=0$ the chosen branch lives in a two-dimensional sector and $\tau^{1|2}_\mathbb A$ is diagonal, so $S^{\mathrm B}$ saturates and has constant value $\log 2$.
\begin{figure}[h!]
\centering
    \includegraphics[width=\linewidth]{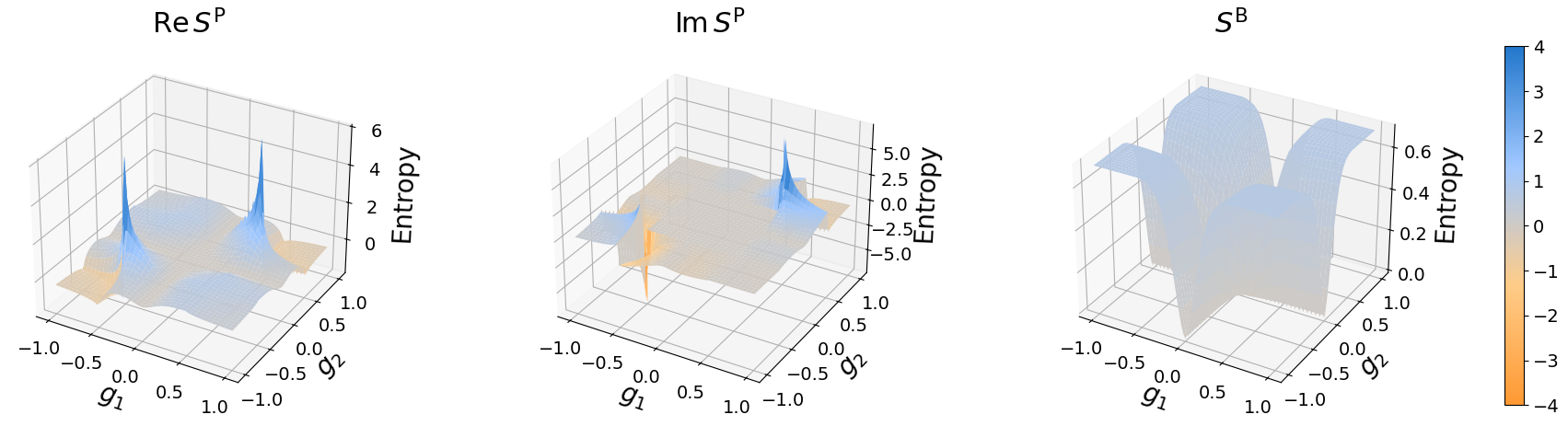}
    \caption{The pre- and post-selected states are taken from a continuous eigenbranch of the two-qubit Hamiltonian, evaluated at $g_1$ and $g_2$ respectively. When $g_1 = g_2$ the states are the same, and thus $S^\mathrm P = S^\mathrm{E}_{\mathcal{RL}}$ and BUISVD entropy $S^{\mathrm B}$ equals the usual BUISVD entanglement entropy of state in \cref{eq:bqm-uisvd-entropy}.}
    \label{fig:BQM-transition}
\end{figure}

\section{Conclusions and outlook}\label{Conclusions}
The central aim of this work was to introduce unit-invariant singular values into quantum information theory and to use them to define generalised von Neumann-type entropies. After reviewing the basic construction and illustrating it with simple examples, we defined left-, right-, and bi-unit-invariant entanglement entropies for both quantum states and transition matrices. We computed these quantities explicitly and demonstrated their utility across a range of physically diverse and relevant examples, including random matrices, Chern-Simons link-complement states, and Biorthogonal Quantum Mechanics. Our selection of examples was deliberately non-exhaustive and intended primarily as a proof of concept.

We expect these mathematically well-defined tools and entanglement constructions to admit many interesting and non-trivial applications in current developments across several areas of physics and mathematics. For instance, there has been significant recent interest in classifying phases and symmetries of general mixed states, notably via SymTFT-based approaches \cite{Schafer-Nameki:2025fiy}. Since these constructions proceed through purifications and the study of their structural properties, our invariant-based entropies may offer a novel perspective on this problem. Moreover, it will be important to clarify the geometric interpretation of UISVD entropies in the context of (A)dS/CFT. As non-Hermitian quantum mechanics is beginning to play an increasingly prominent role in these settings, we anticipate that UISVD entropies will prove particularly useful there as well; see, for example, \cite{Harper:2025lav}. Further, interesting properties of other types of entropies based on rescaled singular values were discussed e.g., in \cite{PhysRevA.87.032308}. On the other hand, various types of singular values underlying our work should be of interest in other contexts in random matrix theory. For example, the normalisations that arise in our diagonal scale functions are similar to those that arise in the Sinkhorn algorithm and related bistochastic matrices \cite{Cappellini_2009,Idel2016MatrixScaling} and deserve further scrutiny. 

In many other settings, diagonal rescalings reflect a genuine freedom of units, normalisation, or residual gauge, rather than a numerical artefact. Such scalings appear across a broad range of problems such as multi-channel and multiterminal transport and hydrodynamics \cite{Lucas2015HydrodynamicTransport,BrandnerSeifert2013Multiterminal,JacquodWhitneyMeairButtiker2012Tenfold,LucasDavisonSachdev2016WeylHydro}, tensor-network and string-net constructions where $F$-data is defined only up to gauge choices \cite{LevinWen2004StringNet,LanWen2013BulkEdgeStringNet,FrancuzLootensVerstraeteDziarmaga2021VariationalMPO}, correlator-based spectroscopy where operator normalisations are conventional \cite{DetmoldEndres2015SignalNoise,BlossierDellaMorteVonHippelMendesSommer2009GEVP}, and calibration problems with per-sensor gain ambiguities \cite{Smirnov2011RIME2}. Since the unit-consistent inverse is designed to respect diagonal changes of units, our diagonal-invariant UISVD entropy diagnostics provide a concise, convention-independent way to summarise ``spectral'' structure in these contexts. We leave analysis of all these ideas for future work.
\paragraph{Note:} All data generated for the various plots in this work and the associated code are publicly available in \href{https://doi.org/10.58132/YUXFDX}{10.58132/YUXFDX} and comply with the FAIR principles.
\acknowledgments
We would like to thank Arindam Bhattacharjee, Nils Carqueville, Giuseppe Di Giulio, Pedram Karimi, Mi{\l}osz Panfil, Arani Paul, Souradeep Purkayastha, Pichai Ramadevi, Marko Stošić, Jeffrey Uhlmann, and Karol {\.Z}yczkowski for useful comments or email conversations. We would also like to thank Anuvab Kayaal for the knot/link graphics. This work was supported by the NCN Sonata Bis 9 grant no. 2019/34/E/ST2/00123 and OPUS grant no. 2022/47/B/ST2/03313 funded by the National Science Centre, Poland. PC is supported by the ERC Consolidator grant (number: 101125449/acronym: QComplexity). Views and opinions expressed are, however, those of the authors only and do not necessarily reflect those of the European Union or the European Research Council. Neither the European Union nor the granting authority can be held responsible for them.
\appendix
\section{Technical details on Unit-Invariant Singular Values}\label{sec:2x2matrix-derivation}
In this appendix we collect some explicit results used in computations in the paper. First, we derive unit-invariant singular values, and for the sake of completeness also eigenvalues and ordinary singular values, for any complex $2\times 2$ matrix 
\begin{equation}
A = \begin{pmatrix}
a_1 & a_2 \\
a_3 & a_4
\end{pmatrix}, \qquad a_i\neq0.
\end{equation}
We will also denote general $2\times2$ diagonal and unitary matrices by
\begin{eqnarray}
D=\left(
\begin{array}{cc}
 \delta_1 & 0 \\
 0 & \delta_2 \\
\end{array}
\right)\,,\qquad U=e^{i\alpha}\left(
\begin{array}{cc}
 e^{i \beta } \cos (\theta ) & e^{i \gamma } \sin (\theta ) \\
 -e^{-i \gamma } \sin (\theta ) & e^{-i \beta } \cos (\theta ) \\
\end{array}
\right)\,,
\end{eqnarray}
where $\delta_1,\delta_2\in\mathbb{C}\backslash\{0\}$ and $\alpha,\beta,\gamma,\theta\in\mathbb{R}$, such that
\begin{eqnarray}
DAU=\left(
\begin{array}{cc}
 \delta_1(a_1 e^{i (\alpha +\beta) } \cos (\theta )-a_2 e^{i (\alpha -\gamma) }
 \sin (\theta)) & \delta_1(a_1 e^{i (\alpha +\gamma) } \sin (\theta )+a_2 e^{i(
 \alpha -\beta) } \cos (\theta ) \\
 \delta_2(a_3 e^{i (\alpha +\beta) } \cos (\theta )-a_4 e^{i (\alpha -\gamma) }
 \sin (\theta ) & \delta_2(a_3 e^{i (\alpha +\gamma) } \sin (\theta )+a_4 e^{i(
 \alpha -\beta) } \cos (\theta )) \\
\end{array}\right),\nonumber\\\label{eq:DAU}
\end{eqnarray}
\begin{eqnarray}
UAD=\left(
\begin{array}{cc}
 \delta_1 \left(a_1 e^{i (\alpha +\beta) } \cos (\theta )+a_3 e^{i (\alpha +\gamma) } \sin
 (\theta )\right) & \delta_2 \left(a_2 e^{i (\alpha +\beta) } \cos (\theta )+a_4 e^{i
 (\alpha +\gamma) } \sin (\theta )\right) \\
 \delta_1 \left(a_3 e^{i (\alpha -\beta) } \cos (\theta )-a_1 e^{i (\alpha -\gamma) } \sin
 (\theta )\right) & \delta_2 \left(a_4 e^{i (\alpha -\beta) } \cos (\theta )-a_2 e^{i
 (\alpha -\gamma) } \sin (\theta )\right) 
\end{array}\right),\nonumber\\ \label{eq:UADp}
\end{eqnarray}
\begin{eqnarray}
DAD'=\left(
\begin{array}{cc}
 \delta_1\delta'_{1}a_1 & \delta_1\delta'_{2}a_2 \\
 \delta_2\delta'_{1}a_3 & \delta_2\delta'_{2}a_4 \\
\end{array}\right)\,.\label{eq:DADp}
\end{eqnarray}
\paragraph{Eigenvalues:} The eigenvalues of \(A\) are found from
\begin{equation}
\det(A-\lambda I)= (a_1-\lambda)(a_4-\lambda)-a_2a_3=0,
\end{equation}
yielding
\begin{equation}\label{eq:eigen2x2}
\lambda_{\pm} = \frac{a_1+a_4\pm\sqrt{(a_1-a_4)^2+4a_2a_3}}{2}.
\end{equation}
\paragraph{Singular Values:} The singular values of  $A$ are the square roots of the eigenvalues of
\begin{equation}
A^\dagger A = \begin{pmatrix}
|a_1|^2 + |a_3|^2 & \overline{a_1}a_2 + \overline{a_3}a_4 \\
a_1\overline{a_2} + a_3\overline{a_4} & |a_2|^2 + |a_4|^2
\end{pmatrix}.
\end{equation}
Since
\begin{equation}
\operatorname{tr}(A^\dagger A) = |a_1|^2 + |a_2|^2 + |a_3|^2 + |a_4|^2, \quad \det(A^\dagger A) = |a_1a_4 - a_2a_3|^2,
\end{equation}
the eigenvalues of \(A^\dagger A\) are obtained using \cref{eq:eigen2x2} as \(\lambda^\prime_{\pm}\), which gives us the singular values of \(A\) as \(\sigma_\pm=\sqrt{\lambda^\prime_{\pm}}\)
\begin{equation}\label{eq:singularvalues2x2}
\sigma_{\pm} = \sqrt{\frac{|a_1|^2 + |a_2|^2 + |a_3|^2 + |a_4|^2 \pm \sqrt{(|a_1|^2 + |a_2|^2 + |a_3|^2 + |a_4|^2)^2 - 4|a_1a_4 - a_2a_3|^2}}{2}}.
\end{equation}
\paragraph{LUI-singular values:}
The left diagonal scale function matrix in \cref{eq:LDSF} is defined using the Euclidean norm for each row
\begin{equation}
D^\mathrm{L}{(A)} = \text{diag} \left( \frac{1}{\sqrt{|a_1|^2 + |a_2|^2}}, \frac{1}{\sqrt{|a_3|^2 + |a_4|^2}} \right).
\end{equation}
The balanced matrix \( A^\mathrm{L} \) is given by
\begin{equation}
A^\mathrm{L} = D^\mathrm{L} \cdot A = 
\begin{bmatrix} 
\frac{a_1}{\sqrt{|a_1|^2 + |a_2|^2}} & \frac{a_2}{\sqrt{|a_1|^2 + |a_2|^2}} \\
\frac{a_3}{\sqrt{|a_3|^2 + |a_4|^2}} & \frac{a_4}{\sqrt{|a_3|^2 + |a_4|^2}}
\end{bmatrix}.
\end{equation}
The singular values of the balanced matrix \( A^\mathrm{L} \), which are the left unit-invariant singular values \( \sigma^\mathrm{L}_{\pm} \) of \( A \), are (using \cref{eq:singularvalues2x2})
\begin{equation}\label{eq:lui-2x2}
\sigma^\mathrm{L}_{\pm} = \sqrt{ 1 \pm \sqrt{1 - \frac{|a_1a_4 - a_2a_3|^2}{(|a_1|^2 + |a_2|^2)(|a_3|^2 + |a_4|^2)}}}.
\end{equation}
We can verify that these values are invariant under the transformation given in \cref{eq:DAU}. In fact, we can prove more generally that for any matrix $A\in\mathbb C^{m\times n}$ with any nonsingular diagonal \(D\in\mathbb C^{m\times m}\) and any unitary \(U\in\mathbb C^{n\times n}\)
\be
\sigma^{\mathrm L}(A)=\sigma^{\mathrm L}(DAU).
\ee
\begin{proof}
Let \(B\coloneq DAU\). Row \(i\) of \(B\) equals
\be
B(i,:)=d_i\,A(i,:)\,U.
\ee
Now take Euclidean norms and use the facts that \(\|c x\|_2=|c|\|x\|_2\) and that \(\|xU\|_2=\|x\|_2\) for unitary \(U\),
\be
\|B(i,:)\|_2=\|d_i\,A(i,:)\,U\|_2
=|d_i|\,\|A(i,:)\,U\|_2
=|d_i|\,\|A(i,:)\|_2
=|d_i|\,r_i.
\ee
So if \(r_i>0\), then \((D^{\mathrm L}(B))_{ii}=1/(|d_i|r_i)\). Using the diagonal expression above,
\be
B^{\mathrm L}=D^{\mathrm L}(B)\,B
=\mathrm{diag}\Big(\tfrac{1}{|d_i|r_i}\Big)\,\mathrm{diag}(d_i)\,A\,U.
\ee
Factor the product of the two diagonal matrices
\be
\mathrm{diag}\Big(\tfrac{1}{|d_i|r_i}\Big)\,\mathrm{diag}(d_i)
=
\underbrace{\mathrm{diag}\Big(\tfrac{d_i}{|d_i|}\Big)}_{\coloneq \Phi\ \text{diagonal unitary}}
\;\underbrace{\mathrm{diag}\Big(\tfrac{1}{r_i}\Big)}_{=D^{\mathrm L}(A)},
\ee
and the \(r_i=0\) case is handled analogously with \((D^{\mathrm L}(B))_{ii}=1=(D^{\mathrm L}(A))_{ii}\) from the definition. Therefore
\be
B^{\mathrm L}=\Phi\,D^{\mathrm L}(A)\,A\,U=\Phi\,A^{\mathrm L}\,U.
\ee
\(\Phi\) is unitary (diagonal with entries of modulus \(1\)), and \(U\) is unitary, hence using property of usual singular values
\be
\sigma(B^{\mathrm L})=\sigma(\Phi A^{\mathrm L}U)=\sigma(A^{\mathrm L}).
\ee
By definition \(\sigma^{\mathrm L}(A)=\sigma(A^{\mathrm L})\) and \(\sigma^{\mathrm L}(B)=\sigma(B^{\mathrm L})\). Thus \(\sigma^{\mathrm L}(DAU)=\sigma^{\mathrm L}(A)\). An entirely analogous argument holds for RUISVD using columns instead of rows.
\end{proof}
\paragraph{RUI-singular values:} The right diagonal scale function matrix \cref{eq:RDSF} is defined using the Euclidean norm for each column
\begin{equation}
D^\mathrm{R}{(A)} = \text{diag} \left( \frac{1}{\sqrt{|a_1|^2 + |a_3|^2}}, \frac{1}{\sqrt{|a_2|^2 + |a_4|^2}} \right).
\end{equation}
The balanced matrix \( A^\mathrm{R} \) is given by
\begin{equation}
A^\mathrm{R} = A \cdot D^\mathrm{R} = 
\begin{bmatrix} 
\frac{a_1}{\sqrt{|a_1|^2 + |a_3|^2}} & \frac{a_2}{\sqrt{|a_2|^2 + |a_4|^2}} \\
\frac{a_3}{\sqrt{|a_1|^2 + |a_3|^2}} & \frac{a_4}{\sqrt{|a_2|^2 + |a_4|^2}}
\end{bmatrix}.
\end{equation}
The singular values of the balanced matrix \( A^\mathrm{R} \), which are the right unit-invariant singular values \( \sigma^\mathrm{R}_{\pm} \) of \( A \), are (using \cref{eq:singularvalues2x2})
\begin{equation}\label{eq:rui-2x2}
\sigma^\mathrm{R}_{\pm} = \sqrt{ 1 \pm \sqrt{1 - \frac{|a_1a_4 - a_2a_3|^2}{(|a_1|^2 + |a_3|^2)(|a_2|^2 + |a_4|^2)}}}.
\end{equation}
We can verify that these values are invariant under transformation given in \cref{eq:UADp}. For the general case, the proof is analogous to the LUI proof as given above.
\paragraph{BUI-singular values:} Define 
\begin{equation}
K = \log(\mathrm{Abs}[A]) = 
\begin{bmatrix}
\ln|a_1| & \ln|a_2|\\
\ln|a_3| & \ln|a_4|
\end{bmatrix},
\quad
J_2
=
\frac12
\begin{bmatrix}
1 & 1\\
1 & 1
\end{bmatrix},
\end{equation}
where $J_p = \big[\frac{1}{p}\big]_{p\times p}$. So for an $m \times n$ matrix we compute
\begin{equation}
Q 
\;=\;
J_m\,K\,J_n
\;-\;
K\,J_n
\;-\;
J_m\,K.
\end{equation}
Here, dimensions $m=n=2$, and we set 
\begin{equation}
P[A]
\;=\;
\exp(Q) \;=\; \begin{bmatrix}
|a_1^{-\tfrac34}a_2^{-\tfrac14}a_3^{-\tfrac14}a_4^{\tfrac14}| \quad \quad
&
|a_1^{-\tfrac14}a_2^{-\tfrac34}a_3^{\tfrac14}a_4^{-\tfrac14}|
\\[7pt]
|a_1^{-\tfrac14}a_2^{\tfrac14}a_3^{-\tfrac34}a_4^{-\tfrac14}| \quad \quad
&
|a_1^{\tfrac14}a_2^{-\tfrac14}a_3^{-\tfrac14}a_4^{-\tfrac34}|
\end{bmatrix}.
\end{equation}
Writing $P[A]$ in the form $\mathbf{x}\,\mathbf{y}^\top$ we
pick the top-left entry (index $(1,1)$) as the reference scale. Namely,
\begin{equation}
x_1 = 1,\quad
x_2 = \frac{P[A]_{2,1}}{P[A]_{1,1}},\qquad
y_1 = P[A]_{1,1},\quad
y_2 = P[A]_{1,2}.
\end{equation}
Then
\begin{equation}
D^{\mathrm{B_L}} = \mathrm{diag}\bigl(x_1,\;x_2\bigr), \quad
{D^{\mathrm{B_R}}} = \mathrm{diag}\bigl(y_1,\;y_2\bigr).
\end{equation}
A direct exponent check yields
\begin{equation}
D^{\mathrm{B_L}}
=
\begin{bmatrix}
1 & 0\\
0 & \sqrt{\bigg|\dfrac{a_1\,a_2}{a_3\,a_4}}\bigg|
\end{bmatrix},
\quad
{D^{\mathrm{B_R}}}
=
\begin{bmatrix}
|a_1^{-\tfrac34}a_2^{-\tfrac14}a_3^{-\tfrac14}a_4^{\tfrac14}| & 0\\[3pt]
0 & |a_1^{-\tfrac14}a_2^{-\tfrac34}a_3^{\tfrac14}a_4^{-\tfrac14}|
\end{bmatrix}.
\end{equation}
The balanced matrix $A^\mathrm{B}=D^{\mathrm{B_L}} A D^{\mathrm{B_R}}$ is
\begin{equation}
\begingroup
\setlength{\arraycolsep}{1.8em}
A^{\mathrm B}=
\begin{pmatrix}
a_1\;|a_1|^{-3/4}\;\bigl|a_4/(a_2 a_3)\bigr|^{1/4}
&
a_2\;|a_2|^{-3/4}\;\bigl|a_3/(a_1 a_4)\bigr|^{1/4}
\\[8pt]
a_3\;|a_3|^{-3/4}\;\bigl|a_2/(a_1 a_4)\bigr|^{1/4}
&
a_4\;|a_4|^{-3/4}\;\bigl|a_1/(a_2 a_3)\bigr|^{1/4}
\end{pmatrix}
\endgroup
\end{equation}
The BUI-singular values of $A$ are the singular values of $A^\mathrm{B}$ (using \cref{eq:singularvalues2x2})
\begin{equation}\label{eq:ui-2x2-balanced-sv}
\sigma^{\mathrm B}_{\pm}
=
\sqrt{
\left(
\sqrt{\bigg|\frac{a_1 a_4}{a_2 a_3}\bigg|}
+
\sqrt{\bigg|\frac{a_2 a_3}{a_1 a_4}\bigg|}
\right)
\ \pm\
\sqrt{
\left(
\sqrt{\bigg|\frac{a_1 a_4}{a_2 a_3}\bigg|}
+
\sqrt{\bigg|\frac{a_2 a_3}{a_1 a_4}\bigg|}
\right)^{\!2}
-
\frac{\big|a_1 a_4 - a_2 a_3\big|^{2}}{\big|a_1 a_2 a_3 a_4\big|}
}
}\,,
\end{equation}
which further simplifies to
\be
\sigma^{\mathrm B}_{\pm}
=
\left|
\sqrt[4]{\frac{a_1 a_4}{a_2 a_3}}
\ \pm\
\sqrt[4]{\frac{a_2 a_3}{a_1 a_4}}
\right|,
\qquad \text{ when } a_i\in\mathbb{R}_{+}.
\ee
We can verify that these values are invariant under transformation given in \cref{eq:DADp}. More generally, for any given matrix $A\in\mathbb{C}^{m\times n}$, the routine below computes positive diagonal scalings $D^{\mathrm{B_L}}=\mathrm{diag}(d^\ell)$ and $D^{\mathrm{B_R}}=\mathrm{diag}(d^r)$ such that \(A^{\mathrm B} \;=\; D^{\mathrm {B_L}}\,A\,D^{\mathrm {B_R}}\) has approximately unit geometric mean per row and per column in magnitude. Then as per the definition the \emph{BUI singular values} of $A$ are the singular values of this balanced matrix $A^{\mathrm B}$. This is the implementation used to generate the BUISVD curves in our numerical plots. It is a Python adaptation of the MATLAB code provided in \cite[Appendix C]{uhlmann-main2}.

\begin{tcolorbox}[
  enhanced,
  breakable,
  colback=black!1,
  colframe=black!35,
  boxrule=0.6pt,
  arc=2mm,
  left=1.5mm,right=1.5mm,top=1.0mm,bottom=1.0mm,
  title={Python code for BUISVD},
  fonttitle=\bfseries
]
\lstset{
  basicstyle=\ttfamily\small,
  keywordstyle=\bfseries,
  showstringspaces=false,
  breaklines=true,
  columns=fullflexible,
  frame=none
}
\begin{lstlisting}[language=Python, literate={~}{{\textasciitilde}}1]
import numpy as np

def dscale(A, tol=1e-15, max_iters=None, return_iters=False):
    A = np.asarray(A)
    if A.ndim != 2:
        raise ValueError("A must be a 2D array.")

    m, n = A.shape
    it = 0

    if np.iscomplexobj(A):
        A = A.astype(np.complex128, copy=False)
        A_abs = np.abs(A)
        A_phase = np.zeros((m, n), dtype=np.complex128)
        np.divide(A, A_abs, out=A_phase, where=(A_abs != 0))
    else:
        A = A.astype(np.float64, copy=False)
        A_abs = np.abs(A)
        A_phase = np.sign(A)
        A_phase[A == 0] = 0.0

    L = np.zeros((m, n), dtype=np.float64)
    M = np.ones((m, n), dtype=np.float64)

    nz = (A_abs > 0.0)
    L[nz] = np.log(A_abs[nz])
    M[~nz] = 0.0

    r = np.sum(M, axis=1)
    c = np.sum(M, axis=0)

    u = np.zeros((m, 1), dtype=np.float64)
    v = np.zeros((1, n), dtype=np.float64)

    dx = 2.0 * tol
    while dx > tol and (max_iters is None or it < max_iters):
        it += 1

        idx = (c > 0)
        p = np.sum(L[:, idx], axis=0) / c[idx]
        L[:, idx] = L[:, idx] - (p[None, :] * M[:, idx])
        v[:, idx] = v[:, idx] - p
        dx = np.mean(np.abs(p))

        idx = (r > 0)
        p = np.sum(L[idx, :], axis=1) / r[idx]
        L[idx, :] = L[idx, :] - (p[:, None] * M[idx, :])
        u[idx, 0] = u[idx, 0] - p
        dx = dx + np.mean(np.abs(p))

    d_l = np.exp(u)
    d_r = np.exp(v)
    A_B = A_phase * np.exp(L)

    if return_iters:
        return A_B, d_l, d_r, it
    return A_B, d_l, d_r

def bui_singular_values(A, tol=1e-15, max_iters=None):
    A_B, _, _ = dscale(A, tol=tol, max_iters=max_iters)
    return np.linalg.svd(A_B, compute_uv=False)

def buisvd(A, tol=1e-15, max_iters=None):
    return bui_singular_values(A, tol=tol, max_iters=max_iters)
\end{lstlisting}
\end{tcolorbox}

\section{Results from Random Matrix Theory}\label{sec:tools}
Here we collect known results from Random Matrix Theory and further prove certain Lemmas, which are required for proving \cref{thm:lui-rui-support,thm:uisvd-support,thm:lui-rui-wigner,thm:ui-wigner}.
\paragraph{Notation and conventions.} We write $A^\dagger$ for the conjugate transpose of a complex matrix $A$, and define its operator (spectral) norm by \( \|A\|_{\op} \coloneq \sqrt{\lambda_{\max}(A^\dagger A)}\), and equivalently, $\|A\|_{\op}=\sigma_{\max}(A)$. We write $\Prob(\cdot)$ for probability. We use $a.s.$ to denote \textit{almost surely} which describes an event that occurs with probability 1. We make use of the Borel-Cantelli lemma, which states that ``if the sum of the probabilities of the events is finite, then the probability that infinitely many of them occur is 0". We use Landau notation where $x_n=o(1)$ means $x_n\to0$ as $n\to\infty$, and $x_n=O(p_n)$ means there exist constants $C>0$ and $n_0$ and a non-negative reference sequence $(p_n)_{n\ge1}$ (e.g., $\sqrt{\log n/n}$) such that $|x_n|\le C p_n$ for all $n\ge n_0$. Throughout, we use notation $c,c_i,\hat c,C,C_i,\hat C$ to denote positive constants (independent of $n$ unless explicitly indicated) whose values may change from line to line. For an $n\times n$ Hermitian matrix $M$ with eigenvalues $\lambda_1(M),\dots,\lambda_n(M)\in\mathbb R$, its \emph{empirical spectral distribution (ESD)} is
\be
\mathrm{ESD}(M)\coloneq \frac1n\sum_{i=1}^n \delta_{\lambda_i(M)}.
\ee
We write $\mu_n \Rightarrow \mu$ for weak convergence of probability measures on $\mathbb R$. For a probability measure $\mu$ on $\mathbb R$, its \emph{Stieltjes transform} is
\be
m_\mu(z)\coloneq \int_{\mathbb R}\frac{1}{x-z}\,d\mu(x),\qquad z\in\mathbb C_+\coloneq \{z\in\mathbb C:\mathrm{Im}(z)>0\}.
\ee
For a Hermitian $M$, we write $m_M(z)\coloneq \frac1n\mathrm{Tr}(M-zI)^{-1}$; then $m_M$ is the Stieltjes transform of $\mathrm{ESD}(M)$. We use the Orlicz norms
\be
\|X\|_{\psi_1}\coloneq \inf\{t>0:\mathbb E e^{|X|/t}\le 2\},\qquad
\|X\|_{\psi_2}\coloneq \inf\{t>0:\mathbb E e^{|X|^2/t^2}\le 2\}.
\ee
A (real) random variable is \emph{sub-Gaussian} if $\|X\|_{\psi_2}<\infty$ and \emph{sub-exponential} if $\|X\|_{\psi_1}<\infty$. For complex $X$, we interpret sub-Gaussian/sub-exponential in the same way with $|X|$ (equivalently, it suffices that the real and imaginary parts, i.e., $\mathrm{Re}(X)$ and $\mathrm{Im}(X)$ are sub-Gaussian). Throughout, for complex entries we use the convention $\mathbb E|X_{ij}|^2=1$ when we say ``variance $1$".  For $\beta\in\{1,2\}$ we use the shorthand $g_\beta$ for a standard real/complex Gaussian, i.e., $g_1\sim\mathcal N(0,1)$ and $g_2\sim\mathcal{CN}(0,1)$ (circular, with $\mathbb E|g_2|^2=1$). We denote $X\stackrel{d}{=}Y$ to mean equality in distribution, i.e.\ $\Prob(X\in B)=\Prob(Y\in B)$ for every Borel set $B$. 

Finally, in this appendix we prove the results for LUISVD and BUISVD. The proofs for RUISVD are entirely analogous to the ones for LUISVD, by simply choosing the column norm $\|A_n(:,j)\|_2$ instead of the row norm $r_i\coloneq \|A_n(i,:)\|_2$ and following the steps of the proofs as done below.
\begin{theorem}[{\cite{Vershynin2018HDP}}]\label{thm:bernstein-subexp}
If $X_1,\dots,X_n$ are independent, mean-zero, sub-exponential random variables with $c_0\coloneq \max_i\|X_i\|_{\psi_1}$, then for all $u\ge0$,
\begin{equation}
\Prob\!\left(\frac{1}{n}\Big|\sum_{i=1}^n X_i\Big|\ge u\right)\le 2\exp\!\left[-nc\,\min\!\left(\frac{u^2}{c_0^2},\,\frac{u}{c_0}\right)\right].
\end{equation}
\end{theorem}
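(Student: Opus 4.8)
The plan is to follow the classical Cram\'er--Chernoff route. First I would reduce to a one-sided estimate: it suffices to bound $\Prob\big(\sum_{i=1}^n X_i \ge nt\big)$, because applying the same bound to the variables $-X_i$, which are again independent, mean-zero, and satisfy $\|-X_i\|_{\psi_1}=\|X_i\|_{\psi_1}$, and then taking a union bound, yields the two-sided statement at the cost of the prefactor $2$. For the one-sided bound, the exponential Markov inequality gives, for every $\lambda>0$,
\be
\Prob\Big(\sum_{i=1}^n X_i \ge nt\Big) \;\le\; e^{-\lambda n t}\,\prod_{i=1}^n \E\, e^{\lambda X_i},
\ee
where independence was used to factor the moment generating function.

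The heart of the argument is a moment generating function estimate for a single mean-zero subexponential variable: there are absolute constants $K,c_1>0$ such that $\E\, e^{\lambda X_i}\le \exp(K\lambda^2 c_0^2)$ whenever $|\lambda|\le c_1/c_0$. I would prove this by first passing from the $\psi_1$ bound to polynomial moment growth, $\E|X_i|^p\le (C c_0)^p\, p!$, which is the standard equivalence of subexponential characterisations; then Taylor expanding $e^{\lambda X_i}=1+\lambda X_i+\sum_{p\ge 2}\lambda^p X_i^p/p!$, taking expectations, using $\E X_i=0$ to annihilate the linear term, and summing the resulting series in $\lambda c_0$, which converges and is $O(\lambda^2 c_0^2)$ precisely when $|\lambda| c_0$ stays below an absolute threshold. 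The mean-zero hypothesis is essential here: without it the linear term contributes an $O(\lambda)$ piece and one recovers only a subexponential (rather than locally sub-Gaussian) moment generating function.

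Substituting this bound into the Chernoff step yields, for $0\le\lambda\le c_1/c_0$,
\be
\Prob\Big(\sum_{i=1}^n X_i \ge nt\Big) \;\le\; \exp\!\big(-\lambda n t + K n \lambda^2 c_0^2\big),
\ee
and it remains to optimise over $\lambda$ in the admissible interval. Choosing $\lambda=\min\!\big(t/(2Kc_0^2),\,c_1/c_0\big)$, in the regime $t\le 2Kc_1 c_0$ the interior optimum gives the Gaussian-type bound $\exp(-nt^2/(4Kc_0^2))$, while in the complementary regime the boundary choice $\lambda=c_1/c_0$ gives, after absorbing constants, $\exp(-c_2 n t/c_0)$. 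Combining the two regimes into a single factor $\exp\!\big(-c\,n\min(t^2/c_0^2,\,t/c_0)\big)$ with an absolute constant $c$, and adding the symmetric contribution, reproduces the claimed inequality.

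The main obstacle is the moment generating function estimate, specifically obtaining clean absolute constants in the chain ``$\psi_1$-bound $\Rightarrow$ factorial moment growth $\Rightarrow$ local quadratic control of $\log\E\, e^{\lambda X_i}$,'' while correctly tracking the point at which the mean-zero cancellation is invoked. Everything downstream --- the Chernoff factorisation and the optimisation over $\lambda$ --- is routine bookkeeping.
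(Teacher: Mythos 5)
Your proposal is correct: the paper does not prove this statement but simply quotes it from Vershynin (Cor.~2.8.4), and your Cram\'er--Chernoff argument --- one-sided reduction, the local sub-Gaussian bound $\E e^{\lambda X_i}\le \exp(K\lambda^2 c_0^2)$ for $|\lambda|\le c_1/c_0$ via factorial moment growth, then optimisation of $\lambda$ over the two regimes --- is exactly the standard proof given in that reference. No gaps; this matches the cited source's approach.
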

\begin{theorem}[{\cite{BaiSilverstein2010}}]\label{thm:MP-law}
If $X_n$ is $n\times n$ with i.i.d.\ entries of mean $0$, variance 1, then the empirical spectral distribution of $\frac1n X_nX_n^\dagger$ converges almost surely to Marchenko-Pastur $\mathrm{MP}(1,1)$.
\end{theorem}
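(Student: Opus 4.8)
The statement is the classical Marchenko–Pastur theorem, quoted here from \cite{BaiSilverstein2010}; in the body it is simply invoked, so what follows is only a sketch of the route I would take. The plan is to use the Stieltjes transform (resolvent) method, which meshes with the notation fixed above. Set $S_n\coloneq \frac1n X_nX_n^\dagger$, a $p\times p$ positive semidefinite Hermitian matrix, and write $m_n(z)\coloneq \frac1p\Tr(S_n-zI)^{-1}$ for the Stieltjes transform of $\mathrm{ESD}(S_n)$, $z\in\mathbb C^+$. By the standard equivalence between weak convergence of probability measures on $\mathbb R$ and pointwise convergence of their Stieltjes transforms on $\mathbb C^+$, it is enough to show that for each fixed $z$ one has $m_n(z)\to m(z)$ almost surely, where $m$ is the unique $\mathbb C^+$-valued solution of the Marchenko–Pastur self-consistent equation
\begin{equation}
m=\frac{1}{\sigma^2(1-y)-z-\sigma^2 y z\,m},
\end{equation}
whose Stieltjes inversion yields the density $\frac{1}{2\pi\sigma^2 y x}\sqrt{(b-x)(x-a)}$ on $[a,b]$ with $a=\sigma^2(1-\sqrt y)^2$ and $b=\sigma^2(1+\sqrt y)^2$, together with an atom of mass $1-1/y$ at the origin when $y>1$.

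First I would reduce to bounded entries by the usual truncation-and-centering step: replace $X_{ij}$ by $\widetilde X_{ij}=X_{ij}\mathbf 1_{\{|X_{ij}|\le\delta_n\sqrt n\}}$ minus its mean, with $\delta_n\downarrow 0$ slowly, and bound the distance between the two empirical spectral distributions using the rank inequality together with the Frobenius-norm estimate on $S_n-\widetilde S_n$; a second-moment bound plus Borel–Cantelli then makes the two ESDs almost surely asymptotically equal, and the variance $\sigma^2$ is recovered in the limit. After this one may assume entries bounded by $\delta_n\sqrt n$, mean zero, and variance $\to\sigma^2$. The core is then the resolvent computation: writing $S_n=\frac1n\sum_{k=1}^n c_kc_k^\dagger$ over the columns $c_k$ of $X_n$, I would peel off a single column and use the Sherman–Morrison / Schur-complement identity to relate a diagonal block of $(S_n-zI)^{-1}$ to the resolvent $S_n^{(k)}$ with that column removed. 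Averaging and invoking (i) a Hanson–Wright-type concentration of the quadratic forms $\frac1n c_k^\dagger(S_n^{(k)}-zI)^{-1}c_k$ around $\frac1n\Tr(S_n^{(k)}-zI)^{-1}$, valid for sub-Gaussian entries, and (ii) the rank-one stability bound $|m_n(z)-m_n^{(k)}(z)|\le\frac{1}{p\,\Im z}$, shows that $m_n(z)$ satisfies the above fixed-point equation up to an $o(1)$ error; the column-exposure martingale, whose increments are $O(1/(p\,\Im z))$, together with Azuma–Hoeffding and Borel–Cantelli upgrades this to almost-sure convergence to the unique $\mathbb C^+$-root.

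The step I expect to be the main obstacle is making the truncation argument yield almost-sure rather than merely in-probability convergence under just a finite-variance hypothesis; this is the delicate Bai–Yin-style coupling of the rank inequality with an almost-sure control of $\opnorm{S_n}$, and it is where most of the technical weight of a clean proof sits. A secondary point is establishing uniqueness of the $\mathbb C^+$-solution of the self-consistent equation and matching it to the explicit density above, which is a short contraction-mapping and analytic-continuation argument. An alternative route is the method of moments: show $\frac1p\mathbb E\Tr S_n^k$ converges to the $k$-th Marchenko–Pastur moment — a Narayana-type count of non-crossing pairings — with fluctuations $O(1/p^2)$, so that Borel–Cantelli again gives the almost-sure conclusion; this is equivalent but less convenient near the spectral edges.
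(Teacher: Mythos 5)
The paper does not prove this statement at all: it is imported verbatim as a known result, with the citation to \cite{BaiSilverstein2010} (Thms.~3.6--3.7) serving in place of a proof, so there is no internal argument to compare against. Your sketch is the standard Stieltjes-transform/resolvent route used in that reference (truncation and centering, column removal via Schur complement, concentration of the quadratic forms, uniqueness of the $\mathbb{C}^+$-solution of the self-consistent equation, Borel--Cantelli for almost-sure convergence), and it is accurate at the level of a sketch --- essentially the same approach as the cited source.
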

\begin{theorem}[{\cite{BaiSilverstein2010}}]\label{thm:MP-edge}
Under the assumptions of \cref{thm:MP-law} and finite fourth moment,
\begin{equation}
\lambda_{\max}\!\Big(\frac1n X_nX_n^\dagger\Big)\xrightarrow{\mathrm{a.s.}}4,\qquad
\lambda_{\min}\xrightarrow{\mathrm{a.s.}}0.
\end{equation}
\end{theorem}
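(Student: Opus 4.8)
The statement is the Bai--Yin edge theorem, so the main work is to supplement \cref{thm:MP-law} with control of spectral outliers. Write $S_n\coloneq\tfrac1n X_nX_n^\dagger$ and $b_\pm\coloneq\sigma^2(1\pm\sqrt y)^2$. By \cref{thm:MP-law} the empirical spectral distribution of $S_n$ converges almost surely to $\mathrm{MP}(y,\sigma^2)$, which is supported on $[b_-,b_+]$ (with an extra atom at $0$ when $y>1$). Since this limit is deterministic and assigns positive mass to every neighbourhood of $b_+$, and also of $b_-$ when $y\le1$, weak convergence immediately gives $\liminf_n\lambda_{\max}(S_n)\ge b_+$ and $\limsup_n\lambda_{\min}(S_n)\le b_-$ almost surely. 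The plan is therefore to prove the two complementary ``no-outlier'' bounds: $\limsup_n\lambda_{\max}(S_n)\le b_+$, and, for $y\le1$, $\liminf_n\lambda_{\min}(S_n)\ge b_-$, each almost surely.

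For the upper edge I would argue by the moment method. First I would truncate the entries at level $\delta_n\sqrt n$ for a slowly vanishing $\delta_n$, then recentre and renormalise so that the modified entries again have mean $0$ and variance $\sigma^2$; a standard truncation lemma, which here relies on the finite fourth moment, shows that $\lambda_{\max}(S_n)$ agrees asymptotically with $\lambda_{\max}(\widetilde S_n)$, so it suffices to handle the bounded-entry matrix $\widetilde S_n$. Next I would estimate $\mathbb E\,\Tr(\widetilde S_n^{\,k_n})$ with $k_n\to\infty$ growing slowly, say $k_n=\lfloor c\log n\rfloor$: expanding the trace as a sum over closed walks on the row/column bipartite index graph and using $\mathbb E\widetilde X=0$ and $\mathbb E|\widetilde X|^2=\sigma^2$, the leading contribution comes from walks whose edge-pairing forms a tree, and the weighted count of these (Catalan / Narayana combinatorics, carrying a factor $y$ per returned column step) yields $\mathbb E\,\Tr(\widetilde S_n^{\,k_n})\le n\,b_+^{\,k_n}(1+o(1))$, while walks with an edge of multiplicity at least three, or with a cycle, are suppressed by the truncation scale $\delta_n$. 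A Markov bound then gives, for fixed $\epsilon>0$, $\Prob(\lambda_{\max}(\widetilde S_n)>b_++\epsilon)\le(b_++\epsilon)^{-k_n}\,\mathbb E\,\Tr(\widetilde S_n^{\,k_n})\le n\,(b_+/(b_++\epsilon))^{k_n}(1+o(1))$, which is summable for $c$ large, so Borel--Cantelli and $\epsilon\downarrow0$ along a countable sequence finish the upper edge. (For sub-Gaussian entries one could instead streamline the concentration step via \cref{thm:bernstein-subexp} applied to the relevant quadratic forms.)

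The lower edge is the part I expect to be hardest, because the moments of $S_n$ are dominated by the top of the spectrum and are essentially blind to $\lambda_{\min}$. For $y=1$ there is nothing to do beyond $\lambda_{\min}(S_n)\ge0$ together with the soft bound. For $y<1$, where $b_->0$, the plan is to recast the claim as an invertibility estimate: fix a real $\alpha\in(0,b_-)$ and show that $\opnorm{(S_n-\alpha I)^{-1}}=O(1)$ almost surely. This I would split into control of the resolvent trace $\tfrac1n\Tr(S_n-\alpha I)^{-1}$, which converges almost surely to the finite value of the Stieltjes transform of $\mathrm{MP}(y,\sigma^2)$ at $\alpha$, plus an anti-concentration / smallest-singular-value lower bound for $\tfrac1{\sqrt n}X_n$ that excludes a single stray eigenvalue in $(\alpha,b_-)$; the latter can be obtained either geometrically, by bounding $\sigma_{\min}$ through the distances of the columns to the spans of the others (using the variance lower bound together with the fourth moment), or by Bai--Yin's device of running the moment method on an auxiliary matrix built from $S_n$ whose top edge encodes the bottom edge of $S_n$. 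Either route yields $\liminf_n\lambda_{\min}(S_n)\ge b_-$ almost surely, which together with the soft bound completes the proof. The delicate point throughout is to keep every estimate valid under only a fourth-moment hypothesis rather than a sub-Gaussian one, and that is where I would expect the bulk of the technical effort to be spent.
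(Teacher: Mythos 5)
The paper does not prove this statement at all: it is imported verbatim as a known result, with the proof delegated to the cited reference (Bai--Silverstein, Thms.~5.10--5.11, i.e.\ the Bai--Yin edge theorem), so there is no internal argument to compare against. Your sketch is essentially a reconstruction of that standard proof: truncation at level $\delta_n\sqrt n$ justified by the fourth moment, recentring, the moment method with slowly growing powers $k_n\sim\log n$ and Catalan/Narayana-type walk counting for $\limsup\lambda_{\max}\le\sigma^2(1+\sqrt y)^2$, Markov plus Borel--Cantelli, and a separate, harder treatment of the lower edge; the soft bounds from weak convergence of the ESD are also used correctly. Two cautions. First, as written this is a plan rather than a proof: the genuinely delicate content (uniform control of the walk sums when $k_n\to\infty$, and the suppression of high-multiplicity and cyclic edge classes via $\delta_n\downarrow 0$) is asserted, not carried out, and this is exactly where the cited proof spends its effort. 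Second, your first proposed route for the lower edge does not work: boundedness of the normalised resolvent trace at $\alpha\in(0,b_-)$ is compatible with $O(1)$ stray eigenvalues in $(\alpha,b_-)$, and column-to-span distance bounds for $\sigma_{\min}$ do not yield the sharp constant $\sigma(1-\sqrt y)$ under a fourth-moment-only hypothesis; only your second option --- the Bai--Yin device of running the moment method on the shifted matrix $S_n-(1+y)\sigma^2 I$, whose operator norm encodes both edges simultaneously --- actually closes the argument, so that branch should be taken and the other dropped.
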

\begin{theorem}[{\cite{Tao2012TRMT}}]\label{thm:wigner-law}
Let $H_n$ be GOE/GUE with the $1/\sqrt n$ scaling. Then the eigenvalue ESD of $H_n$ converges almost surely to the semicircle law on $[-2,2]$, and $\|H_n\|_{\op}\to 2$ almost surely.
\end{theorem}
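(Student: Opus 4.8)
The plan is to show that the bi-unit-invariant balancing acts on a Wigner matrix, up to a vanishing two-sided diagonal perturbation, as an overall rescaling by $e^{-c_\beta}\sqrt n$, after which both statements follow from the Wigner semicircle law (\cref{thm:wigner-law}). Since the Gaussian entries are almost surely nonzero, the balancing condition \cref{eq:UI-balance} reads simply that $A^{\mathrm B}_n$ has zero row-means and zero column-means of $\log|A^{\mathrm B}_{n,ij}|$; writing $\log|A^{\mathrm B}_{n,ij}| = u_i + v_j + L_{ij}$ with $L_{ij}\coloneq \log|A_{n,ij}|$, this is a linear system whose solution, unique up to the scalar gauge $u_i\mapsto u_i+t,\ v_j\mapsto v_j-t$ (which leaves $u_i+v_j$, hence $A^{\mathrm B}_n$, untouched), is the double-centering $u_i+v_j=\overline L_{\cdot\cdot}-\overline L_{i\cdot}-\overline L_{\cdot j}$ in terms of row, column and grand means of $L$. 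Because $L$ is symmetric, $u_i+v_j$ is symmetric in $i\leftrightarrow j$, so $A^{\mathrm B}_n = D\,A_n\,D$ for a single positive diagonal $D$ and $A^{\mathrm B}_n$ is again Hermitian (real symmetric in the GOE case). Substituting $A_{n,ij}=n^{-1/2}Y_{n,ij}$ with $Y_n$ the unscaled ensemble (entries of $n$-independent law), the $-\tfrac12\log n$ term in $L$ drops out of the double-centering, leaving $D_{ii}=n^{1/4}\exp\!\bigl(-\,\overline{\log|Y_n|}_{i\cdot}+\tfrac12\,\overline{\log|Y_n|}_{\cdot\cdot}\bigr)$.

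The key estimate is a uniform law of large numbers for the log-means. For fixed $i$ the entries $Y_{n,i1},\dots,Y_{n,in}$ are i.i.d.\ (the Wigner symmetry couples rows to columns, not entries within a row), so $\overline{\log|Y_n|}_{i\cdot}$ is an average of $n$ i.i.d.\ subexponential variables with mean $c_\beta=\mathbb E\log|g_\beta|$; the single diagonal term, whose law differs from the off-diagonal one only by a bounded shift (GOE) or by being real rather than complex (GUE), contributes $O(1/n)$ and is harmless. Applying the Bernstein bound of \cref{thm:bernstein-subexp} (legitimate since $\log|g_\beta|$ has only an exponential left tail near the origin and a super-exponential right tail, hence finite $\psi_1$-norm) with a union bound over the $n$ rows and Borel--Cantelli gives $\max_i|\overline{\log|Y_n|}_{i\cdot}-c_\beta|\to 0$ almost surely; the same holds for the column means and the grand mean. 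Hence $D=n^{1/4}e^{-c_\beta/2}\,\widetilde D_n$ with $\widetilde D_n$ positive diagonal and $\|\widetilde D_n-I\|_{\op}=\max_i|(\widetilde D_n)_{ii}-1|\to 0$ a.s., so that $A^{\mathrm B}_n=e^{-c_\beta}\sqrt n\,\widetilde D_n A_n\widetilde D_n$.

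Both conclusions now drop out. For the operator norm, $\|\widetilde D_n^{-1}\|_{\op}^{-2}\|A_n\|_{\op}\le\|\widetilde D_n A_n\widetilde D_n\|_{\op}\le\|\widetilde D_n\|_{\op}^{2}\|A_n\|_{\op}$ and both bounds tend to $\|A_n\|_{\op}\to 2$ a.s.\ by \cref{thm:wigner-law}, giving $\|A^{\mathrm B}_n\|_{\op}=2e^{-c_\beta}\sqrt n\,(1+o(1))$ a.s. For the bulk, set $T_n=n^{-1/2}A^{\mathrm B}_n=e^{-c_\beta}(A_n+E_n)$ with $E_n\coloneq \widetilde D_n A_n\widetilde D_n-A_n=(\widetilde D_n-I)A_n\widetilde D_n+A_n(\widetilde D_n-I)$, so that $\|E_n\|_{\op}\le(\|\widetilde D_n\|_{\op}+1)\|\widetilde D_n-I\|_{\op}\|A_n\|_{\op}\to 0$ a.s. The eigenvalue ESD of $A_n$ converges a.s.\ to the semicircle law on $[-2,2]$ (\cref{thm:wigner-law}), hence its empirical singular-value measure, the pushforward under $x\mapsto|x|$, converges a.s.\ to the quarter-circle law on $[0,2]$; and Weyl's perturbation inequality, $\max_k|\sigma_k(A_n+E_n)-\sigma_k(A_n)|\le\|E_n\|_{\op}\to 0$, transfers this to $T_n$, whose empirical singular-value measure therefore converges a.s.\ to the quarter-circle law rescaled by $e^{-c_\beta}$, i.e.\ on $[0,2e^{-c_\beta}]$, with $c_\beta$ as in \cref{lem:log-constants}.

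The structure is exactly that of the i.i.d.\ case \cref{thm:uisvd-support}; the only genuinely new point is verifying that the Wigner symmetry is benign — the rows stay i.i.d., so the log-mean concentration is unchanged, and the symmetric balancing preserves Hermiticity, so the singular values are simply the moduli of the semicircle-distributed eigenvalues. I expect the part needing the most care to be the \emph{uniform} (over all $2n$ row/column averages) control of the log-means, since $\|\widetilde D_n-I\|_{\op}$ is precisely the maximal deviation of these averages; one must check that the heavy left tail of $\log|g_\beta|$ near $0$ is still light enough for the subexponential Bernstein inequality — it is, being merely exponential — and that the different law of the diagonal entries in the GOE/GUE cases does not shift the limiting mean, which it does not since it is a single term per row.
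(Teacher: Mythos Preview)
You have proved the wrong statement. The theorem you were given (\cref{thm:wigner-law}) is the classical Wigner semicircle law for GOE/GUE: the eigenvalue ESD of $H_n$ converges to the semicircle on $[-2,2]$ and $\|H_n\|_{\op}\to 2$. In the paper this is not proved at all; it is quoted as a standard result from \cite{Tao2012TRMT} and used as a black box. Your argument, by contrast, is a proof of \cref{thm:ui-wigner} (the stretched quarter-circle law for the BUI-balanced Wigner matrix $A_n^{\mathrm B}$), and indeed you invoke \cref{thm:wigner-law} repeatedly as an ingredient --- so as a proof of \cref{thm:wigner-law} it is circular.

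If we regard your write-up instead as a proof of \cref{thm:ui-wigner}, it is essentially correct and follows the same skeleton as the paper: identify the balancing scales via the double-centering of $\log|A_{n,ij}|$ (the paper's Proposition~\ref{prop:gauge-fully}), establish uniform concentration of the row/column log-means around $c_\beta-\tfrac12\log n$ (the paper's Lemma~\ref{lem:log-mean-wigner}), and conclude that $A_n^{\mathrm B}=e^{-c_\beta}\sqrt n\,(I+o(1))A_n(I+o(1))$. The one genuine difference is the endgame: you exploit that $A_n^{\mathrm B}$ is Hermitian (since $D^{\mathrm{B_L}}=D^{\mathrm{B_R}}$ by symmetry of $L$) and apply Weyl's perturbation inequality directly to the singular values of the Hermitian perturbation $T_n=e^{-c_\beta}(A_n+E_n)$, whereas the paper passes to $H_n=T_nT_n^\dagger$ and uses the resolvent/Stieltjes-transform continuity Lemma~\ref{lem:esd-continuity}. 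Your route is slightly more direct here and avoids squaring; the paper's route is more uniform with the Ginibre proof of \cref{thm:uisvd-support}, where no Hermiticity is available. One caveat in your argument: the claim that the entries in a fixed row of $Y_n$ are ``i.i.d.'' is not literally true (the diagonal entry has a different law), though you correctly note this is an $O(1/n)$ correction that does not affect the concentration.
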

\begin{lemma}[{\cite{Billingsley:Convergence,Tao2012TRMT}}]\label{thm:pushforward}
Let $T_n$ be random $n\times n$ matrices and $H_n\coloneq T_nT_n^\dagger$. If $\mathrm{ESD}(H_n)\Rightarrow F$ \emph{almost surely} with $\mathrm{supp}(F)\subset[0,b]$, then the empirical measure of singular values of $T_n$ converges almost surely to the pushforward $F\circ f^{-1}$ under $f(x)=\sqrt{x}$, supported on $[0,\sqrt b]$.
\end{lemma}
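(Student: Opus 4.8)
The plan is to reduce the statement to the continuous mapping theorem for weak convergence, using the exact algebraic relation between the singular values of $T_n$ and the eigenvalues of $H_n \coloneq T_n T_n^\dagger$. First I would record that $H_n$ is Hermitian and positive semidefinite, so its eigenvalues $\lambda_1(H_n)\ge\cdots\ge\lambda_n(H_n)\ge 0$ are real and nonnegative, and the singular values of $T_n$ are exactly $s_i(T_n)=\sqrt{\lambda_i(H_n)}$. Writing $f(x)=\sqrt x$ — a homeomorphism of $[0,\infty)$ onto itself with inverse $f^{-1}(s)=s^2$ — the empirical singular-value measure $\nu_n\coloneq \frac1n\sum_{i=1}^n\delta_{s_i(T_n)}$ is therefore literally the pushforward $f_\ast\,\mathrm{ESD}(H_n)$; equivalently, for every bounded continuous $g\colon\mathbb R\to\mathbb R$ one has $\int g\,d\nu_n=\int (g\circ f)\,d\,\mathrm{ESD}(H_n)$, with $g\circ f$ bounded and continuous on $[0,\infty)$, which is where all the relevant measures are supported.

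Next I would condition on the full-probability event $\Omega_0$ on which $\mathrm{ESD}(H_n)\Rightarrow F$, as provided by the hypothesis. On $\Omega_0$, fix any $g\in C_b(\mathbb R)$; since $g\circ f$ is bounded and continuous, weak convergence of $\mathrm{ESD}(H_n)$ gives $\int(g\circ f)\,d\,\mathrm{ESD}(H_n)\to\int(g\circ f)\,dF=\int g\,d(f_\ast F)$. As $g$ was arbitrary, $\nu_n\Rightarrow f_\ast F$ on $\Omega_0$, hence almost surely. Finally, since $f$ is continuous and $\mathrm{supp}(F)\subset[0,b]$, the pushforward satisfies $\mathrm{supp}(f_\ast F)=\overline{f(\mathrm{supp}F)}\subseteq f([0,b])=[0,\sqrt b]$, and its distribution function is $s\mapsto F(s^2)=(F\circ f^{-1})(s)$, identifying the limit with the claimed $F\circ f^{-1}$.

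There is no serious obstacle here: the whole argument is the continuous mapping theorem executed pathwise along the event supplied by the hypothesis. The only points needing a line of care are (i) checking that $g\circ f$ is continuous at the boundary point $x=0$ — it is, because $\sqrt{\,\cdot\,}$ is continuous on $[0,\infty)$ — so that the version of the continuous mapping theorem with no exceptional null set applies; and (ii) making explicit that \emph{almost sure} convergence of the ESD transfers verbatim to $\nu_n$, not merely convergence in probability or in expectation, which is automatic since, once a realisation in $\Omega_0$ is fixed, every step above is deterministic. In the applications this lemma is invoked with $F$ the Marchenko-Pastur law $\mathrm{MP}(1,1)$, whose pushforward under $\sqrt{\,\cdot\,}$ is exactly the quarter-circle law on $[0,2]$, and with its rescaled variants in the BUI case.
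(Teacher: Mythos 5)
Your proposal is correct and follows essentially the same route as the paper: identify $\sigma_i(T_n)=\sqrt{\lambda_i(H_n)}$, test the empirical singular-value measure against bounded continuous $g$ via $\int g\,d\nu_n=\int g(\sqrt{x})\,d\,\mathrm{ESD}(H_n)$, and invoke the pathwise continuous mapping argument on the almost-sure event, together with the support pushforward. The extra remarks on continuity at $0$ and on the almost-sure (rather than in-probability) transfer are fine but add nothing beyond what the paper's proof already contains implicitly.
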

\begin{lemma}[{\cite{BordenaveRMTNotes2019}}]\label{lem:esd-continuity}
Let $M,N\in\mathbb{C}^{n\times n}$ be Hermitian and set $E\coloneq N-M$. For $z=x+\mathrm{i}y$ with $y>0$, define the resolvents $R_M(z)\coloneq (M-zI)^{-1}$, $R_N(z)\coloneq (N-zI)^{-1}$ and their normalised traces $m_M(z)\coloneq \frac1n\mathrm{Tr} R_M(z)$, $m_N(z)\coloneq \frac1n\mathrm{Tr} R_N(z)$. Then
\begin{equation}\label{eq:resolvent-bound}
|m_N(z)-m_M(z)|\ \le\ \frac{\|E\|_{\op}}{y^2}.
\end{equation}
Consequently, if $\|E_n\|_{\op}\to 0$ and $m_{M_n}(z)\to m(z)$ for every fixed $z\in\mathbb{C}_+$, then $m_{N_n}(z)\to m(z)$ for every fixed $z\in\mathbb{C}_+$ as well. In particular, $M_n$ and $N_n$ have the same limiting ESD (by uniqueness of Stieltjes transforms).
\end{lemma}
\begin{lemma}\label{lem:log-constants}
Let $g_1\sim \mathcal N(0,1)$ and let $g_2\sim \mathcal{CN}(0,1)$ where $\mathcal{CN}(0,1)$ denotes the standard circular complex normal, i.e.\ $g_2=X+\mathrm{i}Y$ with $X,Y\overset{\mathrm{i.i.d.}}{\sim}\mathcal N(0,\tfrac12)$ (equivalently, $\mathbb E|g_2|^2=1$). Then
\begin{equation}
c_{\beta=1}\coloneq \mathbb E\log|g_1|=-\tfrac12(\gamma+\log 2),
\qquad
c_{\beta=2}\coloneq \mathbb E\log|g_2|=-\tfrac12\gamma,
\end{equation}
where $\gamma$ is the Euler-Mascheroni constant. Numerically, $c_{\beta=1}\approx -0.6351814227$ and $c_{\beta=2}\approx -0.2886078324$.
\end{lemma}
\begin{proof}
This is a known result, but to make it explicit we give a proof. Let $\Gamma$ be the gamma function and let $\psi(z)\coloneq \Gamma'(z)/\Gamma(z)$ denote the digamma function.
Let $G\sim\mathrm{Gamma}(\alpha,\theta)$ with \textit{shape} $\alpha>0$ and \emph{rate} $\theta>0$, i.e.\ with density
\be
f_G(t)=\frac{\theta^\alpha}{\Gamma(\alpha)}\,t^{\alpha-1}e^{-\theta t}\mathbf 1_{t>0}.
\ee
A standard identity for the Gamma law in this parametrisation is
\be\label{eq:elog-gamma}
\mathbb E[\log G]=\psi(\alpha)-\log\theta,
\ee
see, e.g., \cite[Ex.~1.25]{JorgensenLabouriau2012}. For $g_1\sim\mathcal N(0,1)$, the change of variables $V\coloneq g_1^2$ gives the density
\be
f_V(v)=\frac{1}{\sqrt{2\pi v}}e^{-v/2}\mathbf 1_{v>0},
\ee
which is $\mathrm{Gamma}(\tfrac12,\tfrac12)$ (shape $\tfrac12$, rate $\tfrac12$). Hence
\be
\mathbb E\log|g_1|=\tfrac12\,\mathbb E\log(g_1^2)=\tfrac12\big(\psi(\tfrac12)+\log 2\big).
\ee
For $g_2=X+\mathrm{i}Y$ with $X,Y\overset{\mathrm{i.i.d.}}{\sim}\mathcal N(0,\tfrac12)$, write $X=\frac{X_0}{\sqrt2}$ and $Y=\frac{Y_0}{\sqrt2}$ with $X_0,Y_0\overset{\mathrm{i.i.d.}}{\sim}\mathcal N(0,1)$. Then
\be
|g_2|^2=X^2+Y^2=\frac{X_0^2+Y_0^2}{2}.
\ee
Since $S\coloneq X_0^2+Y_0^2$ has density $f_S(s)=\frac12 e^{-s/2}\mathbf 1_{s>0}$ (i.e.\ $\chi^2_2=\mathrm{Gamma}(1,\tfrac12)$), it follows that $|g_2|^2$ has density $e^{-u}\mathbf 1_{u>0}$, i.e.\ $|g_2|^2\sim\mathrm{Gamma}(1,1)$. Therefore
\be
\mathbb E\log|g_2|=\tfrac12\,\mathbb E\log(|g_2|^2)=\tfrac12\,\psi(1).
\ee
Using $\psi(1)=-\gamma$ and $\psi(\tfrac12)=-\gamma-2\log 2$ \cite[\S5.4(ii), Eqns.~(5.4.12)-(5.4.13)]{DLMF} gives $c_{\beta=1}=-\tfrac12(\gamma+\log 2)$ and $c_{\beta=2}=-\tfrac12\gamma$. The numerical values follow from $\gamma\approx 0.5772156649$ \cite[\S5.2(ii), Eq.~(5.2.3)]{DLMF}.
\end{proof}
\subsection{Derived Lemmas}
\begin{lemma}\label{lem:row-col-conc}
Let $A_n=(1/\sqrt n)\,X_n$ where the entries $\{X_{ij}\}$ of $X_n$ are i.i.d., mean $0$, variance $1$, and sub-Gaussian. Define the row Euclidean ($\ell^2$) norm
\begin{equation}
r_i\coloneq \|A_n(i,:)\|_2.
\end{equation}
There exists a constant $C>0$ (independent of $n$) such that, for all sufficiently large $n$,
\begin{equation}\label{eq:row-col-conc}
\Prob\!\left(\max_{1\le i\le n}\big|r_i-1\big|\geq C\sqrt{\frac{\log n}{n}}\right)< n^{-10}.
\end{equation}
\end{lemma}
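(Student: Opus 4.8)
The plan is to reduce the claim about the norms $r_i$ and $c_j$ to a concentration estimate for the \emph{squared} row and column norms $r_i^2=\frac1n\sum_{j=1}^n|X_{ij}|^2$, which are empirical averages of i.i.d.\ mean-one random variables, and then invoke the Bernstein-type bound of \cref{thm:bernstein-subexp} followed by a union bound over the $n$ rows (and, symmetrically, the $n$ columns).

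Concretely, I would first set $Y_{ij}\coloneq |X_{ij}|^2-1$ and note that these are i.i.d., mean zero, and \emph{subexponential}: since $X_{ij}$ is sub-Gaussian, $|X_{ij}|^2$ is subexponential with $\||X_{ij}|^2\|_{\psi_1}\lesssim\|X_{ij}\|_{\psi_2}^2$ by a standard Orlicz-norm estimate \cite[Lem.~2.7.6]{Vershynin2018HDP} (valid verbatim for complex entries, e.g.\ by writing $|X_{ij}|^2=(\Re X_{ij})^2+(\Im X_{ij})^2$), so $c_0\coloneq \|Y_{11}\|_{\psi_1}$ is a finite constant independent of $n$. Writing $r_i^2-1=\frac1n\sum_{j=1}^n Y_{ij}$ and applying \cref{thm:bernstein-subexp} with $t=t_n\coloneq C\sqrt{(\log n)/n}$, and using that $t_n\le c_0$ for all large $n$ (so the $\min$ in that bound is attained by the $t_n^2/c_0^2$ term), gives
\[
\Prob\!\left(|r_i^2-1|\ge t_n\right)\ \le\ 2\exp\!\left(-\frac{c\,n\,t_n^2}{c_0^2}\right)\ =\ 2\,n^{-cC^2/c_0^2}.
\]
Choosing $C$ large enough that $cC^2/c_0^2\ge 12$ and taking a union bound over $1\le i\le n$ yields $\Prob\bigl(\max_i|r_i^2-1|\ge t_n\bigr)\le 2n^{-11}\le n^{-10}$ for $n\ge 2$. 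To pass from $r_i^2$ to $r_i$, I would use $r_i\ge 0$, so $|r_i-1|=|r_i^2-1|/(r_i+1)\le|r_i^2-1|$, whence $\{\max_i|r_i-1|>t_n\}\subseteq\{\max_i|r_i^2-1|>t_n\}$; this gives the row estimate in \cref{eq:row-col-conc}, and the column estimate follows by running the same argument on $X_n^{\mathsf T}$ (equivalently, $c_j^2=\frac1n\sum_i|X_{ij}|^2$ has an identical structure) with the same $C$.

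The argument is essentially routine; the only point needing a bit of care is the bookkeeping of constants, namely checking that the subexponential norm $c_0$ of the centered squared entries is genuinely $n$-independent (automatic here because the entries are identically distributed) and then picking $C$ large enough relative to $c_0$ and the absolute Bernstein constant $c$ so that the per-row failure probability, multiplied by the $n$ terms in the union bound, still beats $n^{-10}$. I do not anticipate any substantive obstacle beyond this.
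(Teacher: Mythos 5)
Your proposal is correct and follows essentially the same route as the paper's own proof: centering $|X_{ij}|^2-1$, invoking its subexponentiality to apply the Bernstein bound of \cref{thm:bernstein-subexp} at scale $\sqrt{(\log n)/n}$ with the constant tuned so the per-row tail is $O(n^{-12})$, a union bound over rows, and the elementary inequality $|r_i-1|\le|r_i^2-1|$ to pass from squared norms to norms, with columns handled symmetrically. No substantive differences to note.
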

\begin{proof} Fix $i$. Since $A_n=(1/\sqrt n)X_n$, with $\E(X_{ij})=0$ and $\E|X_{ij}|^2-|\E X_{ij}|^2=\text{Var}(X_{ij})=1$, we get
\begin{equation}
r_i^2=\|A_n(i,:)\|_2^2=\frac1n\sum_{j=1}^n |X_{ij}|^2
\quad\text{and}\quad
\E [r_i^2] = \frac1n\sum_{j=1}^n \E|X_{ij}|^2 = 1.
\end{equation}
Define centered variables $Z_{ij}\coloneq |X_{ij}|^2-1$ so that
\begin{equation}
r_i^2-1=\frac1n\sum_{j=1}^n Z_{ij}.
\end{equation}
Because $X_{ij}$ is sub-Gaussian, $Z_{ij}$ is \emph{sub-exponential} \cite{Vershynin2018HDP}. Since the $Z_{ij}$ are i.i.d.\ in $j$, we have $c_0\coloneq \|Z_{ij}\|_{\psi_1}$ for all $j$, and hence $\max_{1\le j\le n}\|Z_{ij}\|_{\psi_1}= c_0$. So \cref{thm:bernstein-subexp} yields
\begin{equation}\label{eq:bernstein-applied}
\Prob\!\left(\big|r_i^2-1\big|\geq u\right)
=\Prob\!\left(\left|\frac1n\sum_{j=1}^n Z_{ij}\right|\geq u\right)
\le 2\exp\!\left[-c\,n\,\min\!\left(\frac{u^2}{c_0^2},\,\frac{u}{c_0}\right)\right].
\end{equation}
We now {choose} the deviation level
\begin{equation}
u\coloneq c_1\sqrt{\frac{\log n}{n}},
\end{equation}
with a constant $c_1>0$ to be specified. For $n$ large enough, $u/c_0\le 1$, so
\begin{equation}
\min\!\left(\frac{u^2}{c_0^2},\frac{u}{c_0}\right)=\frac{u^2}{c_0^2}.
\end{equation}
Substitute this $u$ into \cref{eq:bernstein-applied}:
\begin{equation}
\Prob\!\left(\big|r_i^2-1\big|\geq c_1\sqrt{\tfrac{\log n}{n}}\right)
\ \le\ 2\exp\!\left[-\,c\,n\cdot\frac{u^2}{c_0^2}\right]
\ =\ 2\exp\!\left[-\,\frac{cc_1^2}{c_0^2}\,\log n\right].
\end{equation}
Because constants are free to pick in a tail bound, we choose $c_1$ large enough such that\footnote{For notational convenience we choose 12, this can be any sufficiently large value.} 
\begin{equation}\label{eq:b.22ref}
\frac{cc_1^2}{c_0^2}\ge 12 
\quad\Longrightarrow\quad
\Prob\!\left(\big|r_i^2-1\big|\geq c_1\sqrt{\tfrac{\log n}{n}}\right)
\ \le\ 2\exp(-12\log n)
\ =\ 2n^{-12}.
\end{equation}
Define events $E_i\coloneq \big\{|r_i^2-1|\geq c_1\sqrt{\frac{\log n}{n}}\big\}$. Then
\begin{equation}
\Big\{\max_{1\le i\le n}|r_i^2-1|\geq c_1\sqrt{\tfrac{\log n}{n}}\Big\}
\ =\ \bigcup_{i=1}^n E_i.
\end{equation}
By the union bound $\Prob\big(\cup_{i=1}^n E_i\big)\le\sum_{i=1}^n \Prob(E_i)$ and the previous estimate,
\begin{equation}\label{eq:probability-bound-n10}
\Prob\!\left(\max_{1\le i\le n}|r_i^2-1|\geq c_1\sqrt{\tfrac{\log n}{n}}\right)
\ \le\ \sum_{i=1}^n 2n^{-12}
\ =\ 2n^{-11} < n^{-10}
\end{equation}
for all sufficiently large $n$. Now, 
\begin{equation}
|r_i-1|=\frac{|r_i^2-1|}{\,r_i+1\,}.
\end{equation}
Therefore, {always} (since $r_i\ge 0$),
\begin{equation}
|r_i-1|\ \le\ |r_i^2-1|.
\label{eq:basic-sqrt-ineq}
\end{equation}
So we have the implication
\begin{equation}
\Big\{\max_{1\le i\le n}|r_i-1|\geq c_1\sqrt{\tfrac{\log n}{n}}\Big\}
\ \subseteq\
\Big\{\max_{1\le i\le n}|r_i^2-1|\geq c_1\sqrt{\tfrac{\log n}{n}}\Big\}.
\end{equation}
Therefore, using \cref{eq:probability-bound-n10}
\begin{equation}\label{eq:b.28ref}
\Prob\!\left(\max_{1\le i\le n}|r_i-1|\geq C\sqrt{\tfrac{\log n}{n}}\right)
\ <\ n^{-10}.
\end{equation}
\end{proof}
\begin{corollary}\label{cor:DL-DR-expansion}
For all sufficiently large $n$, with probability at least $1-Cn^{-10}$ there exist diagonal $\Delta^\mathrm{L}_{n}$ related to the left diagonal scale function matrix defined in \cref{eq:LDSF} such that
\begin{equation}
D^\mathrm{L}_{n}=I+\Delta^\mathrm{L}_{n},\qquad
\|\Delta^\mathrm{L}_{n}\|_{\op}\ \le\ C\sqrt{\tfrac{\log n}{n}}.
\end{equation}
\end{corollary}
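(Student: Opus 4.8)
The plan is to read the corollary off directly from \cref{lem:row-col-conc} together with the definitions \cref{eq:LDSF}--\cref{eq:RDSF} of the diagonal scale-function matrices. First I would work on the good event
\[
\mathcal{G}_n \coloneq \Big\{\max_{1\le i\le n} |r_i - 1| \le C_0\sqrt{\tfrac{\log n}{n}}\Big\}\cap\Big\{\max_{1\le j\le n} |c_j - 1| \le C_0\sqrt{\tfrac{\log n}{n}}\Big\},
\]
where $r_i=\|A_n(i,:)\|_2$ and $c_j=\|A_n(:,j)\|_2$; by \cref{lem:row-col-conc} and a union bound, $\Prob(\mathcal{G}_n)\ge 1-2n^{-10}$ for all sufficiently large $n$. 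On $\mathcal{G}_n$, once $n$ is large enough that $C_0\sqrt{\log n/n}\le\tfrac12$, every $r_i$ and $c_j$ is bounded below by $\tfrac12>0$, so the non-degenerate branch of \cref{eq:LDSF} (resp.\ \cref{eq:RDSF}) is the one in force and $D^\mathrm{L}_{n,ii}=1/r_i$, $D^\mathrm{R}_{n,jj}=1/c_j$.

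Next I would set $\Delta^\mathrm{L}_n\coloneq D^\mathrm{L}_n-I=\mathrm{diag}(1/r_i-1)$ and $\Delta^\mathrm{R}_n\coloneq D^\mathrm{R}_n-I=\mathrm{diag}(1/c_j-1)$, which are diagonal by construction. The elementary inequality $|1/r-1|=|r-1|/r\le 2|r-1|$, valid whenever $r\ge\tfrac12$, then gives $|1/r_i-1|\le 2|r_i-1|\le 2C_0\sqrt{\log n/n}$ and likewise for the columns. Since the operator norm of a diagonal matrix equals the largest modulus of its diagonal entries, this yields $\|\Delta^\mathrm{L}_n\|_{\op},\ \|\Delta^\mathrm{R}_n\|_{\op}\le 2C_0\sqrt{\log n/n}$ on $\mathcal{G}_n$. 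Choosing $C\ge\max(2C_0,2)$ then absorbs both the norm constant and the factor $2$ in the failure probability, giving the stated conclusion.

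I do not expect a genuine obstacle here: the only two points requiring any care are (i) verifying that concentration forces the row and column norms to be strictly positive, so that the piecewise definition collapses to the reciprocal formula --- immediate for large $n$ --- and (ii) the one-line passage from control of $|r_i-1|$ to control of $|1/r_i-1|$. Everything else is bookkeeping of absolute constants, and no quantitative input beyond \cref{lem:row-col-conc} is needed.
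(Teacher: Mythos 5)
Your proposal is correct and follows essentially the same route as the paper's proof: invoke Lemma~\ref{lem:row-col-conc}, note that on the high-probability event $r_i,c_j\ge\tfrac12$ for large $n$, use $|1/r-1|\le 2|r-1|$, and read off the operator norm of the diagonal perturbation as the largest diagonal entry. The only cosmetic difference is that you spell out the union bound over rows and columns and the branch selection in \cref{eq:LDSF}--\cref{eq:RDSF}, which the paper leaves implicit.
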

\begin{proof}
By Lemma~\ref{lem:row-col-conc}, with probability at least $1-Cn^{-10}$ we have
\be
\max_i |r_i-1|\le c_1\sqrt{\tfrac{\log n}{n}}.
\ee
On this event, for $n$ sufficiently large enough such that $c_1\sqrt{\frac{\log n}{n}}\le \tfrac12$, we have $\min_i r_i\ge \tfrac12$. Hence
\be
\big|(D^\mathrm{L}_{n})_{ii}-1\big|=\left|\frac{1}{r_i}-1\right|=\frac{|1-r_i|}{r_i}\le 2|r_i-1|.
\ee
Define $\Delta^\mathrm{L}_{n}\coloneq D^\mathrm{L}_{n}-I$ which are all diagonal. By rewriting the constant factors ($2c_1=C$), we get
\be
\|\Delta^\mathrm{L}_{n}\|_{\op}=\max_i\big|(D^\mathrm{L}_{n})_{ii}-1\big|\le C\sqrt{\tfrac{\log n}{n}}.
\ee
\end{proof}
\begin{lemma}\label{lem:cov-op}
If $D=I+\Delta$ is diagonal and $D=D^\dagger$ (e.g.\ real diagonal), then for any $A$,
\begin{equation}
\|DAA^\dagger D^\dagger-AA^\dagger\|_{\op}\ \le\ (2\|\Delta\|_{\op}+\|\Delta\|_{\op}^2)\,\|A\|_{\op}^2.
\end{equation}
\end{lemma}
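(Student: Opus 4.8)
The plan is to expand the conjugated product using $D=I+\Delta$ and then bound the error term by term with the triangle inequality and submultiplicativity of the operator norm. First I would use the hypothesis $D=D^\dagger$ to note that $\Delta$ is self-adjoint as well, so $D^\dagger=I+\Delta^\dagger=I+\Delta$, and hence
\begin{equation}
DAA^\dagger D^\dagger=(I+\Delta)\,AA^\dagger\,(I+\Delta)=AA^\dagger+\Delta AA^\dagger+AA^\dagger\Delta+\Delta AA^\dagger\Delta ,
\end{equation}
so that the difference of interest is exactly the three-term remainder
\begin{equation}
DAA^\dagger D^\dagger-AA^\dagger=\Delta AA^\dagger+AA^\dagger\Delta+\Delta AA^\dagger\Delta .
\end{equation}
Next I would apply $\|X+Y+Z\|_{\op}\le\|X\|_{\op}+\|Y\|_{\op}+\|Z\|_{\op}$ together with $\|XY\|_{\op}\le\|X\|_{\op}\|Y\|_{\op}$, using $\|\Delta AA^\dagger\|_{\op},\,\|AA^\dagger\Delta\|_{\op}\le\|\Delta\|_{\op}\|AA^\dagger\|_{\op}$ and $\|\Delta AA^\dagger\Delta\|_{\op}\le\|\Delta\|_{\op}^2\|AA^\dagger\|_{\op}$, which yields
\begin{equation}
\|DAA^\dagger D^\dagger-AA^\dagger\|_{\op}\le\bigl(2\|\Delta\|_{\op}+\|\Delta\|_{\op}^2\bigr)\,\|AA^\dagger\|_{\op}.
\end{equation}
Finally I would invoke the identity $\|AA^\dagger\|_{\op}=\|A\|_{\op}^2$ (the top eigenvalue of $AA^\dagger$ equals the square of the top singular value of $A$) to obtain the stated bound.

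I do not expect any real obstacle here: the estimate is a routine expansion, and the self-adjointness hypothesis enters only to rewrite $D^\dagger=D$ — even without it the same argument gives the same constant, since $\|\Delta^\dagger\|_{\op}=\|\Delta\|_{\op}$. The point of the lemma, for the later arguments, is that when combined with \cref{cor:DL-DR-expansion} (where $D^\mathrm{L}_n,D^\mathrm{R}_n=I+\Delta_n$ with $\|\Delta_n\|_{\op}=O(\sqrt{\log n/n})$ almost surely) it shows that one-sided diagonal balancing perturbs the Gram matrix $A_nA_n^\dagger$ by a vanishing amount in operator norm, hence leaves the limiting empirical spectral distribution unchanged; this is what feeds into the proofs of \cref{thm:lui-rui-support,thm:lui-rui-wigner}.
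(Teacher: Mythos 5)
Your proof is correct and follows essentially the same route as the paper's: expand $(I+\Delta)AA^\dagger(I+\Delta)$, bound the three remainder terms by the triangle inequality and submultiplicativity, and use $\|AA^\dagger\|_{\op}=\|A\|_{\op}^2$ (the paper uses the inequality $\|AA^\dagger\|_{\op}\le\|A\|_{\op}\|A^\dagger\|_{\op}$, which gives the same bound). Your side remark that self-adjointness of $D$ is inessential is also accurate.
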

\begin{proof}
Let $M\coloneq AA^\dagger$. Then
\be
DMD-M=(I+\Delta)M(I+\Delta)-M=\Delta M + M\Delta + \Delta M\Delta.
\ee
Hence by the triangle inequality and submultiplicativity of $\|\cdot\|_{\op}$,
\begin{align}
\|DMD-M\|_{\op}
&\le \|\Delta M\|_{\op} + \|M\Delta\|_{\op} + \|\Delta M\Delta\|_{\op} \\
&\le \|\Delta\|_{\op}\|M\|_{\op} + \|M\|_{\op}\|\Delta\|_{\op} + \|\Delta\|_{\op}\|M\|_{\op}\|\Delta\|_{\op} \\
&= (2\|\Delta\|_{\op}+\|\Delta\|_{\op}^2)\,\|M\|_{\op}.
\end{align}
Finally, $\|M\|_{\op}=\|AA^\dagger\|_{\op}\le \|A\|_{\op}\|A^\dagger\|_{\op}=\|A\|_{\op}^2$, which proves the claim.
\end{proof}
\begin{remark}
We now have all the ingredients required to prove \cref{thm:lui-rui-support}, which is done in Appendix~\ref{proof-of-thm1}.
\end{remark}
\begin{proposition}\label{prop:gauge-fully}
For any square matrix $A = (a_{ij})$ with non-zero entries, we obtain $A^\mathrm{B}$ and set the row, column, and grand log-averages \begin{equation}\label{eq:row-col-gra-mean-rep}
m_i\coloneq \frac1n\sum_{j=1}^n\log|a_{ij}|,
\qquad
n_j\coloneq \frac1n\sum_{i=1}^n\log|a_{ij}|,
\qquad
\hat{m}\coloneq \frac1{n^2}\sum_{i,j=1}^n\log|a_{ij}|.
\end{equation}
If we write $A^\mathrm{B}_{ij}=d^\ell_i a_{ij} {d^r_j}$ (entrywise), then the unknowns $(\log d^\ell_i)_{i=1}^n$ and $(\log {d^r_j})_{j=1}^n$ with $d^\ell_i,{d^r_j}\in\mathbb R_{+}$ (equivalently, $\log d^\ell_i,\log {d^r_j}\in\mathbb R$) that satisfy \cref{eq:UI-balance} are precisely
\begin{equation}
\log d^\ell_i=-m_i-\alpha,\qquad \log {d^r_j}=-n_j-\beta,\qquad \alpha+\beta=-\hat{m} 
\end{equation}
for some real constants $\alpha,\beta$. The pair $(\alpha,\beta)$ is a one-parameter \emph{scale family}. The \emph{symmetric scale} is $\alpha=\beta=-\hat{m}/2$, equivalently
\begin{equation}
\frac1n\sum_{i=1}^n \log d^\ell_i=\frac1n\sum_{j=1}^n \log {d^r_j}.
\end{equation}
\end{proposition}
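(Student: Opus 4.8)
The plan is to linearise the multiplicative balance conditions \cref{eq:UI-balance} by passing to logarithms. Since $A$ has all entries non-zero, positive diagonal scalings do not touch the phases, so the balanced matrix has moduli $|A^\mathrm{B}_{ij}|=d_i\,|a_{ij}|\,r_j$. Writing $x_i\coloneq\log d_i$, $y_j\coloneq\log r_j$ and $L_{ij}\coloneq\log|a_{ij}|$, the row and column geometric-mean conditions $\prod_k|A^\mathrm{B}_{ik}|=1$, $\prod_k|A^\mathrm{B}_{kj}|=1$ become the affine linear equations $x_i+\tfrac1n\sum_k(L_{ik}+y_k)=0$ for each $i$ and $\tfrac1n\sum_k(x_k+L_{kj})+y_j=0$ for each $j$. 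In the notation of the statement, with $\bar x\coloneq\tfrac1n\sum_k x_k$ and $\bar y\coloneq\tfrac1n\sum_k y_k$, these read $x_i=-m_i-\bar y$ and $y_j=-n_j-\bar x$.

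I would then close the system for the two scalars $\bar x,\bar y$. Averaging the first family over $i$ and using $\tfrac1n\sum_i m_i=\bar m$ gives $\bar x=-\bar m-\bar y$; averaging the second family over $j$ (and $\tfrac1n\sum_j n_j=\bar m$) gives the identical relation. Hence, setting $\alpha\coloneq\bar y$ and $\beta\coloneq\bar x$, every solution has the claimed form $\log d_i=-m_i-\alpha$, $\log r_j=-n_j-\beta$ with the single constraint $\alpha+\beta=-\bar m$, and $\alpha$ otherwise free, which is the one-parameter scale family. For the converse I would substitute this ansatz back: with $\beta=-\bar m-\alpha$ one checks $\tfrac1n\sum_j\log r_j=-\bar m-\beta=\alpha$ and $\tfrac1n\sum_i\log d_i=-\bar m-\alpha=\beta$, so the ansatz is self-consistent, and then $x_i+m_i+\bar y=0$ and $\bar x+n_j+y_j=0$ hold identically, i.e.\ \cref{eq:UI-balance} is satisfied.

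For the symmetric scale I would simply read off that $\alpha=\beta$ together with $\alpha+\beta=-\bar m$ forces $\alpha=\beta=-\bar m/2$, and that, since by construction $\beta=\tfrac1n\sum_i\log d_i$ and $\alpha=\tfrac1n\sum_j\log r_j$, the condition $\alpha=\beta$ is equivalent to $\tfrac1n\sum_i\log d_i=\tfrac1n\sum_j\log r_j$. As a remark I would also record that $A^\mathrm{B}$ itself does not depend on the choice within the family, since $A^\mathrm{B}_{ij}=e^{-\alpha-\beta}e^{-m_i-n_j}a_{ij}=e^{\bar m-m_i-n_j}a_{ij}$; the one-parameter freedom only redistributes weight between $D^{\mathrm{B_L}}$ and $D^{\mathrm{B_R}}$.

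The only step with any subtlety is the closure in the second paragraph: one must check that averaging the row equations and averaging the column equations produce the \emph{same} scalar relation, so that the two-scalar system is not overdetermined and exactly one free parameter survives. This works precisely because the grand mean $\bar m$ enters symmetrically, $\tfrac1n\sum_i m_i=\tfrac1n\sum_j n_j=\bar m$; everything else is routine bookkeeping in logarithmic coordinates. Existence of at least one such scaling is clear since, e.g., $\log d_i=-m_i$, $\log r_j=-n_j+\bar m$ already solves the system (or one may cite \cite{uhlmann-main2,ROTHBLUM1992159}).
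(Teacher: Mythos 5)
Your proposal is correct and follows essentially the same route as the paper: take logarithms of the balance conditions, average over rows and columns to identify $\alpha=\frac1n\sum_j\log r_j$ and $\beta=\frac1n\sum_i\log d_i$, and obtain the single constraint $\alpha+\beta=-\bar m$ with the symmetric scale read off immediately. Your additional back-substitution check and the remark that $A^{\mathrm B}$ is independent of the choice within the family are harmless extras beyond what the paper writes out.
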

\begin{proof}
Write $A^\mathrm{B}_{ij}=d^\ell_i a_{ij} {d^r_j}$ (entry-wise), i.e., $D^\mathrm{B_L}=\diag(d^\ell_i),D^\mathrm{B_R}=\diag({d^r_j})$. Then
\begin{equation}
\log|A^\mathrm{B}_{ij}|=\log d^\ell_i+\log|a_{ij}|+\log {d^r_j}.
\end{equation}
Fix $i$ and average over $j$:
\begin{equation}
\frac1n\sum_{j=1}^n \log|A^\mathrm{B}_{ij}|
= \log d^\ell_i+\frac1n\sum_{j=1}^n\log|a_{ij}|+\frac1n\sum_{j=1}^n\log {d^r_j}
= \log d^\ell_i+m_i+\alpha,
\end{equation}
where we set $\alpha\coloneq \frac1n\sum_{j=1}^n \log {d^r_j}$. The row constraint \cref{eq:UI-balance} gives
\begin{equation}\label{eq:row-eq}
\log d^\ell_i=-m_i-\alpha\qquad(1\le i\le n).
\end{equation}
Fix $j$ and average over $i$:
\begin{equation}
\frac1n\sum_{i=1}^n \log|A^\mathrm{B}_{ij}|
= \frac1n\sum_{i=1}^n\log d^\ell_i+\frac1n\sum_{i=1}^n\log|a_{ij}|+\log {d^r_j}
= \beta+n_j+\log {d^r_j},
\end{equation}
where $\beta\coloneq \frac1n\sum_{i=1}^n \log d^\ell_i$. The column constraint \cref{eq:UI-balance} gives
\begin{equation}\label{eq:col-eq}
\log {d^r_j}=-n_j-\beta\qquad(1\le j\le n).
\end{equation}
Average \cref{eq:row-eq} over $i$ or \cref{eq:col-eq} over $j$ to get the single constraint
\begin{equation}
\alpha+\beta=-\hat{m}.
\end{equation}
This is exactly the displayed solution family. The \emph{symmetric scale} chooses $\alpha=\beta=-\hat{m}/2$, which forces
\begin{equation}
\frac1n\sum_i \log d^\ell_i=\beta=-\hat{m}/2=\alpha=\frac1n\sum_j \log {d^r_j}.\qedhere
\end{equation}
\end{proof}
\begin{remark}
For continuous entry laws (e.g.\ Gaussian Ginibre/Wigner), $\Prob(a_{ij}=0)=0$, so the logarithms are well-defined almost surely and our BUISVD scalings are a.s.\ well-posed. 
\end{remark}
\begin{lemma}\label{lem:log-mean-conc}
Let $a_{ij}=X_{ij}/\sqrt n$ where $X_{ij}$ are i.i.d.\ real or complex random variables with \emph{mean} $0$, \emph{variance} $1$, \emph{sub-Gaussian} tails, $\Prob(X_{ij}=0)=0$, and a small-ball bound near $0$, namely: there exist $\alpha>0$ and $c_2<\infty$ such that $\Prob(|X_{ij}|\le t)\le c_2\,t^{\alpha}$ for all $t\in(0,1)$. Define the row, column, and grand log-averages $m_i,n_j,\hat{m}$ (as in \cref{eq:row-col-gra-mean-rep}). Set $c_\star\coloneq \E\log|X_{11}|$ (finite under the assumptions above). Then there exists $C>0$ such that, for all sufficiently large $n$, with probability at least $1-Cn^{-10}$,
\begin{equation}
\max_i\Big|m_i-\big(c_\star-\tfrac12\log n\big)\Big|,\quad
\max_j\Big|n_j-\big(c_\star-\tfrac12\log n\big)\Big|,\quad
\Big|\hat{m}-\big(c_\star-\tfrac12\log n\big)\Big|
\ \le\ C\sqrt{\tfrac{\log n}{n}}.
\end{equation}
\end{lemma}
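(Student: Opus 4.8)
The plan is to reduce the statement to a concentration bound for i.i.d.\ averages and then run the same Bernstein-plus-union-bound argument used in the proof of Lemma~\ref{lem:row-col-conc}. First I would strip off the deterministic shift: since $\log|a_{ij}|=\log|X_{ij}|-\tfrac12\log n$, setting $Y_{ij}\coloneq \log|X_{ij}|$ (i.i.d.\ with law independent of $n$) gives
\begin{equation}
m_i-\big(c_\star-\tfrac12\log n\big)=\frac1n\sum_{k=1}^n (Y_{ik}-c_\star),
\qquad
\bar m-\big(c_\star-\tfrac12\log n\big)=\frac1{n^2}\sum_{i,j=1}^n (Y_{ij}-c_\star),
\end{equation}
and analogously for $n_j$. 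So the statement is exactly the claim that the row, column, and grand averages of $Y_{ij}-c_\star$ concentrate at scale $\sqrt{\log n/n}$.

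The one substantive step is to show that $Y_{11}=\log|X_{11}|$ is \emph{sub-exponential} and that $c_\star=\E Y_{11}$ is finite. The right tail is harmless: sub-Gaussianity of $X_{11}$ gives $\Prob(Y_{11}\ge s)=\Prob(|X_{11}|\ge e^{s})\le 2\exp(-c\,e^{2s})$, which decays faster than any exponential. The left tail is precisely where the small-ball hypothesis enters: $\Prob(Y_{11}\le -s)=\Prob(|X_{11}|\le e^{-s})\le C_0 e^{-\alpha s}$ for all $s>0$. Combining, $\Prob(|Y_{11}|\ge s)\le C e^{-cs}$, i.e.\ $\|Y_{11}\|_{\psi_1}<\infty$, and integrating the left tail yields $\E|Y_{11}|<\infty$, so $c_\star\in\mathbb R$ and $c_0\coloneq \|Y_{11}-c_\star\|_{\psi_1}$ is a finite constant that does not depend on $n$.

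Given this, I would set $Z_{ij}\coloneq Y_{ij}-c_\star$ (i.i.d., mean zero, $\|Z_{ij}\|_{\psi_1}\le c_0$) and apply Bernstein's inequality (Theorem~\ref{thm:bernstein-subexp}) row by row,
\begin{equation}
\Prob\!\left(\Big|\frac1n\sum_{k=1}^n Z_{ik}\Big|>u\right)\le 2\exp\!\Big[-c\,n\,\min\big(u^2/c_0^2,\,u/c_0\big)\Big].
\end{equation}
Taking $u=c_1\sqrt{\log n/n}$ with $c_1$ large enough that $c c_1^2/c_0^2\ge 12$ (and $n$ large so that $u/c_0\le 1$) makes the right side $\le 2n^{-12}$; a union bound over the $n$ rows gives $\max_i\bigl|m_i-(c_\star-\tfrac12\log n)\bigr|\le c_1\sqrt{\log n/n}$ with probability at least $1-2n^{-11}\ge 1-n^{-10}$. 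The identical argument with columns handles $n_j$, and the grand average is then automatic, since $\bar m-(c_\star-\tfrac12\log n)=\frac1n\sum_i\bigl(m_i-(c_\star-\tfrac12\log n)\bigr)$ is dominated in absolute value by $\max_i\bigl|m_i-(c_\star-\tfrac12\log n)\bigr|$ (alternatively, apply Bernstein directly to the $n^2$ variables $Z_{ij}$, which gives a superpolynomially small failure probability). Intersecting the at-most-three events and absorbing constants yields the displayed bound with a single $C$.

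The main obstacle is entirely contained in the second paragraph: establishing the sub-exponential tail of $\log|X_{11}|$, where the left tail must be controlled by the small-ball assumption and finiteness of $c_\star$ verified. Everything downstream is the routine template already executed for Lemma~\ref{lem:row-col-conc}, and I would note that the small-ball exponent $\alpha$ enters only through the implied constants (and the threshold ``sufficiently large $n$''), not through the $\sqrt{\log n/n}$ rate.
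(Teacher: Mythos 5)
Your proposal is correct and follows essentially the same route as the paper's proof: establish sub-exponentiality of $\log|X_{11}|$ (right tail from sub-Gaussianity, left tail from the small-ball bound), then apply the sub-exponential Bernstein inequality with $u\asymp\sqrt{\log n/n}$ and a union bound over rows and columns, handling the grand mean either by domination or by Bernstein over $n^2$ samples. The only cosmetic difference is that the paper centers at $\E\log|a_{ik}|$ from the start and extends the small-ball bound to all $t>0$, whereas you work with the uncentered $\log|X_{11}|$ and center afterwards, which amounts to the same estimate.
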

\begin{proof}
We do rows; columns are identical; $\hat{m}$ is a single average over $n^2$ terms. First, under the stated assumptions $c_\star=\E\log|X_{11}|$ is finite. Also, by increasing $c_2$ if necessary, we may assume the small-ball bound $\Prob(|X_{11}|\le t)\le c_2 t^\alpha$ holds for all $t>0$ (it is trivial for $t\ge1$ since $\Prob(|X_{11}|\le t)\le 1$).  Fix $i$. Define
\begin{equation}\label{def:of-y}
m_i-\Big(c_\star-\tfrac12\log n\Big)=\frac1n\sum_{j=1}^n Y_{ij},\qquad
Y_{ij}\coloneq \log|a_{ij}|-\E\log|a_{ij}|=\log|X_{ij}|-\E\log|X_{ij}|.
\end{equation}
Thus $\{Y_{ij}\}_{j=1}^n$ are i.i.d.\ mean-zero. We now claim $Y\coloneq \log|X|-\E\log|X|$ is \emph{sub-exponential}, i.e.\ $\|Y\|_{\psi_1}\le c_3$ for some $c_3<\infty$ depending only on the entry law.

To prove the claim note that sub-Gaussianity of $X$ implies the tail bound $\Prob(|X|\geq u)\leq 2 e^{-c_4u^2}$, and by definition \cref{def:of-y}
\be
Y\geq t \iff \log|X|-c_\star \geq t \iff \log|X|\geq t+c_\star \iff |X| \geq e^{t+c_\star}, 
\ee
and therefore $\Prob(Y\geq t)=\Prob(|X|\geq e^{t+c_\star})$. So now applying the sub-Gaussian tail bound with $u=e^{t+c_\star}$, we have for the \emph{right tail} and $t\ge0$,
\begin{equation}
\Prob(Y\ge t)=\Prob\big(|X|\ge e^{t+c_\star}\big)\ \le\ 2\exp(-c_4(e^{t+c_\star})^2) = 2\exp\!\big(-c_5\,e^{2t}\big),
\end{equation}
which is super-exponential in $t$, which is stronger than exponential decay, and thus automatically supports the sub-exponential behaviour. For the \emph{left tail}, we can obtain from the definition \cref{def:of-y} again that $\Prob(Y\leq -t)=\Prob(|X|\leq e^{c_\star-t})$, and the small-ball bound gives, for $t\ge0$,
\begin{equation}
\Prob(Y\le -t)=\Prob\big(|X|\le e^{-t+c_\star}\big)\ \le\ c_2\,e^{\alpha c_\star}\,e^{-\alpha t} \ =:\ c_6e^{-\alpha t}.
\end{equation}
Hence $\Prob(|Y|\ge t)\le A e^{-B t}$ for some $A,B>0$, which implies $Y$ is sub-exponential (see, e.g., \cite{Vershynin2018HDP}). This proves the claim. Applying \cref{thm:bernstein-subexp}, for all $u>0$,
\begin{equation}
\Prob\!\left(\left|\frac1n\sum_{j=1}^n Y_{ij}\right|\geq u\right)
\ \le\ 2\exp\!\left(-c n\,\min\!\left(\frac{u^2}{c_3^2},\frac{u}{c_3}\right)\right).
\end{equation}
Choose $u=c_7\sqrt{\frac{\log n}{n}}$ with $c_7$ large so that the RHS is $\le n^{-12}$. A union bound over $i=1,\dots,n$ as in steps from \cref{eq:b.22ref} to \cref{eq:b.28ref} yields
\begin{equation}
\Prob\!\left(\max_i\left|m_i-\Big(c_\star-\tfrac12\log n\Big)\right|\geq c_7\sqrt{\tfrac{\log n}{n}}\right)\le n^{-11}.
\end{equation}
The column bound is identical. For the grand mean,
\begin{equation}
\hat{m}-\Big(c_\star-\tfrac12\log n\Big)=\frac1{n^2}\sum_{i,j=1}^n Y_{ij},
\end{equation}
and the same \cref{thm:bernstein-subexp} bound with $n^2$ samples gives a stronger tail; we keep the displayed order for uniformity. Adjusting constants and $n$ large ensures the stated $1-Cn^{-10}$ bound.
\end{proof}
\begin{remark}
If $X_{ij}$ are standard real Gaussian $\,\mathcal N(0,1)$ then $c_\star=\mathbb E\log|X_{11}|=c_{\beta=1}$. If $X_{ij}$ are standard circular complex Gaussian $\,\mathcal{CN}(0,1)$ (with $\mathbb E|X_{11}|^2=1$) then $c_\star=\mathbb E\log|X_{11}|=c_{\beta=2}$, as computed in Lemma~\ref{lem:log-constants}.
\end{remark}
\begin{remark}
We now have all the results required to prove \cref{thm:uisvd-support}, as done in Appendix~\ref{proof-of-thm2}. Thereafter we need one Lemma each to prove \cref{thm:lui-rui-wigner,thm:ui-wigner}, which are given below.
\end{remark}
\begin{lemma}\label{lem:wigner-rowcol-norms}
Let $A_n = (a_{ij})$ be Wigner-$\beta$ (GOE/GUE) with the $1/\sqrt n$ scaling and set
$r_i\coloneq \|A_n(i,:)\|_2$.
Then there exists $C>0$ such that, with probability at least $1-C n^{-10}$,
\begin{equation}\label{eq:wigner-rowcol-norms}
\max_i \bigl|r_i-1\bigr| \;\le\; C\sqrt{\tfrac{\log n}{n}}.
\end{equation}
\end{lemma}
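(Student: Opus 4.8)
The plan is to reprise the argument of \cref{lem:row-col-conc}, the only new ingredient being the observation that although a Wigner matrix does not have globally independent entries, the entries \emph{within a single row} are mutually independent. First I would fix a row index $i$ and write $r_i^2=\sum_{k=1}^n|a_{ik}|^2=|a_{ii}|^2+\sum_{k\ne i}|a_{ik}|^2$. In the GOE/GUE model with the $1/\sqrt n$ scaling of \cref{tab:wigner-ensembles}, each off-diagonal entry $a_{ik}$ ($k\ne i$) is a scaled Gaussian with $\mathbb E|a_{ik}|^2=1/n$, while the diagonal entry satisfies $\mathbb E|a_{ii}|^2=O(1/n)$. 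Crucially, the symmetry constraint $a_{ik}=a_{ki}$ (resp.\ $a_{ik}=\overline{a_{ki}}$) only couples entries lying in \emph{different} rows, so the $n$ numbers $\{a_{ik}\}_{k=1}^n$ involve pairwise disjoint collections of the underlying Gaussian variables and are therefore mutually independent. Consequently $\mathbb E[r_i^2]=\frac{n-1}{n}+O(1/n)=1+O(1/n)$, and $r_i^2-\mathbb E[r_i^2]$ is a normalised sum of $n$ independent, centred terms.

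Next I would verify the subexponential tail: $n|a_{ik}|^2$ is a scaled chi-square variable, so it has $\psi_1$-norm of order $1$ uniformly in $i,k$, and hence $X_{ik}\coloneq n\bigl(|a_{ik}|^2-\mathbb E|a_{ik}|^2\bigr)$ satisfies $\max_k\|X_{ik}\|_{\psi_1}\le c_0$ for an absolute constant $c_0$. Since $r_i^2-\mathbb E[r_i^2]=\frac1n\sum_{k=1}^nX_{ik}$, applying \cref{thm:bernstein-subexp} at deviation level $u=C_1\sqrt{\log n/n}$ (with $C_1$ and $n$ large enough that $u/c_0\le1$, so that the $\min$ selects the quadratic branch) gives $\Prob\bigl(|r_i^2-\mathbb E[r_i^2]|>u\bigr)\le 2n^{-12}$. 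A union bound over $i=1,\dots,n$ then yields $\Prob\bigl(\max_i|r_i^2-\mathbb E[r_i^2]|>C_1\sqrt{\log n/n}\bigr)\le 2n^{-11}\le n^{-10}$ for $n$ large. Combining this with $|\mathbb E[r_i^2]-1|=O(1/n)$ and the elementary inequality $|r_i-1|\le|r_i^2-1|$ (valid since $r_i\ge0$), exactly as in the closing lines of \cref{lem:row-col-conc}, produces $\max_i|r_i-1|\le C\sqrt{\log n/n}$ on this high-probability event.

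The column bound needs no separate argument: for GOE the $j$-th column is identical to the $j$-th row, and for GUE it is its entrywise conjugate, so $c_j=r_j$ and the same estimate applies verbatim. The one point demanding care — rather than a genuine obstacle — is the within-row independence claim together with the separate bookkeeping for the diagonal entry, whose variance differs from $1/n$ by at most an $O(1)$ factor and therefore perturbs $\mathbb E[r_i^2]$ only at order $1/n$, which is comfortably absorbed into the target bound; everything else is a direct transcription of \cref{lem:row-col-conc}.
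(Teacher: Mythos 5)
Your proposal is correct and follows essentially the same route as the paper's proof: separate the diagonal term, use the within-row independence of the entries, apply the subexponential Bernstein inequality at level $\sqrt{\log n/n}$ with a union bound, and absorb the $O(1/n)$ discrepancy in $\mathbb{E}[r_i^2]$ before passing from $r_i^2$ to $r_i$. Your closing observation that $c_j=r_j$ by Hermitian symmetry is a slightly tidier way to get the column bound than the paper's "repeat the argument," but it is the same proof in substance.
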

\begin{proof}
We prove the row bound; the column bound is identical by symmetry. Fix $i\in\{1,\dots,n\}$ and write
\begin{equation}
r_i^2=\sum_{j=1}^n |a_{ij}|^2=\sum_{j\neq i}|a_{ij}|^2+|a_{ii}|^2.
\end{equation}
By the GOE/GUE constructions in Table~\ref{tab:wigner-ensembles}, for fixed $i$ the variables $\{a_{ij}:j\neq i\}$ are independent and mean zero.
Moreover, the entries of $G$ are standard Gaussians in the sense that $\E G_{ij}^2=\Var(G_{ij})=1$ in the GOE case, while $\E|G_{ij}|^2=1$ in the GUE case. Also for the GUE case, i.e., $G_{ij}\sim \mathcal{CN}(0,1)$, the real part $\operatorname{Re}(G_{ij})\sim \mathcal N(0,\frac{1}{2})$, and thus $\E(\operatorname{Re}G_{ij})^2=\frac{1}{2}$.

From \cref{tab:wigner-ensembles} we see that if $\beta=1$ (GOE) then $a_{ij}=(G_{ij}+G_{ji})/\sqrt{2n}$ for $j\neq i$ and $a_{ii}=2G_{ii}/\sqrt{2n}$, so we get
\be
\E|a_{ij}|^2=(\E G_{ij}^2+\E G_{ji}^2)/(2n)=1/n, \qquad \E|a_{ii}|^2=4\,\E G_{ii}^2/(2n)=2/n.
\ee
If $\beta=2$ (GUE) then $a_{ij}=(G_{ij}+\overline{G_{ji}})/\sqrt{2n}$ for $j\neq i$ and $a_{ii}=2\operatorname{Re}(G_{ii})/\sqrt{2n}$, hence
\be
\E|a_{ij}|^2=(\E|G_{ij}|^2+\E|G_{ji}|^2)/(2n)=1/n, \qquad
\E|a_{ii}|^2=4\,\E(\operatorname{Re}G_{ii})^2/(2n)=1/n.
\ee
Thus $\E|a_{ii}|^2=\nu_\beta/n$ with $\nu_1=2$ and $\nu_2=1$. Therefore
\begin{equation}\label{eq:vbeta-values}
\E r_i^2=\frac{n-1}{n}+\frac{\nu_\beta}{n}=
\begin{cases}
1+\frac{1}{n}, & \beta=1\text{ (GOE)},\\[2pt]
1, & \beta=2\text{ (GUE)}.
\end{cases}
\end{equation}
Define centered variables $Y_{ij}\coloneq |a_{ij}|^2-\E|a_{ij}|^2$ for $j\neq i$ and $X_{ii}\coloneq |a_{ii}|^2-\E|a_{ii}|^2$. Since $a_{ij}$ is sub-Gaussian with $\|a_{ij}\|_{\psi_2}=O(n^{-1/2})$, as before from claim within proof of Lemma~\ref{lem:log-mean-conc} (see also \cite{Vershynin2018HDP}) we get $\|Y_{ij}\|_{\psi_1}=O(n^{-1})$ uniformly in $j\neq i$ (and in $\beta$), and similarly $\|X_{ii}\|_{\psi_1}=O(n^{-1})$, i.e., sub-exponential. Therefore, applying \cref{thm:bernstein-subexp} to the $n$ independent mean-zero variables $\{Y_{ij}:j\neq i\}\cup\{X_{ii}\}$ yields, for $u\in(0,1)$,
\begin{equation}\label{eq:bernstein-sum}
\Prob\Big(\bigl|r_i^2-\E r_i^2\bigr|\geq u\Big)\ \le\ 2\exp\!\big(-c n u^2\big),
\end{equation}
for some constant $c>0$. Choose $u\coloneq c\sqrt{(\log n)/n}$ with $c$ large enough so that the RHS is $\le n^{-11}$. On the event $\{|r_i^2-\E r_i^2|\le u\}$ we get
\begin{equation}
\bigl|r_i-\sqrt{\E r_i^2}\bigr|\ =\ \frac{|r_i^2-\E r_i^2|}{r_i+\sqrt{\E r_i^2}}\ 
\ \le\ \,|r_i^2-\E r_i^2|\ \le\ u.
\end{equation}
Hence
\begin{equation}
|r_i-1| \ \le\ \bigl|r_i-\sqrt{\E r_i^2}\bigr|+\bigl|\sqrt{\E r_i^2}-1\bigr|
\ \le\ u + \bigl|\E r_i^2-1\bigr|
\ \le\ c\sqrt{\tfrac{\log n}{n}}+\frac{1}{n}.
\end{equation}
We absorb the negligible $1/n$ term into the $\sqrt{(\log n)/n}$ term by increasing $c$ to $\hat c$ if needed, to obtain
\begin{equation}
\Prob\!\left(|r_i-1|\geq \hat c\sqrt{\tfrac{\log n}{n}}\right)\ \le\ n^{-11}.
\end{equation}
A union bound over $i=1,\dots,n$ yields
\begin{equation}
\Prob\!\left(\max_i |r_i-1|\geq C\sqrt{\tfrac{\log n}{n}}\right)\ \le\ C n^{-10}\ 
\end{equation}
for an adjusted constant $C>0$, proving \cref{eq:wigner-rowcol-norms}. The column bound follows identically by applying the same argument.
\end{proof}
\begin{remark}\label{rem:wigner-entry-laws}
Under the GOE/GUE constructions in \cref{tab:wigner-ensembles} with the $1/\sqrt n$ scaling, for each fixed $i$ the collection $\{a_{ij}\}_{j\ne i}\cup\{a_{ii}\}$ is independent, with
\be
a_{ij}\stackrel{d}{=} \frac{g_\beta}{\sqrt n}\quad (j\ne i),
\qquad
a_{ii}\stackrel{d}{=} \frac{\sqrt{\nu_\beta}\,g_1}{\sqrt n},
\qquad
\nu_1=2,\ \nu_2=1.
\ee
\end{remark}
\begin{lemma}\label{lem:log-mean-wigner}
Let $A_n = (a_{ij})$ be Wigner-$\beta$ with the $1/\sqrt n$ scaling, and define the row, column, and grand log-averages $m_i,n_j,\hat{m}$ as in \cref{eq:row-col-gra-mean-rep}. Let $c_\beta$ be as in Lemma~\ref{lem:log-constants}. Then $\max_i\big|m_i-(c_\beta-\tfrac12\log n)\big|=O_{\mathrm{a.s.}}\!\big(\sqrt{(\log n)/n}\big)$, and similarly for $n_j$ and $\hat{m}$.
\end{lemma}
\begin{proof}
We treat rows; columns are identical; $\hat m$ is similar. Set $\mu\coloneq c_\beta-\tfrac12\log n$ and fix $i$. By \cref{rem:wigner-entry-laws}, $\{\log|a_{ij}|\}_{j\ne i}$ are independent with $\E\log|a_{ij}|=\mu$, and we define
\be
Y_{ij}\coloneq \log|a_{ij}|-\mu,\qquad j\ne i.
\ee
By the claim within proof of Lemma~\ref{lem:log-mean-conc}, the variables $Y_{ij}$ are centered sub-exponential with parameters depending only on $\beta$. Hence by \cref{thm:bernstein-subexp}, for all $u\in(0,1)$,
\begin{equation}\label{eq:bernstein-row-log}
\Prob\!\left(\left|\frac1n\sum_{j\ne i}Y_{ij}\right|\geq u\right)\le 2\exp(-c n u^2),
\end{equation}
for some $c>0$ depending only on $\beta$. Write
\begin{equation}\label{eq:b.70}
m_i-\mu=\frac1n\sum_{j\ne i}Y_{ij}+\Delta_i,
\qquad
\Delta_i\coloneq \frac1n\big(\log|a_{ii}|-\mu\big).
\end{equation}
By \cref{rem:wigner-entry-laws}, $a_{ii}\stackrel{d}{=}\sqrt{\nu_\beta}\,g_1/\sqrt n$, so $\log|a_{ii}|-\E\log|a_{ii}|$ is sub-exponential by the same argument as in Lemma~\ref{lem:log-mean-conc} (applied to $g_1\sim\mathcal N(0,1)$). Therefore, using the sub-exponential tail bound, for all $u\in(0,1)$,
\begin{equation}\label{eq:diagonal-small}
\Prob\big(|\Delta_i-\E\Delta_i|\geq u\big)
=
\Prob\Big(\big|\log|a_{ii}|-\E\log|a_{ii}|\big|\geq nu\Big)
\le 2\exp(-c\,\min(n^2u^2,n u)),
\end{equation}
for a (possibly different) $c>0$ depending only on $\beta$. Also, 
\be
|\E\Delta_i|=\frac1n|\E\log|a_{ii}|-\mu|=O(1/n).
\ee
Now take $u_n\coloneq \hat{c}\sqrt{\frac{\log n}{n}}$. So for sufficiently large $n$ we have that $|\E\Delta_i|\le u_n$. Writing $\Delta_i=(\Delta_i-\E\Delta_i)+\E\Delta_i$, we get by \cref{eq:b.70} and the triangle inequality,
\be
|m_i-\mu|
\le \left|\frac1n\sum_{j\ne i}Y_{ij}\right|+|\Delta_i-\E\Delta_i|+|\E\Delta_i|,
\ee
where the events $\{|m_i-\mu|\geq 3u_n\}\subseteq\left\{\left|\frac1n\sum_{j\ne i}Y_{ij}\right|\geq u_n\right\}\cup\{|\Delta_i-\E\Delta_i|\geq u_n\}$. Thus,
\be
\Prob\big(|m_i-\mu|\geq 3u_n\big)
\le
\Prob\!\left(\left|\frac1n\sum_{j\ne i}Y_{ij}\right|\geq u_n\right)
+\Prob\big(|\Delta_i-\E\Delta_i|\geq u_n\big).
\ee
Using \cref{eq:bernstein-row-log} and \cref{eq:diagonal-small} (with $u=u_n$), where we note that $nu_n\to\infty$ and $\min(n^2u_n^2,nu_n)=nu_n\ge n u_n^2$ for all sufficiently large $n$, we obtain
\be
\Prob\big(|m_i-\mu|\geq 3u_n\big)\le C\exp(-c n u_n^2),
\ee
for constants $C,c>0$ depending only on $\beta$. A union bound gives
\be
\Prob\Big(\max_{1\le i\le n}|m_i-\mu|\geq 3u_n\Big)
\le C\,n\,e^{-c n u_n^2}
= C\,n^{\,1-c{\hat{c}}^2},
\ee
which is summable for $\hat{c}$ large enough; Borel-Cantelli yields $\max_i|m_i-\mu|=O_{a.s.}(u_n)$. The proof for $n_j$ is entirely analogous. For $\hat m=\frac1{n^2}\sum_{i,j}\log|a_{ij}|$, symmetry $|a_{ji}|=|a_{ij}|$ implies
\be
\hat m-\mu
=\frac{2}{n^2}\sum_{i<j}\big(\log|a_{ij}|-\mu\big)
+\frac{1}{n^2}\sum_{i=1}^n\big(\log|a_{ii}|-\mu\big).
\ee
It follows similarly that $|\hat m-\mu|=O_{a.s.}(u_n)$.
\end{proof}
\subsection{Proofs of Section~\ref{sec:RMT}}\label{sec:proofs-RMT}
\begin{proof}[\textbf{Proof of \cref{thm:lui-rui-support}}]\label{proof-of-thm1}
We do the LUI case, RUI is analogous. Let $r_i\coloneq \|A_n(i,:)\|_2$. Lemma~\ref{lem:row-col-conc} yields with high-probability
\begin{equation}
\max_i|r_i-1|\ \le\ c_1\sqrt{\tfrac{\log n}{n}}.
\end{equation}
Using Corollary~\ref{cor:DL-DR-expansion},
\begin{equation}\label{eq:diag-expand}
D^\mathrm{L}_{n}=I+\Delta^\mathrm{L}_{n},\qquad
\|\Delta^\mathrm{L}_{n}\|_{\op}\ \le\ c_1\sqrt{\tfrac{\log n}{n}}.
\end{equation}
So we get
\begin{equation}
A^\mathrm{L}_n(A^\mathrm{L}_n)^\dagger
= D^\mathrm{L}_{n}A_nA_n^\dagger D^\mathrm{L}_{n}
= A_nA_n^\dagger + E^\mathrm{L}_n,
\end{equation}
with $E^\mathrm{L}_n\coloneq \Delta^\mathrm{L}_{n}A_nA_n^\dagger + A_nA_n^\dagger\Delta^\mathrm{L}_{n} + \Delta^\mathrm{L}_{n}A_nA_n^\dagger\Delta^\mathrm{L}_{n}$. Lemma~\ref{lem:cov-op} gives
\begin{equation}\label{eq:E-matrix-bound}
\|E^\mathrm{L}_n\|_{\op}\ \le\ (2\|\Delta^\mathrm{L}_{n}\|_{\op}+\|\Delta^\mathrm{L}_{n}\|_{\op}^2)\,\|A_n\|_{\op}^2
\ \le\ c_2\sqrt{\tfrac{\log n}{n}}\cdot \|A_n\|_{\op}^2.
\end{equation}
By \cref{thm:MP-edge} we have
\begin{equation}\label{eq:A-matrix-opnorm}
\lambda_{\max}\!\Big(\frac1n X_nX_n^\dagger\Big)\to 4\quad\Longrightarrow\quad \|A_n\|_{\op}=\sqrt{\lambda_{\max}(A_nA_n^\dagger)}\to 2
\end{equation}
almost surely. Define the high-probability event
\begin{equation}
\mathcal E_n\coloneq \Big\{\max_i|r_i-1|\le c_1\sqrt{\tfrac{\log n}{n}}\Big\}.
\end{equation}
By Lemma~\ref{lem:row-col-conc}, for the complement event $\Prob(\mathcal E_n^c)\le C\,n^{-10}$ and hence $\sum_{n=1}^\infty \Prob(\mathcal E_n^c)<\infty$; by the Borel-Cantelli lemma,
$\mathcal E_n$ holds eventually almost surely. So from \cref{eq:diag-expand,eq:E-matrix-bound,eq:A-matrix-opnorm} we get 
\begin{equation}
\|E^\mathrm{L}_n\|_{\op}\ \longrightarrow\ 0\qquad\text{almost surely}.
\end{equation}
Now set
\begin{equation}
M_n\coloneq A_nA_n^\dagger,\qquad N_n\coloneq A^\mathrm{L}_n(A^\mathrm{L}_n)^\dagger,\qquad E_n\coloneq N_n-M_n=E^\mathrm{L}_n.
\end{equation}
For any fixed $z=x+\mathrm{i}y$ with $y>0$, Lemma~\ref{lem:esd-continuity} gives
\begin{equation}
|m_{N_n}(z)-m_{M_n}(z)|\ \le\ \frac{\|E_n\|_{\op}}{y^2}\ \xrightarrow[n\to\infty]{a.s.}\ 0.
\end{equation}
By \cref{thm:MP-law}, $m_{M_n}(z)\to m_{\mathrm{MP}(1,1)}(z)$ almost surely for every such $z$; hence we also have
$m_{N_n}(z)\to m_{\mathrm{MP}(1,1)}(z)$ almost surely. Since $\mathrm{MP}(1,1)$ has density $\frac{1}{2\pi x}\sqrt{x(4-x)}\,\mathbf 1_{(0,4)}(x)$ on $[0,4]$, its pushforward under $x\mapsto \sqrt{x}$ has density
\be
f(s)=2s\cdot \frac{1}{2\pi s^2}\sqrt{s^2(4-s^2)}\,\mathbf 1_{(0,2)}(s)=\frac{1}{\pi}\sqrt{4-s^2}\,\mathbf 1_{[0,2]}(s),
\ee
i.e.\ the quarter-circle law on $[0,2]$.  By Lemma~\ref{lem:esd-continuity} the empirical spectral distribution (ESD) of $N_n$ converges almost surely to the MP$(1,1)$ law supported on $[0,4]$. Finally, by Lemma~\ref{thm:pushforward} the empirical measure of singular values of $A^\mathrm{L}_n$ converges almost surely to the quarter-circle law on $[0,2]$ with density $f(s)=\frac{1}{\pi}\sqrt{4-s^2}\,\mathbf{1}_{[0,2]}(s)$.

The RUI case is identical where we replace $D^\mathrm{L}_{n}$ by $D^\mathrm{R}_{n}$ and use $\|A_n(:,j)\|_2$ instead of $r_i=\|A_n(i,:)\|_2$.
\end{proof}
\begin{proof}[\textbf{Proof of \cref{thm:uisvd-support}}]\label{proof-of-thm2}
Write $a_{ij}\coloneq X_{ij}/\sqrt n$ (so $A_n=(a_{ij})$) and define the row, column, and grand log-averages $m_i,n_j,\hat{m}$ as in \cref{eq:row-col-gra-mean-rep}. By Proposition~\ref{prop:gauge-fully} (symmetric scale),
\begin{equation}\label{eq:gauge-symmetric-log}
\log(D^{\mathrm{B_L}}_{n})_{ii}=-m_i+\tfrac12\hat{m},\qquad
\log(D^{\mathrm{B_R}}_{n})_{jj}=-n_j+\tfrac12\hat{m}.
\end{equation}
Since $\Prob(X_{11}=0)=0$, we have $\Prob(a_{ij}\neq 0\ \forall i,j)=1$ for each fixed $n$, so Proposition~\ref{prop:gauge-fully} applies almost surely. Moreover, changing $(\alpha,\beta)$ within the one-parameter family in Proposition~\ref{prop:gauge-fully} rescales $D^{\mathrm{B_L}}$ and $D^{\mathrm{B_R}}$ inversely and leaves
$A_n^{\mathrm B}=D^{\mathrm{B_L}}_nA_nD^{\mathrm{B_R}}_n$ unchanged; the symmetric scale fixes a unique choice of the pair $(D^{\mathrm{B_L}}_n,D^{\mathrm{B_R}}_n)$. Introduce deviations
\begin{equation}\label{eq:83ref}
\delta_i\coloneq m_i-\Big(c_\star-\tfrac12\log n\Big),\quad
\epsilon_j\coloneq n_j-\Big(c_\star-\tfrac12\log n\Big),\quad
\bar\delta\coloneq \hat{m}-\Big(c_\star-\tfrac12\log n\Big),
\end{equation}
so that by Lemma~\ref{lem:log-mean-conc}, $\max_i|\delta_i|,\ \max_j|\epsilon_j|,\ |\bar\delta|\ \le C\sqrt{\tfrac{\log n}{n}}$ on the same high-probability event. We get
\begin{equation}\label{eq:repeat-ui-stretch}
\log(D^{\mathrm{B_L}}_{n})_{ii}
= -\Big(c_\star-\tfrac12\log n+\delta_i\Big)+\tfrac12\Big(c_\star-\tfrac12\log n+\bar\delta\Big)
= -\tfrac{c_\star}{2}+\tfrac14\log n + \Big(-\delta_i+\tfrac12\bar\delta\Big).
\end{equation}
Exponentiating and factoring out $e^{-c_\star/2}n^{1/4}$ gives
\begin{equation}\label{eq:DL-exp}
(D^{\mathrm{B_L}}_{n})_{ii}=e^{-c_\star/2}\,n^{1/4}\,\exp\!\Big(-\delta_i+\tfrac12\bar\delta\Big)
= e^{-c_\star/2}\,n^{1/4}\,\big(1+\delta^{\mathrm{B_L}}_i\big),
\end{equation}
where we define
\begin{equation}
\delta^{\mathrm{B_L}}_i\coloneq \exp\!\Big(-\delta_i+\tfrac12\bar\delta\Big)-1.
\end{equation}
Since $|-\delta_i+\tfrac12\bar\delta|\le C_1\sqrt{\tfrac{\log n}{n}}$, the elementary bound
\begin{equation}
|e^x-1|\le e^{|x|}-1\le |x|e^{|x|}\qquad (x\in\mathbb{R})
\end{equation}
implies $|\delta^{\mathrm{B_L}}_i|\le C_1\sqrt{\tfrac{\log n}{n}}$ for all $i$ (uniformly on the high-probability event). Define $\Delta^{\mathrm{B_L}}_{n}\coloneq \diag(\delta^{\mathrm{B_L}}_{1},\dots,\delta^{\mathrm{B_L}}_{n})$ so that
\begin{equation}\label{eq:DU-L-bnd}
D^{\mathrm{B_L}}_{n}=e^{-c_\star/2}\,n^{1/4}\,(I+\Delta^{\mathrm{B_L}}_{n}),
\qquad \|\Delta^{\mathrm{B_L}}_{n}\|_{\op}=\max_i|\delta^{\mathrm{B_L}}_i|\ \le\ C_1\sqrt{\tfrac{\log n}{n}}.
\end{equation}
An identical computation (with $n_j,\epsilon_j$) yields
\begin{equation}\label{eq:DU-R-bnd}
D^{\mathrm{B_R}}_{n}=e^{-c_\star/2}\,n^{1/4}\,(I+\Delta^{\mathrm{B_R}}_{n}),
\qquad \|\Delta^{\mathrm{B_R}}_{n}\|_{\op}\ \le\ C_2\sqrt{\tfrac{\log n}{n}}.
\end{equation}
Multiplying out with $A_n$ gives the core factorisation. Set, for brevity,
\begin{equation}
\hat \delta_n\coloneq \hat C\sqrt{\tfrac{\log n}{n}}\qquad\text{so that}\qquad
\|I+\Delta^{\mathrm{B_L}}_{n}\|_{\op},\ \|I+\Delta^{\mathrm{B_R}}_{n}\|_{\op}\ \le\ 1+\hat \delta_n.
\end{equation}
Then for the upper bound
\begin{equation}
\|A^{\mathrm{B}}_n\|_{\op}
\le e^{-c_\star}\sqrt n\,\|I+\Delta^{\mathrm{B_L}}_{n}\|_{\op}\,\|A_n\|_{\op}\,\|I+\Delta^{\mathrm{B_R}}_{n}\|_{\op}
\le e^{-c_\star}\sqrt n\,(1+\hat \delta_n)^2\,\|A_n\|_{\op}.
\end{equation}
By \cref{thm:MP-edge}, $\|A_n\|_{\op}\to 2$ almost surely; enlarging the constant in $\hat \delta_n$ if needed yields
\begin{equation}
\|A^{\mathrm{B}}_n\|_{\op}\ \le\ 2e^{-c_\star}\sqrt n\,\big(1+o(1)\big)
\end{equation}
for all large $n$ on the same event. For the lower bound, using
\begin{equation}
(I+\Delta^{\mathrm{B_L}}_{n})A_n(I+\Delta^{\mathrm{B_R}}_{n})=A_n+\Delta^{\mathrm{B_L}}_{n}A_n+A_n\Delta^{\mathrm{B_R}}_{n}+\Delta^{\mathrm{B_L}}_{n}A_n\Delta^{\mathrm{B_R}}_{n},
\end{equation}
we obtain using reverse triangle inequality and sub-multiplicativity
\begin{equation}
\begin{aligned}
\|(I+\Delta^{\mathrm{B_L}}_{n})A_n(I+\Delta^{\mathrm{B_R}}_{n})\|_{\op}
&\ \ge\ \|A_n\|_{\op}-\|A_n\|_{\op}\big(\|\Delta^{\mathrm{B_L}}_{n}\|_{\op}+\|\Delta^{\mathrm{B_R}}_{n}\|_{\op}\big)\\
&\quad-\|A_n\|_{\op}\|\Delta^{\mathrm{B_L}}_{n}\|_{\op}\|\Delta^{\mathrm{B_R}}_{n}\|_{\op}.
\end{aligned}
\end{equation}
On the high-probability event,
\begin{equation}
\|A^{\mathrm{B}}_n\|_{\op}
\ \ge\ e^{-c_\star}\sqrt n\,\|A_n\|_{\op}\,\Big(1-2\hat \delta_n-\hat \delta_n^2\Big)
\ =\ 2e^{-c_\star}\sqrt n\,\big(1-o(1)\big),
\end{equation}
since $\|A_n\|_{\op}\to 2$ and $\hat \delta_n\to 0$. Together with the upper bound,
\begin{equation}
\frac{\|A^{\mathrm{B}}_n\|_{\op}}{\sqrt n}\ \longrightarrow\ 2e^{-c_\star}\qquad\text{almost surely}.
\end{equation}
For the normalised matrices $T_n\coloneq n^{-1/2}A^{\mathrm{B}}_n$, rewrite as
\begin{equation}\label{eq:97ref}
T_n=e^{-c_\star}\,(I+\Delta^{\mathrm{B_L}}_{n})\,A_n\,(I+\Delta^{\mathrm{B_R}}_{n}).
\end{equation}
Set
\begin{equation}
H_n\coloneq T_nT_n^\dagger,\qquad \widehat H_n\coloneq e^{-2c_\star}A_nA_n^\dagger.
\end{equation}
Define
\be
P_n\coloneq A_n(I+\Delta^{\mathrm{B_R}}_{n}),\qquad 
\widetilde H_n\coloneq e^{-2c_\star}P_nP_n^\dagger,
\ee
so that
\be
H_n=e^{-2c_\star}(I+\Delta^{\mathrm{B_L}}_{n})\,P_nP_n^\dagger\,(I+\Delta^{\mathrm{B_L}}_{n}).
\ee
Applying Lemma~\ref{lem:cov-op} with $D=I+\Delta^{\mathrm{B_L}}_{n}$ and $A=P_n$ gives
\be
\|H_n-\widetilde H_n\|_{\op}\le e^{-2c_\star}\bigl(2\|\Delta^{\mathrm{B_L}}_{n}\|_{\op}+\|\Delta^{\mathrm{B_L}}_{n}\|_{\op}^2\bigr)\,\|P_n\|_{\op}^2.
\ee
Next,
\be
\widetilde H_n-\widehat H_n
=e^{-2c_\star}\Big(P_nP_n^\dagger-A_nA_n^\dagger\Big)
=e^{-2c_\star}A_n\Big((I+\Delta^{\mathrm{B_R}}_{n})(I+\Delta^{\mathrm{B_R}}_{n})^\dagger-I\Big)A_n^\dagger,
\ee
hence
\be
\|\widetilde H_n-\widehat H_n\|_{\op}
\le e^{-2c_\star}\bigl(\|\Delta^{\mathrm{B_R}}_{n}+(\Delta^{\mathrm{B_R}}_{n})^\dagger\|_{\op}+\|\Delta^{\mathrm{B_R}}_{n}\|_{\op}^2\bigr)\,\|A_n\|_{\op}^2.
\ee
Finally, $\|P_n\|_{\op}\le \|A_n\|_{\op}\,\|I+\Delta^{\mathrm{B_R}}_{n}\|_{\op}$, so combining the above yields
\begin{equation}
\begin{aligned}
\|H_n-\widehat H_n\|_{\op}
\;&\le\;
e^{-2c_\star}
\Big(2\|\Delta^{\mathrm{B_L}}_{n}\|_{\op}
 +\|\Delta^{\mathrm{B_L}}_{n}\|_{\op}^2\Big)
\,\|A_n\|_{\op}^2
\,\|I+\Delta^{\mathrm{B_R}}_{n}\|_{\op}^2
\\
&\quad
+e^{-2c_\star}
\Big(\|\Delta^{\mathrm{B_R}}_{n}
 +(\Delta^{\mathrm{B_R}}_{n})^\dagger\|_{\op}
 +\|\Delta^{\mathrm{B_R}}_{n}\|_{\op}^2\Big)
\,\|A_n\|_{\op}^2.
\end{aligned}
\end{equation}
In particular, absorbing quadratic terms into the constant $C$ since $\|\Delta\|_{\op}=o(1)$, we have
\be\label{eq:105ref}
\|H_n-\widehat H_n\|_{\op}
\le e^{-2c_\star}\,C\bigl(\|\Delta^{\mathrm{B_L}}_{n}\|_{\op}+\|\Delta^{\mathrm{B_R}}_{n}\|_{\op}\bigr)\,\|A_n\|_{\op}^2
\ee
for all large $n$ on the same event. On the high-probability event, \cref{eq:DU-L-bnd,eq:DU-R-bnd,eq:A-matrix-opnorm} give
\begin{equation}
\|H_n-\widehat H_n\|_{\op}\ =\ O\!\Big(\sqrt{\tfrac{\log n}{n}}\Big)\ \xrightarrow[n\to\infty]{}\ 0.
\end{equation}
By Lemma~\ref{lem:esd-continuity}, $\mathrm{ESD}(H_n)$ and $\mathrm{ESD}(\widehat H_n)$ have the same limiting law. By \cref{thm:MP-law}, $\mathrm{ESD}(A_nA_n^\dagger)\Rightarrow \mathrm{MP}(1,1)$; therefore $\mathrm{ESD}(\widehat H_n)\Rightarrow \mathrm{MP}(1,1)$ scaled by $e^{-2c_\star}$. Finally, by Lemma~\ref{thm:pushforward}, the empirical measures of the singular values of $T_n$ converge almost surely to the quarter-circle law on $[0,2e^{-c_\star}]$. The failure probabilities are $\le C n^{-10}$; hence $\sum_n \Prob(\text{fail at }n)<\infty$. By the Borel-Cantelli lemma, the high-probability event (and therefore $\|\Delta^{\mathrm{B_L}}_{n}\|_{\op},\|\Delta^{\mathrm{B_R}}_{n}\|_{\op}\to 0$) holds eventually almost surely, which promotes the convergence above to almost sure convergence.
\end{proof}
\begin{proof}[\textbf{Proof of Theorem~\ref{thm:lui-rui-wigner}}]\label{proof-of-lem1}
We show for LUI, with the RUI case being analogous, by taking column norms instead of row norms. Write $r_i\coloneq \|A_n(i,:)\|_2$, and further define $D_n^{\mathrm L}\coloneq \mathrm{Diag}(1/r_1,\dots,1/r_n)$ so that $A_n^{\mathrm L}=D_n^{\mathrm L}A_n$. Let $\delta_n^{\mathrm L}\coloneq \max_i|r_i-1|$. Then
\be
\|D_n^{\mathrm L}-I\|_{\op} = \max_i \bigg| \frac{1}{r_i} - 1 \bigg| = \max_i\frac{|1-r_i|}{r_i}
\le \frac{\delta_n^{\mathrm L}}{\min_i r_i}.
\ee
By Lemma~\ref{lem:wigner-rowcol-norms}, for sufficiently large $n$ we have $\Prob(\delta_n^{\mathrm L}>\tfrac12)\le Cn^{-10}$. Since $\sum_n Cn^{-10}<\infty$, the Borel-Cantelli lemma implies that $\delta_n^{\mathrm L}\le \tfrac12$ for all sufficiently large $n$ almost surely; hence
$\min_i r_i\ge 1-\tfrac12=\tfrac12$ and therefore
\be
\|D_n^{\mathrm L}-I\|_{\op}\le 2\,\delta_n^{\mathrm L}
=O\!\Big(\sqrt{\tfrac{\log n}{n}}\Big).
\ee
By Borel-Cantelli this holds almost surely eventually. Set $\Delta_n^{\mathrm L}\coloneq D_n^{\mathrm L}-I$. Consider 
\begin{equation}
M_n\coloneq (A_n^{\mathrm L})(A_n^{\mathrm L})^\dagger
=D_n^{\mathrm L}\,A_nA_n^\dagger\,D_n^{\mathrm L},\qquad
N_n\coloneq A_nA_n^\dagger.
\end{equation}
By Lemma~\ref{lem:cov-op},
\begin{equation}
\|M_n-N_n\|_{\op}\;\le\;\bigl(2\|\Delta_n^{\mathrm L}\|_{\op}+\|\Delta_n^{\mathrm L}\|_{\op}^2\bigr)\,\|N_n\|_{\op}.
\end{equation}
\cref{thm:wigner-law} implies $\|A_n\|_{\op}\to 2$ almost surely, hence $\|N_n\|_{\op}=\|A_n\|_{\op}^2\to 4$.
Therefore $\|M_n-N_n\|_{\op}\to 0$ almost surely. By Lemma~\ref{lem:esd-continuity} $M_n$ and $N_n$ have the same limiting ESD. Since the non-zero eigenvalues of $N_n$ (resp.\ $M_n$) are the squares of the singular values of $A_n$ (resp.\ $A_n^{\mathrm L}$), the singular-value ESDs of $A_n^{\mathrm L}$ and $A_n$ have the same limit. Finally, by \cref{thm:wigner-law}, the eigenvalue ESD of $A_n$ converges to semicircle on $[-2,2]$, hence its singular-value ESD converges to the quarter-circle on $[0,2]$ (since $A_n$ is Hermitian we have $A_nA_n^\dagger=A_n^2$, and then using Lemma~\ref{thm:pushforward}). 

The right-balanced case $A_n^{\mathrm R}$ is identical, replacing $D_n^{\mathrm L}$ by $D_n^{\mathrm R}$ and taking Euclidean column norms instead of row norms.
\end{proof}
\begin{proof}[\textbf{Proof of Theorem~\ref{thm:ui-wigner}}]\label{proof-of-lem2}
Let $a_{ij}=(A_n)_{ij}$ and define the row, column, and grand log-averages $m_i,n_j,\hat m$ as in \cref{eq:row-col-gra-mean-rep}. By Proposition~\ref{prop:gauge-fully} (symmetric scale),
\begin{equation}\label{eq:gauge-symmetric-log-wigner}
\log(D^{\mathrm{B_L}}_{n})_{ii}=-m_i+\tfrac12\hat{m},\qquad
\log(D^{\mathrm{B_R}}_{n})_{jj}=-n_j+\tfrac12\hat{m}.
\end{equation}
By Lemma~\ref{lem:log-mean-wigner},
\begin{equation}\label{eq:log-mean-wigner-bnd}
\max_i\Big|m_i-\Big(c_\beta-\tfrac12\log n\Big)\Big|,\quad
\max_j\Big|n_j-\Big(c_\beta-\tfrac12\log n\Big)\Big|,\quad
\Big|\hat{m}-\Big(c_\beta-\tfrac12\log n\Big)\Big|
\;=\;O_{\mathrm{a.s.}}\!\Big(\sqrt{\tfrac{\log n}{n}}\Big).
\end{equation}
Define deviations $\delta_i,\epsilon_j,\bar\delta$ exactly as in \cref{eq:83ref} with $c_\star$ replaced by $c_\beta$. Then the same computation as in \cref{eq:repeat-ui-stretch} to \cref{eq:DU-R-bnd} yields diagonal matrices $\Delta^{\mathrm{B_L}}_{n},\Delta^{\mathrm{B_R}}_{n}$ such that
\begin{equation}\label{eq:DU-exp-wigner}
D^{\mathrm{B_L}}_{n}=e^{-c_\beta/2}\,n^{1/4}\,(I+\Delta^{\mathrm{B_L}}_{n}),\qquad
D^{\mathrm{B_R}}_{n}=e^{-c_\beta/2}\,n^{1/4}\,(I+\Delta^{\mathrm{B_R}}_{n}),
\end{equation}
with
\begin{equation}\label{eq:DeltaU-op-wigner}
\|\Delta^{\mathrm{B_L}}_{n}\|_{\op}+\|\Delta^{\mathrm{B_R}}_{n}\|_{\op}
\;=\;O_{\mathrm{a.s.}}\!\Big(\sqrt{\tfrac{\log n}{n}}\Big).
\end{equation}
In particular, $\|\Delta^{\mathrm{B_L}}_{n}\|_{\op},\|\Delta^{\mathrm{B_R}}_{n}\|_{\op}\to 0$ almost surely. Thereafter we define normalised matrices $T_n\coloneq n^{-1/2}A^{\mathrm{B}}_n$ and then follow the steps and definitions from \cref{eq:97ref} to \cref{eq:105ref} exactly, again taking care to substitute $c_\star$ with $c_\beta$, which combined with \cref{thm:wigner-law} yields
\begin{equation}\label{eq:Hhat-pert-wigner}
\|H_n-\widehat H_n\|_{\op}\ \xrightarrow[n\to\infty]{\mathrm{a.s.}}\ 0.
\end{equation}
By Lemma~\ref{lem:esd-continuity}, $\mathrm{ESD}(H_n)$ and $\mathrm{ESD}(\widehat H_n)$ have the same limiting law. By \cref{thm:wigner-law}, the eigenvalue ESD of $A_n$ converges almost surely to the semicircle law on $[-2,2]$; therefore the eigenvalue ESD of $A_nA_n^\dagger$ (in this case $A_n^2$ because Hermitian) converges almost surely to the pushforward of semicircle under $x\mapsto x^2$, which is $\mathrm{MP}(1,1)$ on $[0,4]$. Hence $\mathrm{ESD}(\widehat H_n)$ converges almost surely to $\mathrm{MP}(1,1)$ scaled by $e^{-2c_\beta}$. Finally, by Lemma~\ref{thm:pushforward}, the empirical singular-value measures of $T_n$ converge almost surely to the quarter-circle law on $[0,2e^{-c_\beta}]$. Moreover, taking operator norms in the analogous equation of \cref{eq:97ref} and using $\|A_n\|_{\op}\to 2$ and $\|\Delta^{\mathrm{B_L}}_{n}\|_{\op},\|\Delta^{\mathrm{B_R}}_{n}\|_{\op}\to 0$ yields
\be
\|A_n^{\mathrm B}\|_{\op}=2e^{-c_\beta}\sqrt n\,\big(1+o(1)\big)
\qquad\text{almost surely},
\ee
which implies the claimed stretched support.
\end{proof}
\newpage
\bibliography{Refs}

@article{Ryu:2006bv,
    author = "Ryu, Shinsei and Takayanagi, Tadashi",
    title = "{Holographic derivation of entanglement entropy from AdS/CFT}",
    eprint = "hep-th/0603001",
    archivePrefix = "arXiv",
    reportNumber = "NSF-KITP-06-11, NSF-KITP-06-11",
    doi = "10.1103/PhysRevLett.96.181602",
    journal = "Phys. Rev. Lett.",
    volume = "96",
    pages = "181602",
    year = "2006"
}

@article{VanRaamsdonk:2010pw,
    author = "Van Raamsdonk, Mark",
    title = "{Building up spacetime with quantum entanglement}",
    eprint = "1005.3035",
    archivePrefix = "arXiv",
    primaryClass = "hep-th",
    doi = "10.1142/S0218271810018529",
    journal = "Gen. Rel. Grav.",
    volume = "42",
    pages = "2323--2329",
    year = "2010"
}

@article{Maldacena:1997re,
    author = "Maldacena, Juan Martin",
    title = "{The Large $N$ limit of superconformal field theories and supergravity}",
    eprint = "hep-th/9711200",
    archivePrefix = "arXiv",
    reportNumber = "HUTP-97-A097, HUTP-98-A097",
    doi = "10.4310/ATMP.1998.v2.n2.a1",
    journal = "Adv. Theor. Math. Phys.",
    volume = "2",
    pages = "231--252",
    year = "1998"
}

@article{Nishioka:2021cxe,
    author = "Nishioka, Tatsuma and Takayanagi, Tadashi and Taki, Yusuke",
    title = "{Topological pseudo entropy}",
    eprint = "2107.01797",
    archivePrefix = "arXiv",
    primaryClass = "hep-th",
    reportNumber = "YITP-21-69, IPMU21-0043",
    doi = "10.1007/JHEP09(2021)015",
    journal = "JHEP",
    volume = "09",
    pages = "015",
    year = "2021"
}

@article{Hikida:2021ese,
    author = "Hikida, Yasuaki and Nishioka, Tatsuma and Takayanagi, Tadashi and Taki, Yusuke",
    title = "{Holography in de Sitter Space via Chern-Simons Gauge Theory}",
    eprint = "2110.03197",
    archivePrefix = "arXiv",
    primaryClass = "hep-th",
    reportNumber = "YITP-21-105; IPMU21-0059",
    doi = "10.1103/PhysRevLett.129.041601",
    journal = "Phys. Rev. Lett.",
    volume = "129",
    number = "4",
    pages = "041601",
    year = "2022"
}

@article{Grieninger:2023knz,
    author = "Grieninger, Sebastian and Ikeda, Kazuki and Kharzeev, Dmitri E.",
    title = "{Temporal entanglement entropy as a probe of renormalization group flow}",
    eprint = "2312.08534",
    archivePrefix = "arXiv",
    primaryClass = "hep-th",
    doi = "10.1007/JHEP05(2024)030",
    journal = "JHEP",
    volume = "05",
    pages = "030",
    year = "2024"
}

@article{Nie:2018dfe,
    author = "Nie, Laimei and Nozaki, Masahiro and Ryu, Shinsei and Tan, Mao Tian",
    title = "{Signature of quantum chaos in operator entanglement in 2d CFTs}",
    eprint = "1812.00013",
    archivePrefix = "arXiv",
    primaryClass = "hep-th",
    doi = "10.1088/1742-5468/ab3a29",
    journal = "J. Stat. Mech.",
    volume = "1909",
    number = "9",
    pages = "093107",
    year = "2019"
}

@article{Hosur:2015ylk,
    author = "Hosur, Pavan and Qi, Xiao-Liang and Roberts, Daniel A. and Yoshida, Beni",
    title = "{Chaos in quantum channels}",
    eprint = "1511.04021",
    archivePrefix = "arXiv",
    primaryClass = "hep-th",
    reportNumber = "MIT-CTP-4733, MIT-CTP/4733",
    doi = "10.1007/JHEP02(2016)004",
    journal = "JHEP",
    volume = "02",
    pages = "004",
    year = "2016"
}

@article{Dubail:2016xht,
    author = "Dubail, J.",
    title = "{Entanglement scaling of operators: a conformal field theory approach, with a glimpse of simulability of long-time dynamics in 1  +  1d}",
    eprint = "1612.08630",
    archivePrefix = "arXiv",
    primaryClass = "cond-mat.str-el",
    doi = "10.1088/1751-8121/aa6f38",
    journal = "J. Phys. A",
    volume = "50",
    number = "23",
    pages = "234001",
    year = "2017"
}

@article{Doi:2023zaf,
    author = "Doi, Kazuki and Harper, Jonathan and Mollabashi, Ali and Takayanagi, Tadashi and Taki, Yusuke",
    title = "{Timelike entanglement entropy}",
    eprint = "2302.11695",
    archivePrefix = "arXiv",
    primaryClass = "hep-th",
    reportNumber = "YITP-23-22",
    doi = "10.1007/JHEP05(2023)052",
    journal = "JHEP",
    volume = "05",
    pages = "052",
    year = "2023"
}

@article{Hikida:2022ltr,
    author = "Hikida, Yasuaki and Nishioka, Tatsuma and Takayanagi, Tadashi and Taki, Yusuke",
    title = "{CFT duals of three-dimensional de Sitter gravity}",
    eprint = "2203.02852",
    archivePrefix = "arXiv",
    primaryClass = "hep-th",
    reportNumber = "YITP-22-20, IPMU22-0006",
    doi = "10.1007/JHEP05(2022)129",
    journal = "JHEP",
    volume = "05",
    pages = "129",
    year = "2022"
}

@article{Caputa:2024gve,
    author = "Caputa, Pawel and Chen, Bowen and Takayanagi, Tadashi and Tsuda, Takashi",
    title = "{Thermal pseudo-entropy}",
    eprint = "2411.08948",
    archivePrefix = "arXiv",
    primaryClass = "hep-th",
    reportNumber = "YITP-24-151",
    doi = "10.1007/JHEP01(2025)003",
    journal = "JHEP",
    volume = "01",
    pages = "003",
    year = "2025"
}

@article{Amico:2007ag,
    author = "Amico, Luigi and Fazio, Rosario and Osterloh, Andreas and Vedral, Vlatko",
    title = "{Entanglement in many-body systems}",
    eprint = "quant-ph/0703044",
    archivePrefix = "arXiv",
    doi = "10.1103/RevModPhys.80.517",
    journal = "Rev. Mod. Phys.",
    volume = "80",
    pages = "517--576",
    year = "2008"
}

@article{Takayanagi:2017knl,
    author = "Takayanagi, Tadashi and Umemoto, Koji",
    title = "{Entanglement of purification through holographic duality}",
    eprint = "1708.09393",
    archivePrefix = "arXiv",
    primaryClass = "hep-th",
    reportNumber = "YITP-17-89, IPMU17-0115",
    doi = "10.1038/s41567-018-0075-2",
    journal = "Nature Phys.",
    volume = "14",
    number = "6",
    pages = "573--577",
    year = "2018"
}

@article{Holzhey:1994we,
    author = "Holzhey, Christoph and Larsen, Finn and Wilczek, Frank",
    title = "{Geometric and renormalized entropy in conformal field theory}",
    eprint = "hep-th/9403108",
    archivePrefix = "arXiv",
    reportNumber = "PUPT-1454, IASSNS-HEP-93-88",
    doi = "10.1016/0550-3213(94)90402-2",
    journal = "Nucl. Phys. B",
    volume = "424",
    pages = "443--467",
    year = "1994"
}

@article{Takayanagi:2025ula,
    author = "Takayanagi, Tadashi",
    title = "{Essay: Emergent Holographic Spacetime from Quantum Information}",
    eprint = "2506.06595",
    archivePrefix = "arXiv",
    primaryClass = "hep-th",
    reportNumber = "YITP-25-58",
    doi = "10.1103/pg4r-fy8n",
    journal = "Phys. Rev. Lett.",
    volume = "134",
    number = "24",
    pages = "240001",
    year = "2025"
}

@article{Calabrese:2004eu,
    author = "Calabrese, Pasquale and Cardy, John L.",
    title = "{Entanglement entropy and quantum field theory}",
    eprint = "hep-th/0405152",
    archivePrefix = "arXiv",
    doi = "10.1088/1742-5468/2004/06/P06002",
    journal = "J. Stat. Mech.",
    volume = "0406",
    pages = "P06002",
    year = "2004"
}

@article{Milekhin:2025ycm,
    author = "Milekhin, Alexey and Adamska, Zofia and Preskill, John",
    title = "{Observable and computable entanglement in time}",
    eprint = "2502.12240",
    archivePrefix = "arXiv",
    primaryClass = "quant-ph",
    month = "2",
    year = "2025"
}

@article{Srednicki:1993im,
    author = "Srednicki, Mark",
    title = "{Entropy and area}",
    eprint = "hep-th/9303048",
    archivePrefix = "arXiv",
    reportNumber = "LBL-33754, CFPA-93-02",
    doi = "10.1103/PhysRevLett.71.666",
    journal = "Phys. Rev. Lett.",
    volume = "71",
    pages = "666--669",
    year = "1993"
}

@article{PasqualeCalabrese_2009,
doi = {10.1088/1751-8121/42/50/500301},
url = {https://dx.doi.org/10.1088/1751-8121/42/50/500301},
year = {2009},
month = {dec},
publisher = {},
volume = {42},
number = {50},
pages = {500301},
author = {Pasquale Calabrese and  John Cardy and  Benjamin Doyon},
title = {Entanglement entropy in extended quantum systems},
journal = {Journal of Physics A: Mathematical and Theoretical},
abstract = {}
}

@article{Harper:2025lav,
    author = "Harper, Jonathan and Kawamoto, Taishi and Maeda, Ryota and Nakamura, Nanami and Takayanagi, Tadashi",
    title = "{Non-hermitian Density Matrices from Time-like Entanglement and Wormholes}",
    eprint = "2512.13800",
    archivePrefix = "arXiv",
    primaryClass = "hep-th",
    reportNumber = "YITP-25-186",
    month = "12",
    year = "2025"
}

@article{Schafer-Nameki:2025fiy,
    author = "Schafer-Nameki, Sakura and Tiwari, Apoorv and Warman, Alison and Zhang, Carolyn",
    title = "{SymTFT Approach for Mixed States with Non-Invertible Symmetries}",
    eprint = "2507.05350",
    archivePrefix = "arXiv",
    primaryClass = "quant-ph",
    month = "7",
    year = "2025"
}

@article{Balasubramanian:2018por,
    author = "Balasubramanian, Vijay and DeCross, Matthew and Fliss, Jackson and Kar, Arjun and Leigh, Robert G. and Parrikar, Onkar",
    title = "{Entanglement Entropy and the Colored Jones Polynomial}",
    eprint = "1801.01131",
    archivePrefix = "arXiv",
    primaryClass = "hep-th",
    doi = "10.1007/JHEP05(2018)038",
    journal = "JHEP",
    volume = "05",
    pages = "038",
    year = "2018"
}

@article{Habiro,
    author = "Habiro, Kazuo",
    title = "{On the colored Jones polynomials of some simple links}",
    journal = "RIMS Kokyuroku",
    volume = "1172",
    pages = "34--43",
    year = "2000",
    url = "https://www.kurims.kyoto-u.ac.jp/~kyodo/kokyuroku/contents/pdf/1172-3.pdf"
}

@article{Leigh:2021trp,
    author = "Leigh, Robert G. and Pai, Pin-Chun",
    title = "{Complexity for link complement states in Chern-Simons theory}",
    eprint = "2101.03443",
    archivePrefix = "arXiv",
    primaryClass = "hep-th",
    doi = "10.1103/PhysRevD.104.065005",
    journal = "Phys. Rev. D",
    volume = "104",
    number = "6",
    pages = "065005",
    year = "2021"
}

@article{SO3,
    author = "Dwivedi, Aditya and Dwivedi, Siddharth and Mandal, Bhabani Prasad and Ramadevi, Pichai and Singh, Vivek Kumar",
    title = "{Topological entanglement and hyperbolic volume}",
    eprint = "2106.03396",
    archivePrefix = "arXiv",
    primaryClass = "hep-th",
    reportNumber = "CTP-SCU/2021019",
    doi = "10.1007/JHEP10(2021)172",
    journal = "JHEP",
    volume = "10",
    pages = "172",
    year = "2021"
}

@article{Das:2025fcd,
    author = "Das, Rathindra Nath and Kundu, Arnab and Martins Costa, Matheus H. and Sarkar, Nemai Chandra",
    title = "{Temporal correlations and chaos from spacetime kernel}",
    eprint = "2512.06078",
    archivePrefix = "arXiv",
    primaryClass = "hep-th",
    month = "12",
    year = "2025"
}

@article{Parzygnat:2023avh,
    author = "Parzygnat, Arthur J. and Takayanagi, Tadashi and Taki, Yusuke and Wei, Zixia",
    title = "{SVD entanglement entropy}",
    eprint = "2307.06531",
    archivePrefix = "arXiv",
    primaryClass = "hep-th",
    reportNumber = "YITP-23-87, RIKEN-iTHEMS-Report-23",
    doi = "10.1007/JHEP12(2023)123",
    journal = "JHEP",
    volume = "12",
    pages = "123",
    year = "2023"
}

@article{Mollabashi:2021xsd,
    author = "Mollabashi, Ali and Shiba, Noburo and Takayanagi, Tadashi and Tamaoka, Kotaro and Wei, Zixia",
    title = "{Aspects of pseudoentropy in field theories}",
    eprint = "2106.03118",
    archivePrefix = "arXiv",
    primaryClass = "hep-th",
    reportNumber = "YITP-21-52; IPMU21-0033; MPP-2021-101, YITP-21-52, IPMU21-0033",
    doi = "10.1103/PhysRevResearch.3.033254",
    journal = "Phys. Rev. Res.",
    volume = "3",
    number = "3",
    pages = "033254",
    year = "2021"
}

@article{Doi:2022iyj,
    author = "Doi, Kazuki and Harper, Jonathan and Mollabashi, Ali and Takayanagi, Tadashi and Taki, Yusuke",
    title = "{Pseudoentropy in dS/CFT and Timelike Entanglement Entropy}",
    eprint = "2210.09457",
    archivePrefix = "arXiv",
    primaryClass = "hep-th",
    reportNumber = "YITP-22-121, IMPU22-0052",
    doi = "10.1103/PhysRevLett.130.031601",
    journal = "Phys. Rev. Lett.",
    volume = "130",
    number = "3",
    pages = "031601",
    year = "2023"
}

@article{Nakata:2020luh,
    author = "Nakata, Yoshifumi and Takayanagi, Tadashi and Taki, Yusuke and Tamaoka, Kotaro and Wei, Zixia",
    title = "{New holographic generalization of entanglement entropy}",
    eprint = "2005.13801",
    archivePrefix = "arXiv",
    primaryClass = "hep-th",
    reportNumber = "YITP-20-71, IPMU20-0060",
    doi = "10.1103/PhysRevD.103.026005",
    journal = "Phys. Rev. D",
    volume = "103",
    number = "2",
    pages = "026005",
    year = "2021"
}

@article{Balasubramanian_2017,
	doi = {10.1007/jhep04(2017)061},
	url = {https://doi.org/10.1007/jhep04(2017)061},
	year = 2017,
	month = {apr},
	publisher = {Springer Science and Business Media {LLC}},
	volume = {2017},
	number = {4},
	author = {Vijay Balasubramanian and Jackson R. Fliss and Robert G. Leigh and Onkar Parrikar},
	title = {Multi-boundary entanglement in {C}hern-{S}imons theory and link invariants}, 
	journal = {Journal of High Energy Physics},
        eprint={1611.05460},
        archivePrefix={arXiv},
      primaryClass={hep-th}
}

@book{BaiSilverstein2010,
  title        = {Spectral Analysis of Large Dimensional Random Matrices},
  author       = {Bai, Zhidong and Silverstein, Jack W.},
  series       = {Springer Series in Statistics},
  edition      = {2nd},
  year         = {2010},
  publisher    = {Springer},
  address      = {New York, NY},
  isbn         = {978-1-4419-0660-1},
  doi          = {10.1007/978-1-4419-0661-8},
  url          = {https://link.springer.com/book/10.1007/978-1-4419-0661-8}
}

@article{uhlmann-main2,
  author  = {Uhlmann, Jeffrey},
  title   = {A Generalized Matrix Inverse That Is Consistent with Respect to Diagonal Transformations},
  journal = {SIAM Journal on Matrix Analysis and Applications},
  volume  = {39},
  number  = {2},
  pages   = {781--800},
  year    = {2018},
  doi     = {10.1137/17M113890X},
  url     = {https://doi.org/10.1137/17M113890X}
}

@article{Caputa:2024qkk,
    author = "Caputa, Pawel and Purkayastha, Souradeep and Saha, Abhigyan and Su{\l}kowski, Piotr",
    title = "{Musings on SVD and pseudo entanglement entropies}",
    eprint = "2408.06791",
    archivePrefix = "arXiv",
    primaryClass = "hep-th",
    reportNumber = "CALT-2024-029, YITP-24-97",
    doi = "10.1007/JHEP11(2024)103",
    journal = "JHEP",
    volume = "11",
    pages = "103",
    year = "2024"
}

@article{Edvardsson:2022day,
    author = {Edvardsson, Elisabet and K{\"o}nig, J. Lukas K. and St{\r{a}}lhammar, Marcus},
    title = "{Biorthogonal Renormalization}",
    eprint = "2212.06004",
    archivePrefix = "arXiv",
    primaryClass = "quant-ph",
    month = "12",
    year = "2022"
}

@book{Vershynin2018HDP,
  author    = {Vershynin, Roman},
  title     = {High-Dimensional Probability: An Introduction with Applications in Data Science},
  series    = {Cambridge Series in Statistical and Probabilistic Mathematics},
  edition      = {1st},
  volume    = {47},
  publisher = {Cambridge University Press},
  address   = {Cambridge},
  year      = {2018},
  isbn      = {978-1-108-41519-4},
  doi       = {10.1017/9781108231596},
  url       = {https://eclass.uoa.gr/modules/document/file.php/MATH506/04.%20%CE%92%CE%BF%CE%B7%CE%B8%CE%AE%CE%BC%CE%B1%CF%84%CE%B1/HDP-book.pdf},
  note      = {First edition}
}

@article{ROTHBLUM1992159,
title = {Scalings of matrices satisfying line-product constraints and generalizations},
journal = {Linear Algebra and its Applications},
volume = {175},
pages = {159--175},
year = {1992},
issn = {0024-3795},
doi = {10.1016/0024-3795(92)90307-V},
url = {https://www.sciencedirect.com/science/article/pii/002437959290307V},
author = {Uriel G. Rothblum and Stavros A. Zenios},
}

@article{Mezzadri:2007Haar,
  author         = "Mezzadri, Francesco",
  title          = "How to generate random matrices from the classical compact groups",
  journal        = "Notices Am. Math. Soc.",
  volume         = "54",
  number         = "5",
  pages          = "592--604",
  year           = "2007",
  url            = "https://www.ams.org/notices/200705/fea-mezzadri-web.pdf",
  eprint         = "math-ph/0609050",
  archivePrefix  = "arXiv"
}

@article{Zyczkowski:2001Induced,
  author         = "Zyczkowski, Karol and Sommers, Hans-J{\"u}rgen",
  title          = "Induced measures in the space of mixed quantum states",
  journal        = "J. Phys. A",
  volume         = "34",
  pages          = "7111--7125",
  year           = "2001",
  doi            = "10.1088/0305-4470/34/35/335",
  eprint         = "quant-ph/0012101",
  archivePrefix  = "arXiv"
}

@article{Page:1993AverageEntropy,
  author         = "Page, Don N.",
  title          = "Average entropy of a subsystem",
  journal        = "Phys. Rev. Lett.",
  volume         = "71",
  pages          = "1291--1294",
  year           = "1993",
  doi            = "10.1103/PhysRevLett.71.1291",
  eprint         = "gr-qc/9305007",
  archivePrefix  = "arXiv"
}

@book{Billingsley:Convergence,
  author    = {Patrick Billingsley},
  title     = {Convergence of Probability Measures},
  edition   = {2nd},
  series    = {Wiley Series in Probability and Statistics},
  publisher = {John Wiley \& Sons},
  address   = {New York},
  year      = {1999},
  doi       = {10.1002/9780470316962},
  url       = {https://onlinelibrary.wiley.com/doi/book/10.1002/9780470316962}
}

@book{Tao2012TRMT,
  author    = {Terence Tao},
  title     = {Topics in Random Matrix Theory},
  series    = {Graduate Studies in Mathematics},
  volume    = {132},
  publisher = {American Mathematical Society},
  address   = {Providence, RI},
  year      = {2012},
  doi       = {10.1090/gsm/132},
  url       = {https://bookstore.ams.org/gsm-132}
}

@article{Brody:BQM2014,
  author  = "Brody, Dorje C.",
  title   = "Biorthogonal quantum mechanics",
  journal = "J. Phys. A: Math. Theor.",
  volume  = "47",
  pages   = "035305",
  year    = "2014",
  doi     = "10.1088/1751-8113/47/3/035305"
}

@article{Chen:2025ibe,
    author = "Chen, Zhaohui and Meyer, Rene and Xian, Zhuo-Yu",
    title = "{Entropy Measures for Transition Matrices in Random Systems}",
    eprint = "2508.09261",
    archivePrefix = "arXiv",
    primaryClass = "hep-th",
    month = "8",
    year = "2025"
}

@article{Chang:2019nhn,
  author         = "Chang, Po-Yao and You, Jhih-Shih and Wen, Xueda and Ryu, Shinsei",
  title          = "{Entanglement spectrum and entropy in topological non-Hermitian systems and non-unitary conformal field theories}",
  journal        = "Phys.\ Rev.\ Res.",
  volume         = "2",
  year           = "2020",
  pages          = "033069",
  doi            = "10.1103/PhysRevResearch.2.033069",
  eprint         = "1909.01346",
  archivePrefix  = "arXiv",
  primaryClass   = "cond-mat.str-el"
}

@article{Tu:2021tf,
  author         = "Tu, Yu-Ting and Tzeng, Yu-Chin and Chang, Po-Yao",
  title          = "{R{\'e}nyi entropies and negative central charges in non-Hermitian quantum systems}",
  journal        = "SciPost Phys.",
  volume         = "12",
  year           = "2022",
  pages          = "194",
  doi            = "10.21468/SciPostPhys.12.6.194",
  eprint         = "2107.13006",
  archivePrefix  = "arXiv",
  primaryClass   = "cond-mat.str-el"
}

@article{Yang:2024hzt,
    author = "Yang, Zhenghao and Lu, Chaoze and Lu, Xiancong",
    title = "{Entanglement entropy on generalized Brillouin zone}",
    eprint = "2406.15564",
    archivePrefix = "arXiv",
    primaryClass = "cond-mat.mes-hall",
    doi = "10.1103/PhysRevB.110.235127",
    journal = "Phys. Rev. B",
    volume = "110",
    number = "23",
    pages = "235127",
    year = "2024"
}

@article{Herviou:2019yfb,
    author = {Herviou, Lo{\"\i}c and Regnault, Nicolas and Bardarson, Jens H.},
    title = "{Entanglement spectrum and symmetries in non-Hermitian fermionic non-interacting models}",
    eprint = "1908.09852",
    archivePrefix = "arXiv",
    primaryClass = "cond-mat.mes-hall",
    doi = "10.21468/scipostphys.7.5.069",
    journal = "SciPost Phys.",
    volume = "7",
    number = "5",
    pages = "069",
    year = "2019"
}

@article{Korff_2008,
doi = {10.1088/1751-8113/41/29/295206},
url = {https://doi.org/10.1088/1751-8113/41/29/295206},
year = {2008},
month = {jun},
volume = {41},
number = {29},
pages = {295206},
author = {Korff, Christian},
title = {PT symmetry of the non-Hermitian XX spin-chain: non-local bulk interaction from complex boundary fields},
journal = {Journal of Physics A: Mathematical and Theoretical},
}

@article{Suzuki:2016xx,
  author  = {Suzuki, Hidenori},
  title   = {Transport properties of the {XX} chain in a staggered magnetic field},
  journal = {Transactions of Nihon University School of Dentistry},
  volume  = {44},
  year    = {2016},
  pages   = {5--16},
  doi     = {10.15006/0002002751},
  url     = {https://www2.dent.nihon-u.ac.jp/bulletin/kiyou44/03SUZUKI.pdf},
}

@misc{BordenaveRMTNotes2019,
  author       = {Bordenave, Charles},
  title        = {\emph{Lecture notes on random matrix theory}},
  year         = {2019},
  note         = {January 11, 2019. See Eq.~(3.6) and Lemma~3.4 (Perturbation of resolvent), pp.~35--36.},
  howpublished     = {\url{https://www.math.univ-toulouse.fr/~bordenave/IMPA-RMT.pdf}}
}

@misc{JorgensenLabouriau2012,
  author       = {J{\o}rgensen, Bent and Labouriau, Rodrigo},
  title        = {\emph{Exponential Families and Theoretical Inference}},
  year         = {2012},
  howpublished = {\url{https://pure.au.dk/ws/files/51499534/Mon_52.pdf}},
  note         = {Lecture notes (English translation). Example~1.25},
}

@article{Yang:2024ebm,
    author = "Yang, Pei-Yun and Tzeng, Yu-Chin",
    title = "{Entanglement Hamiltonian and effective temperature of non-Hermitian quantum spin ladders}",
    eprint = "2409.17062",
    archivePrefix = "arXiv",
    primaryClass = "quant-ph",
    doi = "10.21468/SciPostPhysCore.7.4.074",
    journal = "SciPost Phys. Core",
    volume = "7",
    pages = "074",
    year = "2024"
}

@article{Lu:2025myv,
    author       = "Lu, Hsueh-Hao and Chang, Po-Yao",
    title        = "{Biorthogonal quench dynamics of entanglement and quantum geometry in PT-symmetric non-Hermitian systems}",
    eprint       = "2507.20155",
    archivePrefix= "arXiv",
    primaryClass = "cond-mat.str-el",
    month        = "7",
    year         = "2025"
}

@article{Hall:2025vortexknots,
    author = "Hall, Darian and Tai, Jung-Shen Benny and Kauffman, Louis H. and Smalyukh, Ivan I.",
    title = "{Fusion and fission of particle-like chiral nematic vortex knots}",
    doi = "10.1038/s41567-025-03107-0",
    journal = "Nature Physics",
    volume = "22",
    pages = "103--111",
    year = "2026",
    note = "Published online 15 December 2025"
}

@misc{DLMF,
  key          = "{\relax DLMF}",
  title        = "{\it NIST Digital Library of Mathematical Functions}",
  howpublished = "\url{https://dlmf.nist.gov/}, Release 1.2.5 of 2025-12-15 (accessed 2025-12-23)",
  url          = "https://dlmf.nist.gov/",
  note         = "F.~W.~J. Olver, A.~B. {Olde Daalhuis}, D.~W. Lozier, B.~I. Schneider,
                R.~F. Boisvert, C.~W. Clark, B.~R. Miller, B.~V. Saunders,
                H.~S. Cohl, and M.~A. McClain, eds."
}

@article{Cappellini_2009,
doi = {10.1088/1751-8113/42/36/365209},
year = {2009},
month = {aug},
volume = {42},
number = {36},
pages = {365209},
author = {Cappellini, Valerio and Sommers, Hans-J{\"u}rgen and Bruzda, Wojciech and {\.Z}yczkowski, Karol},
title = {Random bistochastic matrices},
journal = {Journal of Physics A: Mathematical and Theoretical},
}

@article{PhysRevA.87.032308,
    author = "Roga, Wojciech and Pucha{\l}a, Zbigniew and Rudnicki, {\L}ukasz and {\.Z}yczkowski, Karol",
    title = "{Entropic trade-off relations for quantum operations}",
    eprint = "1206.2536",
    archivePrefix = "arXiv",
    primaryClass = "quant-ph",
    doi = "10.1103/PhysRevA.87.032308",
    journal = "Phys. Rev. A",
    volume = "87",
    pages = "032308",
    year = "2013"
}

@article{Lucas2015HydrodynamicTransport,
  title   = {Hydrodynamic transport in strongly coupled disordered quantum field theories},
  author  = {Lucas, Andrew},
  journal = {New Journal of Physics},
  volume  = {17},
  number  = {11},
  pages   = {113007},
  year    = {2015},
  doi     = {10.1088/1367-2630/17/11/113007},
  eprint  = {1506.02662},
  archivePrefix = {arXiv}
}

@article{BrandnerSeifert2013Multiterminal,
  title   = {Multi-terminal thermoelectric transport in a magnetic field: bounds on Onsager coefficients and efficiency},
  author  = {Brandner, Kay and Seifert, Udo},
  journal = {New Journal of Physics},
  volume  = {15},
  number  = {10},
  pages   = {105003},
  year    = {2013},
  doi     = {10.1088/1367-2630/15/10/105003},
  eprint  = {1308.2179},
  archivePrefix = {arXiv}
}

@article{JacquodWhitneyMeairButtiker2012Tenfold,
  title   = {Onsager Relations in Coupled Electric, Thermoelectric, and Spin Transport: The Tenfold Way},
  author  = {Jacquod, Philippe and Whitney, Robert S. and Meair, Johannes and B{\"u}ttiker, Markus},
  journal = {Physical Review B},
  volume  = {86},
  pages   = {155118},
  year    = {2012},
  doi     = {10.1103/PhysRevB.86.155118},
  eprint  = {1207.1629},
  archivePrefix = {arXiv}
}

@article{LucasDavisonSachdev2016WeylHydro,
  title   = {Hydrodynamic theory of thermoelectric transport and negative magnetoresistance in Weyl semimetals},
  author  = {Lucas, Andrew and Davison, Richard A. and Sachdev, Subir},
  journal = {Proceedings of the National Academy of Sciences},
  volume  = {113},
  pages   = {9463--9468},
  year    = {2016},
  doi     = {10.1073/pnas.1608881113},
  eprint  = {1604.08598},
  archivePrefix = {arXiv}
}

@article{LevinWen2004StringNet,
    author = "Levin, Michael A. and Wen, Xiao-Gang",
    title = "{String net condensation: A Physical mechanism for topological phases}",
    eprint = "cond-mat/0404617",
    archivePrefix = "arXiv",
    doi = "10.1103/PhysRevB.71.045110",
    journal = "Phys. Rev. B",
    volume = "71",
    pages = "045110",
    year = "2005"
}

@article{LanWen2013BulkEdgeStringNet,
    author = "Lan, Tian and Wen, Xiao-Gang",
    title = "{Topological quasiparticles and the holographic bulk-edge relation in (2+1) -dimensional string-net models}",
    eprint = "1311.1784",
    archivePrefix = "arXiv",
    primaryClass = "cond-mat.str-el",
    doi = "10.1103/PhysRevB.90.115119",
    journal = "Phys. Rev. B",
    volume = "90",
    number = "11",
    pages = "115119",
    year = "2014"
}

@article{FrancuzLootensVerstraeteDziarmaga2021VariationalMPO,
  title   = {Variational methods for characterizing matrix product operator symmetries},
  author  = {Francuz, Anna and Lootens, Laurens and Verstraete, Frank and Dziarmaga, Jacek},
  journal = {Physical Review B},
  volume  = {104},
  pages   = {195152},
  year    = {2021},
  doi     = {10.1103/PhysRevB.104.195152},
  eprint  = {2107.05265},
  archivePrefix = {arXiv}
}

@article{DetmoldEndres2015SignalNoise,
  author = "Detmold, William  and  Endres, Michael",
  title = "{Signal/noise optimization strategies for stochastically estimated correlation functions}",
  doi = "10.22323/1.214.0170",
  journal = "PoS",
  year = 2015,
  volume = "LATTICE2014",
  pages = "170",
  eprint    = {1409.5667},
  archivePrefix = {arXiv}
}

@article{BlossierDellaMorteVonHippelMendesSommer2009GEVP,
    author = "Blossier, Benoit and Della Morte, Michele and von Hippel, Georg and Mendes, Tereza and Sommer, Rainer",
    title = "{On the generalized eigenvalue method for energies and matrix elements in lattice field theory}",
    eprint = "0902.1265",
    archivePrefix = "arXiv",
    primaryClass = "hep-lat",
    reportNumber = "DESY-09-014, SFB-CPP-09-10, MKPH-T-09-01, LPT-ORSAY-09-05",
    doi = "10.1088/1126-6708/2009/04/094",
    journal = "JHEP",
    volume = "04",
    pages = "094",
    year = "2009"
}

@article{Smirnov2011RIME2,
  title   = {Revisiting the radio interferometer measurement equation. II. Calibration and direction-dependent effects},
  author  = {Smirnov, Oleg M},
  journal = {Astronomy \& Astrophysics},
  volume  = {527},
  pages   = {A107},
  year    = {2011},
  doi     = {10.1051/0004-6361/201116434},
  eprint  = {1101.1765},
  archivePrefix = {arXiv}
}

@article{Idel2016MatrixScaling,
  title   = {A review of matrix scaling and Sinkhorn's normal form for matrices and positive maps},
  author  = {Idel, Martin},
  year    = {2016},
  eprint  = {1609.06349},
  archivePrefix = {arXiv}
}
\bibliographystyle{JHEP}
\end{document}